\numberwithin{equation}{section}
\theoremstyle{plain}
\newtheorem{theorem}{Theorem}
\numberwithin{theorem}{section}
\newtheorem{lemma}[theorem]{Lemma}          
\newtheorem{proposition}[theorem]{Proposition}
\newtheorem{corollary}[theorem]{Corollary}
\theoremstyle{definition}
\newtheorem{definition}[theorem]{Definition}
\newtheorem{remark}[theorem]{Remark}
\newtheorem{assumption}[theorem]{Assumption}
\renewcommand{\(}{\left(}
\renewcommand{\)}{\right)}
\newcommand\Eb{\mathds{E}}
\newcommand\Pb{\mathds{P}}
\newcommand\Rb{\mathds{R}}
\newcommand\Nb{\mathds{N}}
\newcommand\Ac{\mathscr{A}}
\newcommand\Bc{\mathscr{B}}
\newcommand\Cc{\mathscr{C}}
\newcommand\Fc{\mathscr{F}}
\newcommand\Gc{\mathscr{G}}
\newcommand\Hc{\mathscr{H}}
\newcommand\Lc{\mathscr{L}}
\newcommand\Oc{\mathscr{O}}
\newcommand\Pc{\mathscr{P}}
\newcommand\Xc{\mathscr{X}}
\newcommand\Yc{\mathscr{Y}}
\newcommand\eps{\varepsilon}
\newcommand\Om{\Omega}
\newcommand\sig{\sigma}
\newcommand\gam{\gamma}
\newcommand\Gam{\Gamma}
\newcommand\lam{\lambda}
\newcommand\del{\delta}
\newcommand\xb{\bar{x}}
\newcommand\yb{\bar{y}}
\newcommand\ub{\bar{u}}
\newcommand\qb{\bar{q}}
\newcommand\IIb{\bar{\II}}
\newcommand{\psib}{\bar{\psi}}
\newcommand{\etab}{\bar{\eta}}
\newcommand{\xib}{\bar{\xi}}
\newcommand\Cv{\mathbf{C}}
\newcommand\mv{\mathbf{m}}
\newcommand\Pbh{\widehat{\Pb}}
\newcommand\Ebh{\widehat{\Eb}}
\newcommand\Bh{\widehat{B}}
\newcommand\Ach{\widehat{\Ac}}
\newcommand\Ebt{\widetilde{\Eb}}
\newcommand\Pbt{\widetilde{\Pb}}
\newcommand\Act{\widetilde{\Ac}}
\newcommand\Gct{\widetilde{\Gc}}
\newcommand\Bt{\widetilde{B}}
\renewcommand\d{\partial}
\newcommand\II{\mathtt{I}}
\newcommand\dd{\mathrm{d}}
\newcommand\ee{\mathrm{e}}
\newcommand\BS{\textrm{BS}}
\newcommand\Ind{\textrm{Nonlin}}
\newcommand{\norm}[1]{\left\|#1\right\|}
\begin{document}


\title{Indifference prices and implied volatilities}

\author{
Matthew Lorig
\thanks{Department of Applied Mathematics, University of Washington, Seattle, WA, USA.}
}

\date{This version: \today}

\maketitle

\begin{abstract}
We consider a general local-stochastic volatility model and an investor with exponential utility.  For a European-style contingent claim, whose payoff may depend on either a traded or non-traded asset, we derive an explicit approximation for both the buyer's and seller's indifference price.  For European calls on a traded asset, we translate indifference prices into an explicit approximation of the buyer's and seller's implied volatility surface.  For European claims on a non-traded asset, we establish rigorous error bounds for the indifference price approximation.   
Finally, we implement our indifference price and implied volatility 
approximations in two examples.
\end{abstract}

\noindent
\textbf{Key words}: Indifference pricing, implied volatility, PDE asymptotics, local-stochastic volatility, Heston

%
%

\section{Introduction}
\label{sec:intro}
When markets are incomplete, there exist infinitely many equivalent martingale measures, under which options could feasibly be priced.  As such, for a given derivative asset, there is a range of possible no-arbitrage values.  The concept of \emph{indifference pricing}, first introduced in \cite{hodges-neuberger-1989}, provides an economically justified manner for selecting unique no-arbitrage prices in incomplete market settings.  Additionally, indifference pricing leads naturally to a bid-ask spread, an empirically observed phenomenon whereby sellers of a derivative asset ask for a higher price than buyer's are willing to pay for it.
For an overview of indifference pricing methods and applications we refer the reader to \cite{carmona2009indifference}.
\par
Despite the desirable features mentioned above, the widespread use of indifference pricing methods has been hampered by the computational complexity of this problem.  Typically, to compute indifference prices, one must find an explicit expression for the value function of an investor who, in an incomplete market setting, seeks to maximize expected utility.  The Hamilton-Jacobi-Bellman partial differential equation (HJB PDE) associated with the value function is nonlinear and rarely has an explicit solution.  
Complicating the problem is the fact that the investor's terminal wealth depends both on his trading strategy and on the (random) payoff of the derivative asset to be priced.
Because it is rarely possible to solve the HJB PDE explicitly, closed-form formulas for indifference prices are typically not available.
Cases in which explicit indifference prices are available are often
limited to options written on non-traded assets.  We mention, in particular, 
\cite{carmona-ludkovski-2006}, \cite{grasselli-hurd-2007}, \cite{henderson-2002}, \cite{leung-ludkovski-2012} and \cite{musiela2004example}.  In every one of these papers, a linearization technique, introduced by \cite{zariphopoulou2001}, reduces the nonlinear HJB PDE to a linear, parabolic Cauchy problem, which can then be solved explicitly.  The linearization method does not work when the claim is written on a traded asset, such as a stock or market index, nor does it work when volatility has a local component, as it would in the well-known SABR model of \cite{sabr}.
\par
For options on traded assets, \cite{sircar2004bounds} find bounds for indifference prices.  Additionally, they obtain asymptotic approximations for indifference prices in a fast mean-reverting volatility setting.  However, the pricing approximation is taken only to first order and, as a result, does not capture the bid-ask spread property of the indifference pricing mechanism.  More recently, \cite{sircar-sturm-2012} use indifference pricing methods to derive a small-time characterization of implied volatility in a stochastic volatility (SV) setting.
And \cite{kumar2015effect} derives asymptotics for indifference prices and implied volatilities in a fast mean-reverting SV setting.
\par
The main contributions of the present manuscript are two-fold.
(1) In a general local-stochastic volatility (LSV) setting, we derive an explicit approximation for the indifference price of a European-style asset, whose payoff may depend on either a traded or non-traded asset.  For options on a non-traded asset, we are able to establish rigorous error bounds for our indifference pricing approximation. 
(2) For call options on a traded asset, we translate our indifference pricing approximation into an explicit approximation of implied volatility.  In doing so, we obtain both buyer's and seller's implied volatility surfaces.
\par
The main mathematical tool we shall employ to accomplish the above tasks is a Taylor series expansion method, which was first developed in \cite{pagliarani2011analytical} to solve linear pricing PDEs in a scalar diffusion setting.  This method was later extended in \cite{lorig-pagliarani-pascucci-2} and \cite{lorig-pagliarani-pascucci-4} to include more general polynomial expansions and to price options on multidimensional diffusions.  Note, however, that these three papers do not attempt to solve nonlinear PDEs as we do here.
\par
The rest of this paper proceeds as follows:
in Section \ref{sec:merton} we introduce a general class of LSV models, define the indifference price for a European-style claim, and show how the indifference price is related to the solution of a pair of coupled nonlinear PDEs.
In Section \ref{sec:asymptotics} we develop an explicit asymptotic approximation for European calls written on a traded asset.
We then translate the indifference price approximation for calls into an explicit approximation of implied volatility.
In Section \ref{sec:Y} we consider options on a non-traded asset.  We derive explicit approximations for the indifference price and implied Share ratio and establish rigorous error bounds for the former.
Lastly, in Section \ref{sec:examples} we implement our approximation methods in two examples.

%
%

\section{Model and problem formulation}
\label{sec:merton}
Let $(\Omega,\Fc,(\Fc_t)_{t \geq 0},\Pb)$ be a complete filtered probability space.  The probability measure $\Pb$ represents the physical (i.e., observable) probability measure and the filtration $(\Fc_t)_{t \geq 0}$ represents the history of the market.  For simplicity, throughout this paper, we assume a frictionless market, zero interest rates and no dividends.
\par
We consider a single risky asset $S$ whose dynamics under $\Pb$ are described by the following two-dimensional stochastic differential equation (SDE):
\begin{align}
\begin{aligned}
S_t
	&=	\exp \big( {X_t} \big) , \\
\dd X_t
	&=	\Big( \mu(X_t,Y_t) - \tfrac{1}{2}\sig^2(X_t,Y_t) \Big)  \dd t + \sig(X_t,Y_t)  \dd B_t^X , \\
\dd Y_t
	&=	c(X_t,Y_t) \dd t + \beta(X_t,Y_t) \Big( \rho  \dd B_t^X + \sqrt{1- \rho^2}  \dd B_t^Y \Big) , 
\end{aligned} \label{eq:physical}
\end{align}
where $B^X$ and $B^Y$ are independent Brownian motions under $\Pb$ and $\rho \in [-1,1]$.
We have in mind that $Y$ is the driver of volatility in a local-stochastic volatility (LSV) setting.  However, more generally, $Y$ could represent any \emph{non-traded} quantity that is observable.  We assume that the coefficients $(\mu,\sig,c,\beta)$ are such that SDE \eqref{eq:physical} admits a unique strong solution.
\par
Let $W$ denote the wealth process of an investor who invests $\pi_t$ units of currency in $S$ at time $t$ and invests $(W_t-\pi_t)$ units of currency in a bond. 
Assuming the investment strategy is self-financing, the wealth process $W$ satisfies
\begin{align}
\dd W_t
	&=	\frac{\pi_t}{S_t} \dd S_t
	=		\pi_t \mu(X_t,Y_t) \dd t + \pi_t \sig(X_t,Y_t) \dd B_t^X , \label{eq:dW}
\end{align}
where we have used the fact that the risk-free rate of interest is zero.
\par
Assume now that the investor has an initial wealth $W_t = w$ at time $t$ and also owns $\nu$ European-style contingent claims, each with payoff $\varphi(X_T,Y_T)$.  Here $\nu$ may be negative, indicating that the investor has sold $|\nu|$ claims.  The investor then trades the underlying stock and bond so as to maximize his expected terminal utility $\Eb \, U(W_T + \nu \varphi(X_T,Y_T))$, where $U$ is the investor's \emph{utility function}.  Note that the argument of $U$ includes both the wealth $W_T$ the investor obtains from trading and the payoff he receives from the European option.
\begin{definition}
We define the \emph{investor's value function} $V$ as
\begin{align}
V(t,x,y,w,\nu)
	&:=	\sup_{\pi \in \Pi} \Eb_{t,x,y,w}  U( W_T + \nu  \varphi(X_T,Y_T) ) , \label{eq:v.def}
\end{align}
where $\Pi$, the set of \emph{admissible strategies}, is given by
\begin{align}
\Pi 
	&:= \Big\{ \pi : \Eb_{0,x,y,w} \int_0^T \pi_t^2 \sig^2 (X_t,Y_t) \dd t < \infty \Big\} .	
\end{align}
\end{definition}
\begin{definition}
\label{def:indiff.price}
We define the \emph{indifference price per claim} for $\nu$ European options as the unique solution $u \equiv u(t,x,y,w,\nu)$ of the equation
\begin{align}
V(t,x,y,w,0)
	&=	V(t,x,y,w-\nu u,\nu) . \label{eq:indiff.price} 
\end{align}
When $\nu$ is positive, we will refer to $u$ as the \emph{buyer's indifference price}.  When $\nu$ is negative, we will refer to $u$ as the \emph{seller's} or \emph{writer's indifference price}.
\end{definition}
\noindent
The meaning of the indifference price per unit claim should be clear from the definition.  
Given two choices:
(i) purchase $\nu$ claims for a price $u$ per claim and dynamically invest the remaining wealth in the stock $S$ and a bond, and
(ii) dynamically invest all wealth in the stock $S$ and a bond,
the investor would be indifferent, because he could achieve the same expected utility in both scenarios.
\par
To find the indifference price $u$, which is our main objective, we must first find the value function $V$.  
The Hamilton-Jacobi-Bellman partial differential equation (HJB PDE) associated with control problem \eqref{eq:v.def} is
\begin{align}
(\d_t + \Ac) V + \max_{\pi \in \Rb} \Ac^\pi V
	&=	0 ,  &
V(T,x,y,w,\nu)
	&=	U(w + \nu \varphi(x,y)) , \label{eq:hjb.pde}
\end{align}
where $\(\Ac + \Ac^\pi\)$ is the generator of $(X,Y,W)$ assuming a Markov investment strategy $\pi_t = \pi(t,X_t,Y_t,W_t)$.  Specifically, the operators $\Ac$ and $\Ac^\pi$ are given by
\begin{align}
\Ac
	&=	\( \mu(x,y) -\tfrac{1}{2} \sig^2(x,y) \) \d_x + \tfrac{1}{2} \sig^2(x,y) \d_x^2
			+ c(x,y) \d_y + \tfrac{1}{2} \beta^2(x,y) \d_y^2 + \rho \sig(x,y) \beta(x,y) \d_x \d_y, \\
\Ac^\pi
	&=	\pi(t,x,y,w) \mu(x,y) \d_w + \frac{1}{2} \pi^2(t,x,y,w) \sig^2(x,y) \d_w^2 \\ &\quad
			+ \pi(t,x,y,w) \rho \sig(x,y) \beta(x,y) \d_y \d_w 
			+ \pi(t,x,y,w) \sig^2(x,y) \d_x \d_w . \label{eq:A.pi}
\end{align}
Here, and throughout this manuscript, we use the notation $\d_x^n := \tfrac{\d^n}{\d x^n}$ and likewise for other partial derivatives.
We assume that the HJB-PDE \eqref{eq:hjb.pde} admits a unique classical solution, which coincides with the value function \eqref{eq:v.def}.
In general, uniqueness only holds within a class of functions satisfying certain growth conditions, which inevitably depend on the utility function $U$ and the option payoff $\varphi$.
The candidate optimal strategy $\pi^*$ is given by maximizing $\Ac^\pi V$.  We have
\begin{align}
\pi^*
	&=	\operatorname*{argmax}_{\pi \in \Rb} \Ac^\pi V
	=	-  \( \frac{ \mu \, (\d_w V) + \rho \beta \sig \, (\d_y \d_w V) + \sig^2 \, (\d_x \d_w V)}{\sig^2 \, \d_w^2 V} \)  . \label{eq:pi}
\end{align}
where, for simplicity (and from now on), we have omitted the arguments $(t,x,y,w)$.
Inserting the optimal strategy $\pi^*$ into the HJB-PDE \eqref{eq:hjb.pde} yields
\begin{align}
0
	&=	\( \d_t + \Ac \) V + \Hc (V) , &
V(T,x,y,w,\nu)
	&=	U\( w + \nu  \varphi(x,y) \) , \label{eq:hjb.3} 
\end{align}
where the \emph{Hamiltonian} $\Hc(V)$ is a nonlinear term, which is given by
\begin{align}
\Hc (V) 	
	&=	- \tfrac{1}{2}\lam^2 \frac{(\d_w V)^2}{\d_w^2 V}
			- \rho \beta \lam \frac{(\d_w V)(\d_y \d_w V)}{\d_w^2 V}
			- \tfrac{1}{2} \rho^2 \beta^2 \frac{(\d_y \d_w V)^2}{\d_w^2 V} 
			\\ &\quad
			- \mu \frac{(\d_w V)(\d_x \d_w V)}{\d_w^2 V}
			- \rho \sig \beta \frac{(\d_x \d_w V)(\d_y \d_w V)}{\d_w^2 V}
			- \tfrac{1}{2} \sig^2 \frac{(\d_x \d_w V)^2}{\d_w^2 V} , & 
\lam
	&=	\frac{\mu}{\sig} . 
\end{align}
Note that we have introduced $\lam$, the \emph{Sharpe ratio}.
\begin{remark}[On notation]
While it is standard in stochastic control literature to use subscripts to indicate partial derivatives (e.g., $V_x := \d_x V$), it is standard to perturbation theory to use subscripts to indicate the order of a given term in powers of some small parameter.  For example, if $\eps$ is a the small parameter of interest, then terms that are of order $\Oc(\eps^n)$ carry a subscript $n$.  In this paper, we follow the standard in perturbation literature and use subscripts to keep track of the order of terms in powers of a small parameter $\eps$, which we shall introduce in the next section.  The advantage of this approach is that each equation in the asymptotic analysis that follows can be easily checked for consistency by summing the subscripts of a given term and checking that the total is equal to the order of a given equation.  For example, an equation of order $\Oc(\eps^3)$ may contain terms such as $\d_x V_3$ and $(\d_x V_2)(\d_y V_1)$, but it will not contain terms such as $V_4$ or $(\d_y V_2) (\d_x V_2)$.
\end{remark}
\noindent
To proceed further, we must assume a specific form for the investor's utility function $U$.
\begin{assumption}
We shall assume throughout this manuscript that the investor's utility function $U$ is \emph{exponential}
\begin{align}
U(w)
	&=	-\frac{1}{\gam}\ee^{-\gam w} , &
\gam
	&>	0 , \label{eq:U.exp}
\end{align}
where $\gam$ is known as the \emph{risk aversion parameter}.
\end{assumption}
\noindent
Returning to PDE \eqref{eq:hjb.3} we now consider two cases: $\nu=0$ and $\nu \neq 0$.  We make the following ansatz:
\begin{align}
V(t,x,y,w,0)
	&=	-\frac{1}{\gam} \exp \Big( -\gam w + \eta(t,x,y) \Big) , \label{eq:ansatz.0} \\
V(t,x,y,w,\nu)
	&=	-\frac{1}{\gam} \exp \Big( -\gam w + \psi(t,x,y,\nu) \Big) . \label{eq:ansatz}
\end{align}
Clearly, $\eta$ can always be obtained from $\psi$ by setting $\nu=0$.  However, it will be useful to consider the special case $\nu=0$ separately.  Inserting \eqref{eq:ansatz.0} and \eqref{eq:ansatz} into \eqref{eq:hjb.3} we find the functions $\eta$ and $\psi$ satisfy
\begin{align}
0
	&=	\( \d_t + \Act \) \eta + \Bc (\eta) , &
\eta(T,x,y)
	&=	0 , \label{eq:hjb.5a} \\
0
	&=	\( \d_t + \Act \) \psi + \Bc (\psi) , &
\psi(T,x,y,\nu)
	&=	-\gam \nu \varphi(x,y) , \label{eq:hjb.5}  
\end{align}
where the linear operator $\Act$ and the nonlinear operator $\Bc$ are given by
\begin{align}
\Act
	&=	\tfrac{1}{2} \sig^2 \( \d_x^2 - \d_x \)
			+ \( c - \rho \beta \lam \) \d_y + \tfrac{1}{2} \beta^2 \d_y^2 + \rho \sig \beta \d_x \d_y  , &
\Bc(\psi)	
	&=	(1- \rho^2) (\tfrac{1}{2} \beta^2) (\d_y \psi)^2 - \tfrac{1}{2}\lam^2 .
\end{align}
From Definition \ref{def:indiff.price} and equations \eqref{eq:ansatz.0} and \eqref{eq:ansatz} we observe that the indifference price will be given by
\begin{align}
u(t,x,y,\nu)
	&=	\frac{1}{\gam \nu} \Big( \eta(t,x,y) - \psi(t,x,y,\nu) \Big) , \label{eq:u=eta.phi}
\end{align}
where we have removed the argument $w$ from the function $u$ as it is now clear this variable plays no role.  
One approach to finding the indifference price $u$ is to solve \eqref{eq:hjb.5a} and \eqref{eq:hjb.5} and then insert the solutions into \eqref{eq:u=eta.phi}.  Alternatively, one can seek a PDE for $u$ directly; both approaches will turn out to be useful.  Subtracting equation \eqref{eq:hjb.5} from \eqref{eq:hjb.5a} and dividing by $\gam \nu$, it is easy to show that the function $u$ satisfies
\begin{align}
0	
	&=	\( \d_t + \Act \) u + \Cc(u,\eta) , &
u(T,x,y,\nu)
	&=	\varphi(x,y) , \label{eq:hjb.5b} \\
\Cc(u,\eta)
	&:=	(1- \rho^2) (\tfrac{1}{2} \beta^2) \Big( 2 (\d_y u) (\d_y \eta ) - \gam \nu (\d_y u )^2 \Big) . 
\end{align}
Note that the PDE for $u$ depends on $\eta$, the solution of \eqref{eq:hjb.5a}.  Thus, to find and expression for $u$ we must solve a system of coupled nonlinear PDEs.
Note also that, upon finding expressions for $u$ and $\eta$, we can obtain $\psi$ from \eqref{eq:u=eta.phi}.
\par
Let us take a moment to discuss some probabilistic interpretations of the above PDEs.
\begin{remark}[On the minimal martingale measure]
\label{rmk:mmm}
Observe that $\Act$ is the generator of a process $(X,Y)$ whose dynamics under a probability measure $\Pbt$ are described by the following SDE
\begin{align}
\begin{aligned}
\dd X_t
	&=	-\tfrac{1}{2} \sig^2(X_t,Y_t) \dd t + \sig(X_t,Y_t) \dd \Bt_t^X , \\
\dd Y_t
	&=	\Big( c(X_t,Y_t) - \rho \beta(X_t,Y_t) \lam(X_t,Y_t) \Big) \dd t + \beta(X_t,Y_t) 
			\Big( \rho  \dd \Bt_t^X + \sqrt{1-\rho^2}  \dd \Bt_t^Y \Big) , 
\end{aligned} \label{eq:risk.neutral}
\end{align}
where $\Bt^X$ and $\Bt^Y$ are independent Brownian motions under $\Pbt$.  Dynamics \eqref{eq:risk.neutral} can be obtained from \eqref{eq:physical} under a Girsanov change of measure
\begin{align}
\frac{\dd \Pbt}{\dd \Pb} \Big|_{\Fc_t}
	&=	\exp \( - \frac{1}{2} \int_0^t \lam^2(X_s,Y_s) \dd s - \int_0^t \lam(X_s,Y_s) \dd B_s^X \) =: \widetilde{Z}_t , \label{eq:girsanov}
\end{align}
where we assume
\begin{align}
\Eb \int_0^T \lam^2(X_s,Y_s) \widetilde{Z}_s^2 \dd s
	&< \infty . \label{eq:integrability1}
\end{align}
Thus, $\Pbt$ is the \emph{minimal martingale} measure, as defined in \cite{follmer1991hedging}.
\end{remark}
\begin{remark}
Note that PDE \eqref{eq:hjb.5b} can alternatively be written as
\begin{align}
0	
	&=	( \d_t + \Ach ) u  , &
u(T,x,y,n)
	&=	\varphi(x) , \label{eq:u.alt}
\end{align}
where we have introduced linear operator $\Ach$ by
\begin{align}
\Ach
	&:=	\Act + \sqrt{1-\rho^2} \beta \Om  \d_y , &
\Om
	&:=	\sqrt{1-\rho^2} (\tfrac{1}{2} \beta) (\d_y \eta + \d_y \psi ) .
\end{align}
Observe that $\Ach$ is the generator of a process $(X,Y)$ whose dynamics under a probability measure $\Pbh$ are described by the following SDE
\begin{align}
\begin{aligned}
\dd X_t
	&=	-\tfrac{1}{2} \sig(X_t,Y_t) \dd t + \sig(X_t,Y_t) \dd \Bh_t^X , \\
\dd Y_t
	&=	\Big( c(X_t,Y_t) - \rho \beta(X_t,Y_t) \lam(X_t,Y_t) + \sqrt{1-\rho^2} \beta(X_t,Y_t) \Om(t,X_t,Y_t) \Big) \dd t \\ &\quad
			+ \beta(X_t,Y_t)  \Big( \rho  \dd \Bh_t^X + \sqrt{1-\rho^2}  \dd \Bh_t^Y \Big) , 
\end{aligned} \label{eq:risk.neutral.2}
\end{align}
where $\Bh^X$ and $\Bh^Y$ are independent Brownian motions under $\Pbh$.  Dynamics \eqref{eq:risk.neutral.2} can be obtained from \eqref{eq:physical} under a Girsanov change of measure
\begin{align}
\frac{\dd \Pbh}{\dd \Pb} \Big|_{\Fc_t}
	&=	\exp \( - \frac{1}{2} \int_0^t \( \lam^2(X_s,Y_s) + \Om^2(s,X_s,Y_s) \)\dd s 
			- \int_0^t \lam(X_s,Y_s) \dd B_s^X 
			+ \int_0^t \Om(X_s,Y_s) \dd B_s^Y \)  \\
	&=: \widehat{Z_t} ,  \label{eq:girsanov.2}
\end{align}
where we assume
\begin{align}
\Eb \int_0^T \( \lam^2(X_s,Y_s) + \Om^2(s,X_s,Y_s) \) \widehat{Z}_s^2 \dd s
	&< \infty . \label{eq:integrability2}
\end{align}
If, furthermore, we have that $u(\cdot,\cdot,\cdot,\nu) \in C^{1,2}([0,T) \times \Rb^2)$ and
\begin{align}
\Ebh \int_0^T \( \sig^2(X_s,Y_s) \( \d_x u(s,X_s,Y_s,\nu) \)^2 + \beta^2(X_s,Y_s) \( \d_y u(s,X_s,Y_s,\nu) \)^2 \) \dd s
	&<	\infty , \label{eq:integrability3}
\end{align}
then, using the Feynman-Kac representation, the solution $u$ of PDE \eqref{eq:u.alt} can be written as
\begin{align}
u(t,x,y,\nu)
	&=	\Ebh_{t,x,y} \, \varphi(X_T) , \label{eq:fk}
\end{align}
where $\Ebh$ denotes expectation under the probability measure $\Pbh$.  Keep in mind that, while PDE \eqref{eq:u.alt} is a linear PDE for $u$, we have not succeeded in removing the non-linearity from the indifference pricing problem since $\Ach$ depends on $\eta$ and $\psi$, which are solutions of nonlinear PDEs \eqref{eq:hjb.5a} and \eqref{eq:hjb.5}, respectively.
\end{remark}

\begin{remark}
\label{rmk:bsde}
An alternative approach to indifference pricing in an SV setting was recently introduced in \cite{sircar-sturm-2012} (see also \cite{kumar2015effect}).  Specifically, the authors compute indifference price $u_t^{(\nu)}$ as the difference of the solutions of two backward stochastic differential equations (BSDEs).  In our setting, this becomes
\begin{align}
u_t^{(\nu)}
	&=	\frac{1}{\gam \nu} \( R_t - R_t^{(\nu)} \) , \\
\dd R_t
	&=	- f(Z_t^{(X,0)},Z_t^{(Y,0)}) - Z_t^{(X,0)} \dd B_t^X -  Z_t^{(Y,0)} \dd B_t^Y , &
R_T
	&=	0 , \\
\dd R_t^{(\nu)}
	&=	- f(Z_t^{(X,\nu)},Z_t^{(Y,\nu)}) - Z_t^{(X,\nu)} \dd B_t^X -  Z_t^{(Y,\nu)} \dd B_t^Y , &
R_T^{(\nu)}
	&=	- \gam \nu \varphi(X_T,Y_T) ,
\end{align}
where $f$ is a strictly quadratic driver.
The authors then prove a generalized Feynman-Kac formula, which enables them to write 
\begin{align}
R_t
	&=	\widetilde{\eta}(t,X_t,Y_t) , &
R_t^{(\nu)}
	&=	\widetilde{\psi}(t,X_t,Y_t,\nu) ,
\end{align}
where $\widetilde{\eta}$ and $\widetilde{\psi}$ solve quasilinear parabolic PDEs.
When the driver $f$ belongs to a class of functions known as \emph{distorted entropic risk measures}
then the PDEs satisfied by $\widetilde{\eta}$ and $\widetilde{\psi}$
are of exactly the same form as \eqref{eq:hjb.5a} and \eqref{eq:hjb.5}, respectively
(the only nonlinearity arises from the $(\d_y \widetilde{\eta})^2$ and $(\d_y \widetilde{\psi})^2$ terms).
Even when restricting to drivers of the distorted entropic  risk measure variety, the BSDE approach is more general than the HJB approach we follow in this paper, as the distorted entropic risk measures depend on two parameters while the exponential utility we consider depends only on the risk-aversion parameter $\gam$.
However, from a computational point of view, the difficulty of solving two nonlinear PDEs remains the same in both approaches.
\end{remark}

Returning to the problem at hand, our goal is to find the indifference price $u$ of an option with payoff $\varphi(X_T,Y_T)$.  This requires either (i) finding a solution $\eta$ of PDE \eqref{eq:hjb.5a} and then solving PDE \eqref{eq:hjb.5b} for $u$, or (ii) finding a solution finding a solution $\eta$ of PDE \eqref{eq:hjb.5a} and a solution $\psi$ of \eqref{eq:hjb.5} and then obtaining $u$ from \eqref{eq:u=eta.phi}.  When $\varphi$ is a function of $x$ only, the PDEs that determine the indifference price cannot be linearized; in this case it will be best to follow method (i).  When $\varphi$ is a function of $y$ only, the PDEs that determine the indifference price can be linearized; in this case it will be best to follow method (ii).  Because the analysis of PDEs \eqref{eq:hjb.5a}, \eqref{eq:hjb.5} and \eqref{eq:hjb.5b} differs significantly depending on whether the payoff function $\varphi$ is a function of $x$ or a function of $y$, we will treat these cases separately.  We  begin in Section \ref{sec:asymptotics} with the case in which $\varphi$ is a function of $x$ only.  The case in which $\varphi$ that are a function of $y$ only will be analyzed in Section \ref{sec:Y}.  We will not consider the case in which $\varphi$ is a function of $(x,y)$ jointly.

%
%

\section{Options on traded assets}
\label{sec:asymptotics}

In this section we consider a European-style claim whose payoff function $\varphi$ depends only on the terminal value $X_T$ of a traded asset.  Thus, we make the following standing assumption.

\begin{assumption}
\label{ass:x}
Throughout Section \ref{sec:asymptotics} we assume the payoff function $\varphi$ is a function of $x$ only.  
\end{assumption}

We will divide Section \ref{sec:asymptotics} into four parts.
In Section \ref{sec:psi.asymptotics}, we will formally derive a sequence of PDEs which, if solved, will yield approximate indifference prices.
In Section \ref{sec:explicit}, we will give explicit solutions for this sequence of PDEs.
In Section \ref{sec:accuracy.X}, we will discuss the accuracy of our indifference pricing approximation.
And in Section \ref{sec:impvol} we will translate our indifference price approximation into an explicit approximation of implied volatility.

\subsection{PDE asymptotics}
\label{sec:psi.asymptotics}
As mentioned above, when $\varphi$ is a function of $x$ only, the preferred method for obtaining indifference prices is to find a solution $\eta$ of PDE \eqref{eq:hjb.5a} and then solve PDE \eqref{eq:hjb.5b} for $u$.
Since, for general $(\sig,c,\beta,\lam)$, there is no closed-form solution to \eqref{eq:hjb.5a} or \eqref{eq:hjb.5b}, we shall seek asymptotic solutions for these PDEs.  The approach we follow is similar to the approach taken in \cite{lorig-pagliarani-pascucci-2}, who, in a general LSV setting, obtain explicit approximations for (risk-neutral) European option prices and implied volatilities by expanding the coefficients of the underlying diffusion as a Taylor series.  Note that, because \eqref{eq:hjb.5a} and \eqref{eq:hjb.5b} are non-linear PDEs, solving these equations requires overcoming difficulties that are not present when solving the linear PDEs associated with risk-neutral pricing of European-style options.
\par
To begin the asymptotic analysis, let $\chi$ be any of the coefficients functions appearing in the operators $\Act$, $\Bc(\cdot)$ or $\Cc(\cdot,\cdot)$.  That is 
\begin{align}
\chi
	&\in	\{ (\tfrac{1}{2} \sig^2) , ( c - \rho \beta \lam ),  (\tfrac{1}{2} \beta^2), (\rho \sig \beta),  (\tfrac{1}{2} \lam^2) \} .
\end{align}
Fix a point $(\xb,\yb) \in \Rb^2$ and define
\begin{align}
\chi^\eps(x,y)
	&=	\chi\(\xb+\eps(x-\xb), \yb+\eps(y-\yb)\) , &
\eps
	&\in [0,1] . \label{eq:new.chi}
\end{align}
Consider now the following family of coupled PDEs indexed by $\eps$:
\begin{align}
0
	&=	(\d_t + \Act^\eps)\eta^\eps + \Bc^\eps(\eta^\eps) , &
\eta^\eps(T,x,y)
	&=	0 , &
\eps
	&\in [0,1] , \label{eq:psi.eps.pde}  \\
0
	&=	(\d_t + \Act^\eps)u^\eps + \Cc^\eps(u^\eps,\eta^\eps) , &
u^\eps(T,x,y,\nu)
	&=	\varphi(x) , &
\eps
	&\in [0,1] , \label{eq:eta.eps.pde}
\end{align}
where $\Act^\eps$, $\Bc^\eps(\cdot)$ and $\Cc^\eps(\cdot,\cdot)$ are obtained from $\Act$, $\Bc(\cdot)$ and $\Cc(\cdot,\cdot)$ by replacing the coefficients in these operators with their $\eps$-counterparts
\begin{align}
\Act^\eps
	&=	(\tfrac{1}{2} \sig^2)^\eps \( \d_x^2 - \d_x \)
			+ \( c - \rho \beta \lam \)^\eps \d_y + (\tfrac{1}{2} \beta^2)^\eps \d_y^2 
			+ (\rho \sig \beta)^\eps \d_x \d_y  , \\
\Bc^\eps(\eta)	
	&=	(1- \rho^2) ( \tfrac{1}{2} \beta^2)^\eps (\d_y \eta)^2 - (\tfrac{1}{2}\lam^2)^\eps , \\
\Cc^\eps(u,\eta)	
	&=	(1- \rho^2) ( \tfrac{1}{2} \beta^2)^\eps \Big( 2 (\d_y u )(\d_y \eta) - \gam \nu \(\d_y u \)^2 \Big) .
\end{align}

\begin{remark}
\label{rmk:eps=1}
Note that if we take $\eps = 1$, then PDEs \eqref{eq:psi.eps.pde} and \eqref{eq:eta.eps.pde} reduce to \eqref{eq:hjb.5a} and \eqref{eq:hjb.5b}, respectively.
Thus, the relationship between $(\eta^\eps,u^\eps)$ and $(\eta,u)$ is simply $(\eta^\eps,u^\eps)|_{\eps=1}=(\eta,u)$.
\end{remark}

We shall seek asymptotic solutions to \eqref{eq:psi.eps.pde} and \eqref{eq:eta.eps.pde} by expanding $\eta^\eps$ and $u^\eps$ in powers of $\eps$:
\begin{align}
\eta^\eps
	&=	\sum_{i=0}^\infty \eps^i \eta_i , &
u^\eps
	&=	\sum_{i=0}^\infty \eps^i u_i . \label{eq:u.expand} 
\end{align}
Our asymptotic solution to \eqref{eq:hjb.5a} and \eqref{eq:hjb.5b}, which is the case we are interested in, will then follow by taking $\eps=1$. 
Since the coefficients $\{\chi^\eps\}$ appearing in the operators $\Act^\eps$, $\Bc^\eps(\cdot)$ and $\Cc^\eps(\cdot,\cdot)$ depend on $\eps$, we formally expand these as well
\begin{align}
\chi^\eps(x,y)
	&:=	\sum_{n=0}^\infty \eps^n \chi_n(x,y) , &
\eps 
	&\in [0,1] , \label{eq:chi.expand} \\
\chi_n(x,y)
	&:=	\sum_{k=0}^n \chi_{n-k,k} \cdot (x-\xb)^{n-k}(y-\yb)^k , &
\chi_{n-k,k}
	&:=	\frac{1}{(n-k)!k!} \d_x^{n-k}\d_y^k \chi(\xb,\yb) . \label{eq:chi.n}
\end{align}
Here, we assume for simplicity that the coefficients $\{ \chi^\eps \}$ are entire functions and therefore equal to their power series expansions.  In fact, we shall see that our $m$th order approximation of the indifference price requires only $\{\chi^\eps\} \in C^m(\Rb^2)$.

\begin{remark}
\label{rmk:idea}
The idea behind our expansion can be understood as follows.  The parameter $\eps$ controls how quickly the coefficients $\{\chi^\eps(x,y)\}$ vary with $(x,y)$.  For example, from \eqref{eq:new.chi} we see that $\chi^\eps(x,y)|_{\eps = 0} = \chi(\xb,\yb)$, which is a constant since $(\xb,\yb)$ is fixed.  As we dial up the value of $\eps$ from $0$ to $1$ we obtain a function $\chi^\eps(x,y)$ that varies more than a constant, but not more than $\chi^\eps(x,y)|_{\eps = 1} = \chi(x,y)$.  Figure \ref{fig:eps} illustrates this visually.  As we shall see, when $\eps = 0$, PDEs \eqref{eq:psi.eps.pde} and \eqref{eq:eta.eps.pde} can be solve explicitly.
The exact solutions in the $\eps=0$ case give us a foundation upon which we can construct approximate solutions when $\eps > 0$.  As long as the coefficients $\{\chi^\eps(x,y)\}$ do not vary too much, our approximate solutions for PDEs \eqref{eq:psi.eps.pde} and \eqref{eq:eta.eps.pde} will be close to the exact solutions -- even when $\eps = 1$.
In fact, when we discuss options with payoffs $\varphi(y)$ in Section \ref{sec:Y},
we will rigorously show that a higher degree of regularity of the coefficients $\{ \chi \}$ allows us to construct a higher order approximate solution (Proposition \ref{thm:xi}), which in turn, yields a more accurate approximate of indifference prices (Corollary \ref{cor:u.bound}).
\end{remark}

In order to find $\eta_i$ and $u_i$ $(i \geq 0)$, we insert the expansions for $\eta^\eps$ and $u^\eps$ into \eqref{eq:psi.eps.pde} and \eqref{eq:eta.eps.pde}, respectively, and collect terms of like powers in $\eps$.  At lowest order we obtain
\begin{align}
\Oc(1):&&
0
	&=	(\d_t + \Act_0) \eta_0 + (1- \rho^2) ( \tfrac{1}{2} \beta^2)_0 (\d_y \eta_0)^2 - (\tfrac{1}{2}\lam^2)_0 , &
\eta_0(T,x,y)
	&=	0 , \label{eq:psi0.pde} \\
\Oc(1):&&
0
	&=	(\d_t + \Act_0) u_0 + (1- \rho^2) ( \tfrac{1}{2} \beta^2)_0 \Big( 2 (\d_y u_0 )(\d_y \eta_0) - \gam \nu \(\d_y u_0 \)^2 \Big) , &
u_0(T,x,y,\nu)
	&=	\varphi(x) , \label{eq:eta0.pde}
\end{align}
where we have defined
\begin{align}
\Act_n
	&:=	(\tfrac{1}{2} \sig^2)_n \( \d_x^2 - \d_x \)
			+ ( c - \rho \beta \lam )_n \d_y + (\tfrac{1}{2} \beta^2)_n \d_y^2 
			+ (\rho \sig \beta)_n \d_x \d_y , &
n
	&\in	\{ 0 \} \cup \Nb . \label{eq:An}
\end{align}
Observe that the coefficients of $\Act_0$ as well as $(\tfrac{1}{2}\lam^2)_0$ are constants since, from \eqref{eq:chi.n}, we have $\chi_0 := \chi(\xb,\yb)$.  Moreover, the terminal condition for $\eta_0$ does not depend on $(x,y)$ and the terminal condition for $u_0$ does not depend on $y$.  These observations suggest that we seek a solution $\eta_0$ of \eqref{eq:psi0.pde} that is a function of $t$ only and a solution $u_0$ of \eqref{eq:eta0.pde} that is a function of $(t,x)$ only.  If we do this, equations \eqref{eq:psi0.pde} and \eqref{eq:eta0.pde} become
\begin{align}
\Oc(1):&&
0
	&=	\d_t \eta_0 - (\tfrac{1}{2}\lam^2)_0 , &
\eta_0(T)
	&=	0 , \label{eq:psi0.pde.2} \\
\Oc(1):&&
0
	&=	(\d_t + \Act_0) u_0 , &
u_0(T,x)
	&=	\varphi(x) . \label{eq:eta0.pde.2}
\end{align}
As we shall see in Section \ref{sec:explicit}, ODE \eqref{eq:psi0.pde.2} and PDE \eqref{eq:eta0.pde.2} can be solved explicitly.
Thus, we can identify the solution $\eta_0$ of \eqref{eq:psi0.pde.2} and the solution $u_0$ of \eqref{eq:eta0.pde.2} as the 
solutions of \eqref{eq:psi0.pde} and \eqref{eq:eta0.pde}, respectively.
\par
Next, collecting terms of higher order in $\eps$, it is straightforward to show that the $\Oc(\eps^m)$ equations are of the form
\begin{align}
\Oc(\eps^m):&&
0
	&=	(\d_t + \Act_0) \eta_m + H_m , &
\eta_m(T,x,y)
	&=	0 , \label{eq:psin.pde} \\
\Oc(\eps^m):&&
0
	&=	(\d_t + \Act_0) u_m + U_m , &
u_m(T,x,y,\nu)
	&=	0 , \label{eq:etan.pde}
\end{align}
where the source terms $H_m$ and $U_m$ are given by
\begin{align}
\Oc(\eps^m):&&
H_m
	&=	\sum_{k=1}^m \Act_k \eta_{m-k} - (\tfrac{1}{2} \lam^2)_m 
			+ (1-\rho^2) \sum_{k,i,j \in K_m} (\tfrac{1}{2} \beta^2)_k (\d_y \eta_i ) (\d_y \eta_j), \label{eq:Hm} \\
\Oc(\eps^m):&&
U_m
	&=	\sum_{k=1}^m \Act_k \eta_{m-k} 
			+ (1-\rho^2) \sum_{k,i,j \in K_m} (\tfrac{1}{2} \beta^2)_k \Big( 2 (\d_y u_i ) (\d_y \eta_j) - \gam \nu (\d_y u_i ) (\d_y u_j) \Big), 
			\label{eq:Um} \\ &&
K_m
	&=	\{(i,k,j) \in \Nb_0^3 : i+j+k=m \,\,\, \text{and} \,\,\, i,j,k \neq m \} .
\end{align}
The reason the set $K_m$ does not include $\Oc(\eps^m)$ terms is because if such a term appears, then it would be multiplied in \eqref{eq:Hm} and \eqref{eq:Um} by either $(\d_y u_0)$ or $(\d_y \eta_0)$, both of which are zero (since $\eta_0$ and $u_0$ are independent of $y$).
Explicitly, the $\Oc(\eps)$ source terms $H_1$ and $U_1$ are given by
\begin{align}
\Oc(\eps):&&
H_1
	&=	\Act_1 \eta_0 - (\tfrac{1}{2}\lam^2)_1 , \label{eq:F1} \\
\Oc(\eps):&&
U_1
	&=	\Act_1 u_0 . \label{eq:G1} 
\end{align}
and the $\Oc(\eps^2)$ source terms $H_2$ and $U_2$ are 
\begin{align}
\Oc(\eps^2):&&
H_2
	&=	\Act_2 \eta_0 + \Act_1 \eta_1 - (\tfrac{1}{2}\lam^2)_2  + (1-\rho^2) ( \tfrac{1}{2} \beta^2)_0 (\d_y \eta_1)^2 , \label{eq:F2} \\
\Oc(\eps^2):&&
U_2
	&=	\Act_2 u_0 + \Act_1 u_1 +	(1- \rho^2) ( \tfrac{1}{2} \beta^2)_0 \Big( 2 (\d_y u_1 )(\d_y \eta_1) - \gam \nu (\d_y u_1 )^2 \Big) 
			\label{eq:G2} , 
\end{align}
\begin{remark}
\label{rmk:H2.H1}
Note, to obtain second order term $u_2$ in the indifference price expansion, one does not require $\eta_2$. 
However, for completeness, we present the source term $H_2$ along with $H_1$.
\end{remark}
\noindent
This is as far as we will take the asymptotic analysis of PDEs \eqref{eq:psi.eps.pde} and \eqref{eq:eta.eps.pde}.  Having served its purpose, we set $\eps=1$ and make the following definition:
\begin{definition}
\label{def:ub.n}
Let $\eta$ and $u$ be the solutions of \eqref{eq:hjb.5a} and \eqref{eq:hjb.5b}, respectively.  
Assume $\varphi$ is a function of $x$ only (Assumption \ref{ass:x}) and the coefficients $(\tfrac{1}{2}\lam^2)$, $(\tfrac{1}{2} \sig^2)$, $( c - \rho \beta \lam )$, $(\tfrac{1}{2} \beta^2)$ and $(\rho \sig \beta)$ are $C^m(\Rb^2)$.  Then the \emph{$m$-th order approximations} of $\eta$ and $u$ are defined as
\begin{align}
\etab_m
	&:=	\sum_{i=0}^m \eta_i , &
\ub_m
	&:=	\sum_{i=0}^m  u_i , \label{eq:approximations}
\end{align}
where $\eta_0$ and $u_0$ solve \eqref{eq:psi0.pde.2} and \eqref{eq:eta0.pde.2}, respectfully,
and $\eta_i$ and $u_i$ $(i \geq 1)$ solve \eqref{eq:psin.pde} and \eqref{eq:etan.pde}, respectfully.
\end{definition}
\noindent
In Section \ref{sec:explicit} we will derive explicit (non-integral) expressions for $\eta_i$ and $u_i$ $(i \leq 2)$ and explicit integral expressions for $\eta_i$ and $u_i$ $(i \geq 0)$.  First, we make the following observation:
\begin{remark}
The effect of the nonlinear term $\Cc(u,\eta)$ in \eqref{eq:hjb.5b} has no effect at zeroth and first order and is felt for the first time at second order.  To see this, let us define
\begin{align}
q(t,x,y)
	&:=		\Ebt_{t,x,y} \, \varphi(X_T)  , \label{eq:q.def}
\end{align}
where $\Ebt$ denotes expectation under $\Pbt$, defined in \eqref{eq:girsanov}.
Note that $q$ can be interpreted the price of a European-style option assuming $(X,Y)$ has risk-neutral dynamics given by \eqref{eq:risk.neutral}.  The function $q$ satisfies the \emph{linear} pricing PDE
\begin{align}
0
	&=	(\d_t + \Act) q , &
q(T,x,y)
	&=	\varphi(x) . \label{eq:q.pde}
\end{align}
Repeating step-by-step the asymptotics analysis above, one could replace $\Act$  in \eqref{eq:q.pde} with $\Act^\eps$ and seek a solution $q^\eps = \sum_{k=0}^\infty \eps^k q_k$.  Upon collecting terms of like order of $\eps$ one would find
\begin{align}
\Oc(1):&&
0
	&=	(\d_t + \Act_0) q_0 , &
q_0(T,x)
	&=	\varphi(x) , \label{eq:q0.pde}  \\
\Oc(\eps^m):&&
0
	&=	(\d_t + \Act_0) q_m + Q_m, &
q_m(T,x,y)
	&=	0 , &
m
	&\geq 1 , \label{eq:qm.pde}
\end{align}
where the $m$th-order source term $Q_m$ is given by
\begin{align}
\Oc(\eps^m):&&
Q_m
	&=	\sum_{k=1}^m \Act_k q_{m-k} . \label{eq:Qm}
\end{align}
Comparing the PDE for $u_0$ \eqref{eq:eta0.pde.2} with the PDE for $q_0$ \eqref{eq:q0.pde}, the PDE for $u_m$ \eqref{eq:etan.pde} with the PDE for $q_m$ \eqref{eq:qm.pde} and source terms $U_1$ \eqref{eq:G1} and $U_2$ \eqref{eq:G2} with the source term $Q_m$ \eqref{eq:Qm}, we see that
\begin{align}
u_0 
	&=	q_0 , &
u_1	
	&=	q_1 , &
u_2 
	&=	q_2 + u_2^\Ind , \label{eq:u.and.q}
\end{align}
where $u_2^\Ind$ results from the nonlinear term $\Cc(u,\eta)$ in \eqref{eq:hjb.5b} and satisfies
\begin{align}
0
	&=	(\d_t + \Act_0) u_2^\Ind + U_2^\Ind , &
U_2^\Ind
	&=	(1- \rho^2) ( \tfrac{1}{2} \beta^2)_0 \Big( 2 (\d_y u_1 )(\d_y \eta_1) - \gam \nu (\d_y u_1 )^2 \Big) . \label{eq:u2.indiff}
\end{align}
and  terminal condition $u_2^\Ind(T,x,y)=0$.  Thus, the second order approximation of the indifference price is given by $\ub_2 = \qb_2 + u_2^\Ind$ where $\qb_2 : = q_0 + q_1 + q_2$ arises from a (standard) \emph{linear} pricing rule \eqref{eq:q.def} and $u_2^\Ind$ is a correction that arises from the nonlinear aspect of indifference pricing.  
\end{remark}

\subsection{Explicit expressions}
\label{sec:explicit}
In this section, we derive general integral expressions for $\eta_i$ and $u_i$ $(i \geq 0)$ and explicit (non-integral) expressions for the functions $\eta_0$, $\eta_1$ and $\eta_2$ as well as $u_0$, $u_1$ and $u_2$.  
To begin, we observe that the linear operator $\Act_0$ 
is the infinitesimal generator of a diffusion in $\Rb^2$ whose drift vector and covariance matrix are constant
(i.e., a correlated Brownian motion with drift).  
The \emph{semigroup} $\Pc_0(t,t_1)$ generated by $\Act_0$ is given by
\begin{align}
\Pc_0(t,t_1) \eta(x,y)
	&:=	\int_{\Rb^2} \dd x_1 \dd y_1  \Gam_0(t,x,y;t_1,x_1,y_1) \eta(x_1,y_1) , &
t_1
	&\geq t , \label{eq:Pc0}
\end{align}
where the function $\Gam_0$ is the \emph{fundamental solution} corresponding to the linear operator $(\d_t + \Act_0)$.  It is easy to show that $\Gam_0$ is a Gaussian kernel
\begin{align}
\Gam_0(t,x,y;t_1,x_1,y_1)
    &=  \frac{1}{\sqrt{( 2\pi )^2|\Cv|} } \exp\( -\frac{1}{2}\mv^\text{T} \Cv^{-1}\mv \) , 
				\label{eq:Gam.0}
\end{align}
where the covariance matrix $\Cv$ and vector $\mv$ are given by
\begin{align}
\Cv
	&=	(t_1-t) \begin{pmatrix}
      ( \sig^2 )_0 & ( \rho \sig \beta )_0 \\
      ( \rho \sig \beta )_0 & ( \beta^2 )_0 
      \end{pmatrix} , &
\mv
 &= \begin{pmatrix}
    x_1 - x - (t_1-t) (-\frac{1}{2}\sig^2 )_0 \\
		y_1 - y - (t_1-t) \( c - \rho \beta \lam \)_0 \
    \end{pmatrix}  .
\end{align}
Note that $\Pc_0(t,t_1)$ enjoys the \emph{semigroup property}:
\begin{align}
\Pc_0(t,t_1) \Pc_0(t_1,t_2)
	&=	\Pc_0(t,t_2) , & 
t
	&\leq t_1 \leq t_2 . \label{eq:semigroup}
\end{align}
By Duhamel's principle, the unique classical solution (if it exists) to any PDE of the form
\begin{align}
0
	&=	(\d_t + \Act_0) v + F , &
v(T,x,y)
	&=	G(x,y) , \label{eq:form}
\end{align}
is given by, 
\begin{align}
v(t)
	&=	\Pc_0(t,T) G + \int_t^T \dd t_1  \Pc_0(t,t_1) F(t_1) , \label{eq:duhamel}
\end{align}
where, for simplicity, we have omitted the arguments $(x,y)$.  As PDEs \eqref{eq:psi0.pde.2}, \eqref{eq:eta0.pde.2}, \eqref{eq:psin.pde} and \eqref{eq:etan.pde} are all of the form \eqref{eq:form}, one can in principle use \eqref{eq:duhamel} to compute explicit expressions for each $\eta_i$ and $u_i$ $(i \geq 0)$.
Observe, however, that computing the right-hand side of \eqref{eq:duhamel} requires evaluating a double integral in the spatial variables (because the semigroup operator $\Pc_0(t,s)$, given by \eqref{eq:Pc0}, is an integral operator) as well as an integral in the temporal variable.  
\par
It will be helpful to find more explicit expressions for the functions $\eta_0$, $\eta_1$ and $\eta_2$ as well as $u_0$, $u_1$ and $u_2$.
To this end, we introduce the following operators
\begin{align}
\Xc(t,t_1)
	&=	 x + (t_1-t) \Big( -( \tfrac{1}{2}\sig^2 )_0 + 2 (\tfrac{1}{2}\sig^2)_0 \d_x
					+ (\rho \sig \beta)_0 \d_y \Big) , & t_1 &\geq t , \label{eq:X} \\
\Yc(t,t_1)
	&=	 y + (t_1-t) \Big( (c-\rho \beta \lam)_0 + 2 (\tfrac{1}{2}\beta^2)_0 \d_y
					+ (\rho \sig \beta)_0 \d_x  \Big) , & t_1 &\geq t ,  \label{eq:Y}
\end{align}
It is easy to check by direct computation that the operators $\Xc(t,t_1)$ and $\Yc(t,t_1)$ commute and have the following property
\begin{align}
\(\Xc(t,t_1) \)^n \(\Yc(t,t_1)\)^m \Gam_0(t,x,y,t_1,x_1,y_1)
	&=	\( x_1 \)^n \( y_1 \)^m \Gam_0(t,x,y,t_1,x_1,y_1) , &
n,m
	&\in \mathbb{N}_0 . \label{eq:prop}
\end{align}
Therefore, if $f$ is a polynomial function of $(x,y)$ we have
\begin{align}
\Pc_0(t,t_1) f(x,y)
	&=	\int_{\Rb^2} \dd x_1 \dd y_2 \, f(x_1,y_1) \Gam_0(t,x,y;t_1,x_1,y_1) \\
	&=	f(\Xc(t,t_1),\Yc(t,t_1)) \int_{\Rb^2} \dd x_1 \dd y_2 \, \Gam_0(t,x,y;t_1,x_1,y_1) \\
	&=	f(\Xc(t,,t_1),\Yc(t,t_1)) 1. \label{eq:Pc.poly}
\end{align}
It will also be helpful to introduce the following operator
\begin{align}
\Gc_n(t,t_1)
	&:=	\Act_n(\Xc(t,t_1),\Yc(t,t_1)) , & t_1 &\geq t , &
n
	&\geq 1 , \label{eq:Gc}
\end{align}
where the notation $\Act_n(\Xc(t,t_1),\Yc(t,t_1))$ indicates that the $(x,y)$-dependence in coefficients of $\Act_n \equiv \Act_n(x,y)$, has been replaced by $(x,y) \to (\Xc(t,t_1),\Yc(t,t_1))$.  For example, the term
\begin{align} 
(\tfrac{1}{2} \sig^2)_1 ( \d_x^2 - \d_x )
	&:= \Big( (\tfrac{1}{2} \sig^2)_{1,0} (x-\xb) + (\tfrac{1}{2} \sig^2)_{0,1}(y-\yb) \Big) ( \d_x^2 - \d_x ),
\end{align}
appearing in $\Act_1$ becomes
\begin{align}
\Big( (\tfrac{1}{2} \sig^2)_{1,0} (\Xc(t,t_1)-\xb) + (\tfrac{1}{2} \sig^2)_{0,1}(\Yc(t,t_1)-\yb) \Big) ( \d_x^2 - \d_x ) ,
\end{align}
in $\Gc_1(t,t_1):=\Act_1(\Xc(t,t_1),\Yc(t,t_1))$.  The following lemma will be essential for the computations that follow
\begin{lemma}
\label{lem:com}
Let the operators $\Act_n$, $\Pc_0(t,t_1)$, $\Xc(t,t_1)$, $\Yc(t,t_1)$ and $\Gc_n(t,t_1)$ be given by \eqref{eq:An}, \eqref{eq:Pc0}, \eqref{eq:X}, \eqref{eq:Y} and \eqref{eq:Gc} respectively.  Then we have the following commutation-like relation
\begin{align}
\Pc_0(t,t_1) \Act_n f
	&=	\Gc_n(t,t_1) \Pc_0(t,t_1) f , \label{eq:PA=GP}
\end{align}
which holds for any measurable function $f \in C^{n+2}(\Rb^2)$ whose partial derivatives of all orders less than or equal to $n+2$ are measurable and at most exponentially growing.
\end{lemma}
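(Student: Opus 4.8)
The plan is to reduce \eqref{eq:PA=GP} to two elementary facts about the Gaussian kernel $\Gam_0$: that multiplication by a polynomial in the forward variables $(x_1,y_1)$ can be traded for the operators $\Xc,\Yc$ acting in $(x,y)$, and that constant-coefficient differential operators commute with $\Pc_0(t,t_1)$. First I would decompose $\Act_n$ as $\Act_n=\sum_\alpha \chi^{(\alpha)}_n\, D_\alpha$, where each $\chi^{(\alpha)}_n$ is one of the homogeneous degree-$n$ polynomials $(\tfrac12\sig^2)_n,\ (c-\rho\beta\lam)_n,\ (\tfrac12\beta^2)_n,\ (\rho\sig\beta)_n$ in the shifted variables $(x-\xb,y-\yb)$ and $D_\alpha\in\{\d_x^2-\d_x,\ \d_y,\ \d_y^2,\ \d_x\d_y\}$ is constant-coefficient of order at most two. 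By linearity it then suffices to prove $\Pc_0(t,t_1)\bigl(p\cdot(Df)\bigr)=p\bigl(\Xc(t,t_1),\Yc(t,t_1)\bigr)\,D\,\Pc_0(t,t_1)f$ for a single such coefficient $p$ and operator $D$; summing over $\alpha$ then reproduces $\Gc_n(t,t_1)\Pc_0(t,t_1)f$ with exactly the coefficient-on-the-left ordering used to define $\Gc_n$ in \eqref{eq:Gc}.

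The first ingredient generalizes \eqref{eq:Pc.poly}: for any polynomial $p$ in $(x-\xb,y-\yb)$ and any measurable, at most exponentially growing $g$, one has $\Pc_0(t,t_1)(p\cdot g)=p\bigl(\Xc(t,t_1),\Yc(t,t_1)\bigr)\,\Pc_0(t,t_1)g$. To see this, note first that $(\Xc(t,t_1)-\xb)^a(\Yc(t,t_1)-\yb)^b\,\Gam_0=(x_1-\xb)^a(y_1-\yb)^b\,\Gam_0$, which follows from \eqref{eq:prop} by peeling off one factor at a time and using that $\Xc$ and $\Yc$ commute; hence $p(x_1,y_1)\,\Gam_0=p(\Xc,\Yc)\,\Gam_0$ inside the integral defining $\Pc_0(t,t_1)(p\cdot g)$. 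Since $p(\Xc,\Yc)$ is a differential operator in $(x,y)$ and every $(x,y)$-derivative of $\Gam_0$ is a polynomial times a Gaussian in $(x_1,y_1)$, it may be pulled out of the $(x_1,y_1)$-integral by differentiation under the integral sign, using the at-most-exponential growth of $g$ for the required domination. The second ingredient is that $\Pc_0(t,t_1)(Dg)=D\,\Pc_0(t,t_1)g$ for every constant-coefficient $D$: because $\Gam_0$ depends on $(x,y)$ and $(x_1,y_1)$ only through the vector $\mv$ of \eqref{eq:Gam.0}, one has $\d_{x_1}\Gam_0=-\d_x\Gam_0$ and $\d_{y_1}\Gam_0=-\d_y\Gam_0$, so $\d_{x_1}^a\d_{y_1}^b\Gam_0=(-1)^{a+b}\d_x^a\d_y^b\Gam_0$; moving the derivatives in $\int\Gam_0\,(Dg)$ onto $\Gam_0$ by integration by parts contributes a factor $(-1)^{a+b}$ on each monomial $\d_{x_1}^a\d_{y_1}^b$ of $D$, which cancels the sign from the chain rule and leaves $\d_x^a\d_y^b\Pc_0(t,t_1)g$, i.e.\ $D\Pc_0(t,t_1)g$. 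This is where the stated regularity hypothesis enters: $f\in C^{n+2}(\Rb^2)$ with measurable, at most exponentially growing partial derivatives makes $g=Df$ integrable against $\Gam_0$ and all its derivatives, and makes all boundary terms in the integration by parts vanish against the Gaussian decay of $\Gam_0$.

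Combining the two ingredients with $g=Df$ gives $\Pc_0(t,t_1)\bigl(p\cdot Df\bigr)=p(\Xc,\Yc)\,\Pc_0(t,t_1)(Df)=p(\Xc,\Yc)\,D\,\Pc_0(t,t_1)f$, and summing over the four terms of $\Act_n$ yields \eqref{eq:PA=GP}. The only steps that are not purely mechanical are the verification of $\d_{x_1}\Gam_0=-\d_x\Gam_0$ (immediate from \eqref{eq:Gam.0}) and the justification of the two interchanges, both of which are standard given Gaussian decay against exponential growth. I expect the main thing to watch is the bookkeeping of operator order: $p(\Xc,\Yc)$ does not commute with $D$, so the coefficient operator must be kept strictly to the left throughout so that the result matches $\Gc_n=\Act_n(\Xc,\Yc)$ as defined. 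An essentially equivalent route, which one could use instead, is to observe that $\Xc,\Yc,\d_x,\d_y$ satisfy the same commutation relations as $x,y,\d_x,\d_y$, so that $p\mapsto p(\Xc,\Yc)$ extends to an algebra homomorphism; one then checks \eqref{eq:PA=GP} for polynomial $f$ directly from \eqref{eq:Pc.poly} and extends to general $f$ by the same domination argument.
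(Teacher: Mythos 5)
Your proof is correct and follows essentially the same route as the paper's: reduce $\Act_n$ to terms of the form (polynomial coefficient)$\times$(constant-coefficient operator), use the identity \eqref{eq:prop} to trade multiplication by the polynomial in $(x_1,y_1)$ for the operator $p(\Xc,\Yc)$ acting in $(x,y)$, and then integrate by parts together with the fact that $\Gam_0$ depends only on $(x-x_1,y-y_1)$ to commute the constant-coefficient derivatives past $\Pc_0(t,t_1)$. The only difference is organizational --- you package the argument as two separate ingredients where the paper runs a single chain of equalities --- and your remark about keeping the coefficient operator to the left of $D$ correctly matches the definition of $\Gc_n$ in \eqref{eq:Gc}.
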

\begin{proof}
It is sufficient to prove the claim for $\Act_n$ of the form
\begin{align}
\Act_n
	&=	x^k y^m \d_x^i \d_y^j , &
k,m,i,j 
	&\in \Nb_0 , \label{eq:A.form}
\end{align}
since each $\Act_n$ is a finite sum of terms of this form.  Let $f$ be measurable and at most exponentially growing.  Then
\begin{align}
\Pc_0(t,t_1) \Act_n f(x,y)
	&=	\int_{\Rb^2} \dd x_1 \dd y_1 \, \Gam_0(t,x,y;t_1,x_1,y_1) x_1^k y_1^m \d_{x_1}^i \d_{y_1}^j f(x_1,y_1) \\
	&=	(\Xc(t,t_1))^k (\Yc(t,t_1))^m \int_{\Rb^2}  \dd x_1 \dd y_1 \, \Gam_0(t,x,y;t_1,x_1,y_1) \d_{x_1}^i \d_{y_1}^j f(x_1,y_1) \\
	&=	(\Xc(t,t_1))^k (\Yc(t,t_1))^m (-1)^{i+j} \int_{\Rb^2}  \dd x_1 \dd y_1 \, f(x_1,y_1) \d_{x_1}^i \d_{y_1}^j \Gam_0(t,x,y;t_1,x_1,y_1) \\
	&=	(\Xc(t,t_1))^k (\Yc(t,t_1))^m \d_{x}^i \d_{y}^j \int_{\Rb^2}  \dd x_1 \dd y_1 \, \Gam_0(t,x,y;t_1,x_1,y_1) f(x_1,y_1)  \\
	&=	\Gc_n(t,t_1) \Pc_0(t,t_1) f(x,y) . \label{eq:dxdy.added}
\end{align}
In the first equality we have simply used the definition of $\Pc_0(t,t_1)$ \eqref{eq:Pc0} and the form of $\Act_n$ \eqref{eq:A.form}.  In the second equality we have used \eqref{eq:prop} and pulled the operator $(\Xc(t,t_1))^k (\Yc(t,t_1))^m$ out of the integral.  In the third equality we have integrated by parts.  In the fourth equality we have used the fact that the Gaussian kernel $\Gam_0$ can be written as function of $(x-x_1)$ and $(y-y_1)$.  Therefore 
\begin{align}
\d_{x_1}^i \d_{y_1}^j \Gam_0(t,x,y;t_1,x_1,y_1) 
	&= (-1)^{i+j} \d_{x}^i \d_{y}^j \Gam_0(t,x,y;t_1,x_1,y_1) .
\end{align}
And in the last equality we have used the definitions of $\Pc_0(t,t_1)$ \eqref{eq:Pc0} and $\Gc_n$ \eqref{eq:Gc}.
\end{proof}

\noindent
Using the operators $\Xc(t,t_1)$, $\Yc(t,t_1)$, and $\Gc_i(t,t_1)$, it is now straightforward to establish the following proposition, which provides explicit expressions for $\eta_0$, $\eta_1$ and $\eta_2$.
\begin{proposition}
\label{thm:phi}
Let $\eta_0$ be the unique classical solution of \eqref{eq:psi0.pde.2} and $\eta_1$ and $\eta_2$ be the unique classical solutions of \eqref{eq:psin.pde} with source terms $H_1$ \eqref{eq:F1} and $H_2$ \eqref{eq:F2}, respectfully.  
Assume the coefficients $(\tfrac{1}{2}\lam^2)$, $(\tfrac{1}{2} \sig^2)$, $( c - \rho \beta \lam )$,  $(\tfrac{1}{2} \beta^2)$ and $(\rho \sig \beta)$ are $C^2(\Rb^2)$.
Then, omitting $(x,y)$-dependence for clarity, we have
\begin{align}
\eta_0(t)
	&= - (T-t) (\tfrac{1}{2}\lam^2)_0 , \label{eq:eta0.theorem} \\
\eta_1(t)
	&=	- \int_t^T \dd t_1  (\tfrac{1}{2}\lam^2)_1(\Xc(t,t_1),\Yc(t,t_1)) 1 , \label{eq:eta1.theorem} \\
\eta_2(t)
	&=	- \int_t^T \dd t_1 \int_{t_1}^T \dd t_2 \, \Gc_1(t,t_1) (\tfrac{1}{2}\lam^2)_1(\Xc(t,t_2),\Yc(t,t_2)) 1 \\ &\quad
			- \int_t^T \dd t_1  (\tfrac{1}{2}\lam^2)_2 (\Xc(t,t_1),\Yc(t,t_1)) 1
			+ \frac{1}{3}(T-t)^3 (1-\rho^2) ( \tfrac{1}{2} \beta^2)_0 (\tfrac{1}{2}\lam^2)_{0,1}^2 , \label{eq:eta2.theorem}
\end{align}
where the operators $\Xc(t,t_1)$, $\Yc(t,t_1)$ and $\Gc_i(t,t_1)$ are given by \eqref{eq:X}, \eqref{eq:Y} and \eqref{eq:Gc}, respectively, and
$(\tfrac{1}{2}\lam^2)_{0,1}=\d_y(\tfrac{1}{2}\lam^2(\xb,\yb))$, as described in \eqref{eq:chi.n}.
\end{proposition}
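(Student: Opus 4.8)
\textit{Proof proposal.}

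The plan is to obtain all three formulas from Duhamel's principle \eqref{eq:duhamel} applied to the relevant Cauchy problems, reducing every right-hand side to the action of the semigroup $\Pc_0$ on polynomials, which is then computed via the polynomial-action identity \eqref{eq:Pc.poly}, the semigroup property \eqref{eq:semigroup}, and the commutation relation of Lemma \ref{lem:com}. For $\eta_0$, equation \eqref{eq:psi0.pde.2} is the trivial ODE $\d_t\eta_0=(\tfrac{1}{2}\lam^2)_0$ with $\eta_0(T)=0$, so integration gives \eqref{eq:eta0.theorem}; note in particular that $\eta_0$ depends on $t$ alone.

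For $\eta_1$: since $\eta_1(T,x,y)=0$, Duhamel gives $\eta_1(t)=\int_t^T\dd t_1\,\Pc_0(t,t_1)H_1(t_1)$ with $H_1=\Act_1\eta_0-(\tfrac{1}{2}\lam^2)_1$ as in \eqref{eq:F1}. Because $\eta_0=\eta_0(t)$ and every term of $\Act_1$ carries a derivative in $x$ or $y$, one has $\Act_1\eta_0=0$, so $H_1=-(\tfrac{1}{2}\lam^2)_1$, a degree-one polynomial in $(x,y)$ independent of time; applying \eqref{eq:Pc.poly} yields \eqref{eq:eta1.theorem}. In passing this shows $\eta_1$ is affine in $(x,y)$: using $\Yc(t,t_1)1=y+(t_1-t)(c-\rho\beta\lam)_0$ together with \eqref{eq:chi.n} one reads off $\d_y\eta_1(t)=-(T-t)(\tfrac{1}{2}\lam^2)_{0,1}$, a deterministic function of $t$, which is needed below.

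For $\eta_2$: Duhamel gives $\eta_2(t)=\int_t^T\dd t_1\,\Pc_0(t,t_1)H_2(t_1)$ with $H_2$ as in \eqref{eq:F2}, and I would treat its four pieces separately. First, $\Act_2\eta_0=0$ since $\eta_0=\eta_0(t)$. Second, for $\Act_1\eta_1$, rewrite $\eta_1(t_1)=-\int_{t_1}^T\dd t_2\,\Pc_0(t_1,t_2)(\tfrac{1}{2}\lam^2)_1$ via \eqref{eq:Pc.poly}, then use Lemma \ref{lem:com} to pull $\Act_1$ through $\Pc_0(t,t_1)$ as $\Gc_1(t,t_1)$, collapse $\Pc_0(t,t_1)\Pc_0(t_1,t_2)=\Pc_0(t,t_2)$ by \eqref{eq:semigroup}, and apply \eqref{eq:Pc.poly} once more; this produces the double time-integral term of \eqref{eq:eta2.theorem}. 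Third, $\Pc_0(t,t_1)\big({-}(\tfrac{1}{2}\lam^2)_2\big)=-(\tfrac{1}{2}\lam^2)_2(\Xc(t,t_1),\Yc(t,t_1))1$ directly from \eqref{eq:Pc.poly}, giving the second term. Fourth, since $\d_y\eta_1(t_1)=-(T-t_1)(\tfrac{1}{2}\lam^2)_{0,1}$ is independent of $(x,y)$ and $(\tfrac{1}{2}\beta^2)_0$ is a constant, the nonlinear term $(1-\rho^2)(\tfrac{1}{2}\beta^2)_0(\d_y\eta_1)^2$ equals the spatially constant quantity $(1-\rho^2)(\tfrac{1}{2}\beta^2)_0(\tfrac{1}{2}\lam^2)_{0,1}^2(T-t_1)^2$, on which $\Pc_0$ acts trivially, and $\int_t^T(T-t_1)^2\,\dd t_1=\tfrac{1}{3}(T-t)^3$ yields the last term. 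Summing the four contributions gives \eqref{eq:eta2.theorem}.

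The only mildly delicate step is the $\Act_1\eta_1$ term, where one must re-express $\eta_1$ as a $\Pc_0$-integral so that both Lemma \ref{lem:com} and the semigroup property \eqref{eq:semigroup} can be invoked; the remaining manipulations are direct applications of \eqref{eq:Pc.poly}. It is also worth noting that the stated $C^2$ hypothesis on the coefficients is precisely what the argument uses: $\Act_1$ involves only their first derivatives at $(\xb,\yb)$ and $(\tfrac{1}{2}\lam^2)_2$ their second, while the functions to which $\Act_1$ and $\Act_2$ are applied ($\eta_0$, constant in space; $\eta_1$, affine; and various polynomials) are smooth, so the regularity requirement in Lemma \ref{lem:com} is automatically satisfied.
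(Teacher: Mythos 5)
Your proposal is correct and follows essentially the same route as the paper's own proof in Appendix \ref{sec:phi.proof}: Duhamel's principle for each order, $\Act_1\eta_0=\Act_2\eta_0=0$ since $\eta_0$ is spatially constant, the polynomial-action identity \eqref{eq:Pc.poly} for the $(\tfrac{1}{2}\lam^2)_1$ and $(\tfrac{1}{2}\lam^2)_2$ terms, Lemma \ref{lem:com} plus the semigroup property for the $\Act_1\eta_1$ term, and the explicit constant $\d_y\eta_1(t_1)=-(T-t_1)(\tfrac{1}{2}\lam^2)_{0,1}$ for the nonlinear term. No gaps.
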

\begin{proof}
See Appendix \ref{sec:phi.proof}.
\end{proof}
\begin{remark}
\label{rmk:phi}
Note that $\eta_1$ is obtained by computing
\begin{align}
(\tfrac{1}{2}\lam^2)_1(\Xc(t,t_1),\Yc(t,t_1)) 1
	&=	(\tfrac{1}{2}\lam^2)_{1,0}(\Xc(t,t_1)-\xb) 1 + (\tfrac{1}{2}\lam^2)_{0,1}(\Yc(t,t_1)-\yb) 1 ,
\end{align}
and then integrating in with respect to $t_1$.  An explicit computation yields that $\eta_1$ is given by 
\begin{align}
\eta_1(t,x,y)
	&=	-(\tfrac{1}{2}\lam^2)_{1,0} \( (T-t) (x-\xb)-\frac{1}{2}(T-t)^2(\tfrac{1}{2}\sig^2 )_0 \) \\ &\quad
			-(\tfrac{1}{2}\lam^2)_{0,1} \( (T-t) (y-\yb)+\frac{1}{2}(T-t)^2(c-\rho \beta \lam )_0 \) . \label{eq:eta1}
\end{align}
The function $\eta_2$ is the sum of three terms, two of which are obtained by computing
\begin{align}
\Gc_1(t,t_1) (\tfrac{1}{2}\lam^2)_1(\Xc(t,t_2),\Yc(t,t_2)) 1
	&=	\Act_1(\Xc(t,t_1),\Yc(t,t_1))\,(\tfrac{1}{2}\lam^2)_1(\Xc(t,t_2),\Yc(t,t_2)) 1 , \\
(\tfrac{1}{2}\lam^2)_2 (\Xc(t,t_1),\Yc(t,t_1)) 1
	&=	\sum_{k=0}^2 (\tfrac{1}{2}\lam^2)_{k,2-k} (\Xc(t,t_1)-\xb)^k(\Yc(t,t_1)-\yb)^{2-k} 1 ,
\end{align}
and then integrating with respect to $t_1$.  For brevity, we omit the explicit expression for $\eta_2$.
\end{remark}

To compute an approximation for the indifference price $u$, we must specify a payoff function $\varphi$.
Since we are interested in characterizing the buyer and seller implied volatility surfaces, we focus here on call payoffs $\varphi(x) = (\ee^x - \ee^k)^+$.  Before presenting expressions for $u_0$, $u_1$ and $u_2$, it will be helpful to define the Black-Scholes call price.
\begin{definition}
For a fixed maturity date $T>t$ and $\log$ strike $k$, the \emph{Black-Scholes call price} $u^\BS$ is defined as
\begin{align}
u^\BS(t,x;\sig)
	&:=	\ee^x \Phi\(d_+(t,x;\sig)\) - \ee^k \Phi\(d_-(t,x;\sig)\) , \label{eq:uBS}
\end{align}
where $\Phi$ is a standard normal CDF and
\begin{align}
d_\pm(t,x;\sig)
	&=	\frac{1}{\sig\sqrt{T-t}} \( x-k \pm \frac{1}{2} \sig^2 (T-t) \) . \label{eq:d}
\end{align}
\end{definition}
\begin{proposition}
\label{thm:u}
Fix a maturity date $T$, a $\log$ strike $k$ and assume $\varphi$ is the payoff of a European call option $\varphi(x)=(\ee^x-\ee^k)^+$.
Let $u_0$ be the unique classical solution of \eqref{eq:eta0.pde.2} and let $u_1$ and $u_2$ be the unique classical solutions of \eqref{eq:etan.pde} with source terms $U_1$ \eqref{eq:G1} and $U_2$ \eqref{eq:G2}, respectfully.  
Assume that $(\tfrac{1}{2}\lam^2)$ is $C^1(\Rb^2)$ and $(\tfrac{1}{2} \sig^2)$, $( c - \rho \beta \lam )$,  $(\tfrac{1}{2} \beta^2)$ and $(\rho \sig \beta)$ are $C^2(\Rb^2)$.
Then, omitting arguments of $(x,y)$ for clarity, we have 
\begin{align}
u_0(t) 
	&=		u^\BS(t,\cdot;\sig_0) , \label{eq:u0.theorem} \\
u_1(t)
	&=	\Big( \int_t^T \dd t_1  \Gc_1(t,t_1) \Big) u^\BS(t,\cdot;\sig_0) , \label{eq:u1.theorem} \\
u_2(t)
	&=	u_2^0(t) + u_2^\Ind(t) , \label{eq:u2.theorem} \\
u_2^0(t)
	&=	\Big( \int_t^T \dd t_1  \Gc_2(t,t_1) + \int_t^T \dd t_1 \int_{t_1}^T \dd t_2  \Gc_1(t,t_1)\Gc_1(t,t_2) \Big) 
			u^\BS(t,\cdot;\sig_0) , \label{eq:u2.0} \\
u_2^\Ind(t)
	&=	(1 - \rho^2) (\tfrac{1}{2} \beta^2)_0 \bigg\{ 2
			\Big( \int_t^T \dd t_1 \int_{t_1}^T \dd t_2  (\d_y \eta_1(t_1)) \cdot \d_y \Gc_1(t,t_2) \Big) u^\BS(t,\cdot;\sig_0) \\ &\quad
			- \gam \nu  \frac{(\tfrac{1}{2}\sig^2)_{0,1}^2 }{2\pi\sig_0^2} 
			\int_t^T \dd t_1  \frac{(T-t_1)^{3/2} \ee^{2k}}{\sqrt{T-t+t_1-t}}
			\exp \(\frac{-\((k-x)+\frac{1}{2} \sig_0^2 (T-t)\)^2}{\sig_0^2 (T-t+t_1-t)} \) \bigg\}. \label{eq:u2.indiff.2}
\end{align}
where the  operator $\Gc_i(t,t_1)$ and the functions $\eta_1$ and $u^\BS$ are defined in \eqref{eq:Gc}, \eqref{eq:eta1} and \eqref{eq:uBS}, respectively,
\end{proposition}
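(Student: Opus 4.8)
The plan is to solve the three Cauchy problems \eqref{eq:eta0.pde.2} and \eqref{eq:etan.pde} (for $m=1,2$) by combining Duhamel's principle \eqref{eq:duhamel} with the commutation identity of Lemma \ref{lem:com}, and then to specialize to the call payoff using the explicit Gaussian form of the Black--Scholes Greeks. For the zeroth order term: since $\varphi(x)=(\ee^x-\ee^k)^+$ depends on $x$ only and the $y$-derivatives in $\Act_0$ annihilate any function of $x$, the restriction of $\Act_0$ to such functions is the Black--Scholes operator $(\tfrac12\sig^2)_0(\d_x^2-\d_x)$; the function $(t,x)\mapsto u^\BS(t,x;\sig_0)$ solves this linear equation with the prescribed datum, so by uniqueness $u_0(t,\cdot)=u^\BS(t,\cdot;\sig_0)$, which is \eqref{eq:u0.theorem}. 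I would record for later use that $\Pc_0(t,t_1)u^\BS(t_1,\cdot;\sig_0)=u^\BS(t,\cdot;\sig_0)$ (indeed $\Pc_0(t,t_1)$ restricted to functions of $x$ is the Black--Scholes semigroup), and the Greek identity $(\d_x^2-\d_x)u^\BS(t,x;\sig)=\ee^k\phi(d_-(t,x;\sig))/(\sig\sqrt{T-t})=\ee^x\phi(d_+(t,x;\sig))/(\sig\sqrt{T-t})$, where $\phi$ is the standard normal density.

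For $u_1$ and $u_2^0$: applying \eqref{eq:duhamel} to \eqref{eq:etan.pde} with $U_1=\Act_1 u_0$ gives $u_1(t)=\int_t^T \dd t_1\,\Pc_0(t,t_1)\Act_1 u_0(t_1)$; Lemma \ref{lem:com} (whose regularity and growth hypotheses hold since $u^\BS(\cdot,\cdot;\sig_0)\in C^\infty$ with at most exponentially growing derivatives on $[t,T)$) moves $\Act_1$ across $\Pc_0$ and turns it into $\Gc_1$, and then the semigroup property \eqref{eq:semigroup} together with the fact that $u_0$ solves \eqref{eq:eta0.pde.2} collapses $\Pc_0(t,t_1)u_0(t_1)=u_0(t)$, yielding \eqref{eq:u1.theorem}. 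For $u_2^0=q_2$ (using the decomposition \eqref{eq:u.and.q}--\eqref{eq:u2.indiff}), I would apply the same two ingredients twice to the Duhamel representation of $q_2$, exactly as in the linear pricing analysis of \cite{lorig-pagliarani-pascucci-2}: the $\Act_2 q_0$ source contributes $\int_t^T \dd t_1\,\Gc_2(t,t_1)u^\BS(t,\cdot;\sig_0)$, and the $\Act_1 q_1$ source, after inserting the Duhamel form of $q_1$ and using \eqref{eq:semigroup}, contributes the iterated term $\int_t^T \dd t_1\int_{t_1}^T \dd t_2\,\Gc_1(t,t_1)\Gc_1(t,t_2)u^\BS(t,\cdot;\sig_0)$, the time ordering $t_1\le t_2$ fixing the order of the (non-commuting) operators. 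This gives \eqref{eq:u2.0}.

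The nonlinear correction $u_2^\Ind$ solves \eqref{eq:u2.indiff}, and its two inputs are $\d_y\eta_1$ and $\d_y u_1$. From \eqref{eq:eta1}, $\d_y\eta_1(t)=-(\tfrac12\lam^2)_{0,1}(T-t)$; and since $u_0$ depends on $x$ only, differentiating \eqref{eq:u1.theorem} in $y$ leaves only the contribution of the $\Yc$-operator coefficient, giving $\d_y u_1(t,\cdot)=(\tfrac12\sig^2)_{0,1}(T-t)(\d_x^2-\d_x)u^\BS(t,\cdot;\sig_0)$. Duhamel applied to \eqref{eq:u2.indiff} then splits $u_2^\Ind$ into two pieces. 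The piece coming from $2(\d_y u_1)(\d_y\eta_1)$ is handled as above: commute the constant-coefficient $(\d_x^2-\d_x)$ through $\Pc_0$, use $\Pc_0(t,t_1)u^\BS(t_1,\cdot;\sig_0)=u^\BS(t,\cdot;\sig_0)$, and recognize the double integral as the operator expression in \eqref{eq:u2.indiff.2} via $\d_y\Gc_1(t,t_2)u^\BS=(\tfrac12\sig^2)_{0,1}(\d_x^2-\d_x)u^\BS$. The piece from $-\gam\nu(\d_y u_1)^2$ is the crux: using the Greek identity, $(\d_y u_1(t_1,\cdot))^2$ is an explicit Gaussian in $x$ times a power of $(T-t_1)$ — proportional to $(T-t_1)\,\ee^{2k}\exp(-d_-(t_1,x;\sig_0)^2)$ — so $\Pc_0(t,t_1)$ acts as a one-dimensional Gaussian convolution (the $y$-integration being trivial), and evaluating it in closed form, combining the kernel variance $\sig_0^2(t_1-t)$ with the integrand's variance $\tfrac12\sig_0^2(T-t_1)$ and tracking the drift shift $-\tfrac12\sig_0^2(t_1-t)$, produces the normalization $(T-t_1)^{3/2}/\sqrt{T-t+t_1-t}$ and the exponent $-((k-x)+\tfrac12\sig_0^2(T-t))^2/(\sig_0^2(T-t+t_1-t))$ in \eqref{eq:u2.indiff.2}. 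Summing the two pieces yields \eqref{eq:u2.theorem}.

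The genuinely new computation — and the step I expect to be the main obstacle — is this last Gaussian integral: all the mean/variance bookkeeping must be carried out precisely to land on the stated closed form, and one must check that the interchange of the $t$-integral with the space integral of $\Pc_0$ is legitimate (it is, by the Gaussian decay and the smoothing of $\Pc_0$ for $t<T$), as well as that the hypotheses of Lemma \ref{lem:com} apply at each invocation. Everything else is a repackaging of the constant-coefficient linear-pricing asymptotics already available from \cite{lorig-pagliarani-pascucci-2}.
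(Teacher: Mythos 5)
Your proposal is correct and follows essentially the same route as the paper's Appendix B: Duhamel's principle combined with the commutation identity of Lemma \ref{lem:com} and the semigroup property for $u_0$, $u_1$ and $u_2^0$, the observations $\d_y \eta_1(t) = -(\tfrac{1}{2}\lam^2)_{0,1}(T-t)$ and $\d_y u_1(t) = (\tfrac{1}{2}\sig^2)_{0,1}(T-t)(\d_x^2 - \d_x)u^\BS(t,\cdot;\sig_0)$ for the nonlinear correction, and an explicit one-dimensional Gaussian convolution (completion of the square) for the $-\gam\nu(\d_y u_1)^2$ term. Your variance bookkeeping for that last integral ($\sig_0^2(t_1-t)$ from the kernel plus $\tfrac{1}{2}\sig_0^2(T-t_1)$ from $\phi^2(d_\pm)$, totalling $\tfrac{1}{2}\sig_0^2(T-t+t_1-t)$) matches the paper's explicit $a,b,c$ computation, so no gap remains.
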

\begin{proof}
See Appendix \ref{sec:u.proof}.
\end{proof}
\begin{remark}
In equations \eqref{eq:u1.theorem}, \eqref{eq:u2.0} and \eqref{eq:u2.indiff.2} the integrals with respect to $t_1$ and $t_2$ are performed on the operators that act on $u^\BS(t,\cdot;\sig_0)$ and \emph{not} on the function $u^\BS (t,\cdot;\sig_0)$.  These integrals can be computed explicitly.  As such, $u_1$ and $u_2$ are obtained simply by applying a differential operator to $u^\BS(t,\cdot,\sig_0)$.
\end{remark}
\begin{remark}
Observe that we have isolated the term $u_2^\Ind$, which arises from the source term $U^\Ind$ and satisfies PDE \eqref{eq:u2.indiff}.  Also observe that the second term in \eqref{eq:u2.indiff.2} has the sign of $(-\nu)$.  Thus, an investor that is looking to \emph{buy} $|\nu|$ call options will have a lower indifference price than an investor that is looking to \emph{sell} $|\nu|$ call options, giving rise to a bid-ask spread.  
\end{remark}
\begin{remark}
Note, if we restrict $X$ to local volatility dynamics (i.e., $\sig$ as a function of $x$ only), then we are in a complete market setting and $u^\Ind=0$ as it should be.  To see this observe that, when acting on a function independent of $y$ we have 
\begin{align}
\d_y \Gc_1(t,t_1)
	&=	(\tfrac{1}{2}\sig^2)_{0,1} (\d_x^2 - \d_x ) . \label{eq:dy.G1}
\end{align}
From \eqref{eq:chi.n}, we also have $(\tfrac{1}{2}\sig^2)_{0,1}=\d_y(\tfrac{1}{2}\sig^2(\xb))=0$.  Thus, both terms appearing in \eqref{eq:u2.indiff.2} are zero.  
\end{remark}

\subsection{Accuracy of the indifference price approximation}
\label{sec:accuracy.X}
Establishing the asymptotic accuracy of the solution of a nonlinear PDE is a notoriously difficult problem.  It is well beyond the scope of this paper to rigorously establish the accuracy of our second order indifference price approximation $\ub_2$.  Nevertheless, some discussion of the approximation's accuracy is necessary in order to establish conditions under which use of the approximation is appropriate.  In this section, we establish the accuracy of our zeroth order indifference price approximation $\ub_0$, and we outline the steps and the conditions that are needed to establish the accuracy of higher order approximations $\ub_m$ ($m \geq 1$).
\par
For the accuracy results that follow, it will be easiest to work directly with PDE \eqref{eq:hjb.5a} for $\eta$ and PDE \eqref{eq:hjb.5} for $\psi$, which are related to $u$ via \eqref{eq:u=eta.phi}, rather than work with the PDE \eqref{eq:hjb.5b} for $u$.  As we have not yet performed an asymptotic analysis of PDE \eqref{eq:hjb.5} for $\psi$, we introduce
\begin{align}
0
	&=	\( \d_t + \Act^\eps \) \psi^\eps + \Bc^\eps (\psi^\eps) , &
\psi^\eps(T,x,y,\nu)
	&=	-\gam \nu \varphi(x) , &
\eps
	&\in [0,1] . \label{e1}
\end{align}
Following the asymptotic analysis in Section \ref{sec:psi.asymptotics}, we seek a solution $\psi^\eps$ of the form
\begin{align}
\psi^\eps
	&=	\sum_{n=0}^\infty \eps^n \psi_n . \label{e2}
\end{align}
We insert expansion \eqref{e2} for $\psi^\eps$ into PDE \eqref{e1} and collect terms of like powers in $\eps$.  At lowest order we obtain
\begin{align}
\Oc(1):&&
0
	&=	(\d_t + \Act_0) \psi_0 + (1- \rho^2) ( \tfrac{1}{2} \beta^2)_0 (\d_y \psi_0)^2 - (\tfrac{1}{2}\lam^2)_0 , &
\psi_0(T,x,y,\nu)
	&=	-\gam \nu \varphi(x)  , \label{e3} 
\end{align}
Suppose that $\psi_0$ is a function of $(t,x,\nu)$ only.  Then \eqref{e3} becomes
\begin{align}
\Oc(1):&&
0
	&=	(\d_t + \Act_0) \psi_0  - (\tfrac{1}{2}\lam^2)_0 , &
\psi_0(T,x,\nu)
	&=	-\gam \nu \varphi(x)  . \label{e4}
\end{align}
One can easily verify that the solution $\psi_0$ of \eqref{e4} is
\begin{align}
\psi_0
	&=	\eta_0 - \gam \nu u_0 . \label{e5}
\end{align}
Thus, we identify the expression in \eqref{e5} as the solution $\psi_0$ of \eqref{e3}.
Now, we subtract \eqref{e4} from \eqref{eq:hjb.5} and obtain
\begin{align}
0
	&=	(\d_t + \Act) ( \psi - \psi_0 ) -  (\Act_0 - \Act) \psi_0 \\ & \quad
			+ (1-\rho^2)(\tfrac{1}{2} \beta^2) (\d_y \psi)^2 - \( (\tfrac{1}{2} \lam^2) - (\tfrac{1}{2} \lam^2)_0 \) , &
0
	&=	( \psi - \psi_0 )(T,x,y,\nu) . \label{e6}
\end{align}
By Duhamel's principle, we can express $(\psi - \psi_0)$ as follows
\begin{align}
( \psi - \psi_0 )(t,x,y)
	&=	- \int_t^T \dd s \int_{\Rb^2} \dd \xi \dd \eta \, \Gam(t,x,y;s,\xi,\eta) (\Act_0 - \Act) \psi_0(s,\xi,\eta) 
				\Big\} =: A \\ & \quad
			+ (1-\rho^2) \int_t^T \dd s \int_{\Rb^2} \dd \xi \dd \eta \, \Gam(t,x,y;s,\xi,\eta) (\tfrac{1}{2} \beta^2)(s,\xi,\eta) (\d_y \psi(s,\xi,\eta))^2 
				\Big\} =: B \\ & \quad
			- \int_t^T \dd s \int_{\Rb^2} \dd \xi \dd \eta \, \Gam(t,x,y;s,\xi,\eta) \( (\tfrac{1}{2} \lam^2)(\xi,\eta) - (\tfrac{1}{2} \lam^2)_0 \) .
				\Big\} =: C 
\end{align}
where $\Gam$ is the fundamental solution corresponding to parabolic operator $(\d_t + \Act)$.  We wish to bound $A$, $B$, and $C$.  To do this, we must impose some conditions on the coefficients of $\Act$ and the terminal data $\varphi$.
\begin{assumption}
\label{assumption.X}
Denote by $C_b^{m,1}$ the class of bounded functions whose derivatives up to order $m$ are Lipschitz continuous.  Define the following norms
\begin{align}
\norm{f}_{C_b^{m,1}}
	&=	\sum_{k=0}^m \norm{\d^k f}_\infty , &
	&\text{with}&
\norm{f}_{C_b^{-1,1}}
	&=	\norm{f}_\infty .
\end{align}
We assume there exists a constant $M>0$ such that the following holds: \\
(i) \emph{Uniform ellipticity}: 
\begin{align}
(1/M) |\xi|^2 
	&\leq \sum_{i,j=1}^2 \mathbf{A}_{i,j}(x,y) \xi_i \xi_j \leq M |\xi^2|, &
	&\forall \xi, (x,y) \in \Rb^2 , &
\mathbf{A}
	&=	\( \begin{array}{cc}
			(\tfrac{1}{2}\sig^2) & (\rho \sig \beta) \\
			(\rho \sig \beta) & (\tfrac{1}{2} \beta^2) 
			\end{array} \) .
\end{align}
(ii) \emph{Regularity and boundedness}: The coefficients $(\tfrac{1}{2}\lam^2)$, $(\tfrac{1}{2}\beta^2)$, $(\tfrac{1}{2}\sig^2)$, $(\rho \sig \beta)$ and $(c - \rho \beta \lam)$ are $C_b^{m,1}(\Rb)$ with their norms $\norm{\cdot}_{C_b^{m,1}}$ bounded by $M$.\\
(iii) The terminal data satisfies $\varphi \in C_b^{k-1,1}$ for some $0 \leq k \leq 2$.
\end{assumption}

In the rest of Section \ref{sec:accuracy.X} (and Section \ref{sec:accuracy.X} \emph{only}), Assumption \ref{assumption.X} is in force.
Let us introduce $\tau = T-t$.
Using classical properties of $\Gam$, \cite[Theorem 3.10]{lorig-pagliarani-pascucci-4} establishes that $A = \Oc(\tau^{\tfrac{1+k}{2}})$ and $C = \Oc(\tau^{\tfrac{3}{2}})$ as $\tau \to 0$.  In order to bound $B$, we note that $(\d_y \psi)$ is uniformly bounded in $[0,T] \times \Rb^2$ by \cite[Theorem V.8.1]{ladyzhenskaia1988linear}.  And thus, by Assumption \ref{assumption.X} (ii), we have that $C = \Oc(\tau)$ as $\tau \to 0$.  Since $( \psi - \psi_0 ) = A+B+C$, it follows that 
$( \psi - \psi_0 ) = \Oc(\tau) \vee \Oc(\tau^{\tfrac{1+k}{2}})$.
A similar analysis for $\eta$ reveals that $( \eta - \eta_0 ) = \Oc(\tau)$.  Thus, we have
\begin{align}
u - \ub_0
	&=	u - u_0 
	=		\frac{1}{\gam \nu} \Big( (\eta - \psi) - (\eta_0 - \psi_0) \Big)
	=		\frac{1}{\gam \nu} \Big( (\eta - \eta_0) - (\psi - \psi_0) \Big) \\
	&=		\frac{1}{\gam \nu} \(  \Oc(\tau) \vee \Oc(\tau^{\tfrac{1+k}{2}}) - \Oc(\tau) \)
	=		\Oc(\tau) \vee \Oc(\tau^{\tfrac{1+k}{2}})  .
\end{align}
Hence, we have established that the zeroth order indifference price approximation $\ub_0$ satisfies
\begin{align}
|u - \ub_0 |
	&=	\Oc(\tau) \vee \Oc(\tau^{\tfrac{1+k}{2}}) &
	&		\text{as $\tau \to 0$} . \label{eq:accuracy.u0}
\end{align}
The steps needed to establish the accuracy of the $m$th order approximations $\ub_m$ are similar.
First, find the PDEs satisfied by $(\psi - \psib_m)$ and $(\eta - \etab_m)$.  Next, use Duhamel's principle to express $(\psi - \psib_m)$ and $(\eta - \etab_m)$ as integrals involving $\Gam$.  Then, use classical properties of $\Gam$ to bound the result.  Finally, use 
\begin{align}
u - \ub_m
	=		\frac{1}{\gam \nu} \Big( (\eta - \etab_m) - (\psi - \psib_m) \Big) , 
\end{align}
to bound $(u - \ub_m)$.  The details of the accuracy proof for higher order approximations are obviously much more involved due to the increased number of terms that must be analyzed.

%
%

\subsection{Implied volatility asymptotics}
\label{sec:impvol}
In this section, we translate our expansion for the indifference price of a call into an expansion for implied volatility.  
\begin{assumption}
\label{ass:imp.vol}
Throughout this section we fix an LSV model $(X,Y)$, a time $t$, a maturity date $T>t$, initial values $(X_t,Y_t)=(x,y)$, a call payoff $\varphi(X_T)=(\ee^{X_T} - \ee^k)^+$, and a number of contracts $\nu$.
\end{assumption}
\noindent
Assumption \ref{ass:imp.vol} fixes an indifference price $u$ as the solution of \eqref{eq:hjb.5b}.  Our goal is to find the implied volatility corresponding to \emph{this particular indifference call price}.  To ease notation, in what follows, we will suppress much of the dependence on $(t,T,x,y,k,\nu)$. However, the reader should keep in mind that the implied volatility of the option under consideration \emph{does} depend on $(t,T,x,y,k,\nu)$, even if this is not explicitly indicated.  To begin, we provide a definition for implied volatility, which will be fundamental throughout this section.
\begin{definition}
\label{def:imp.vol.def} 
For a fixed $(t,x,T,k)$, the \emph{implied volatility} corresponding to call price $p \geq (\ee^x - \ee^k)$ is defined as the unique strictly positive real solution $\II$ of the equation
\begin{align}
u^\BS(\II)    
	&=	p.   \label{eq:imp.vol.def}
\end{align}
where $u^\BS$, defined in \eqref{eq:uBS}, is regarded as a function of volatility only. 
\end{definition}
\noindent
Our goal is find the implied volatility $\II$ corresponding to the indifference price $u$ of a call option.  To do this, we introduce \emph{modified implied volatility} $\II^\eps$, the solution to 
\begin{align}
u^\BS(\II^\eps)    
	&=	u^\eps , &
\eps
	&\in [0,1] , \label{eq:imp.vol.eps}
\end{align}
where $u^\eps$ solves \eqref{eq:eta.eps.pde}.  We will seek an asymptotic solution $\II^\eps$ to \eqref{eq:imp.vol.eps} by expanding $\II^\eps$ in powers of $\eps$:
\begin{align}
\II^\eps
	&=	\sum_{i=0}^\infty \eps^i \II_i . \label{eq:sig.expand}
\end{align}
Our asymptotic solution to $u^\BS(\II)=u$,
which is the case we are interested in, will be obtained by setting $\eps=1$.  
We insert expressions \eqref{eq:u.expand} and \eqref{eq:sig.expand} into \eqref{eq:imp.vol.eps}, expand both sides in powers of $\eps$ and collect terms of like order.  We obtain
\begin{align} 
\Oc(1):&&
u_0
	&=	u^\BS(\II_0) , \\
\Oc(\eps^k):&&
u_m
	&=		\II_m \d_\sig u^{\BS}(\II_0)
				+ \sum_{k=2}^m \frac{1}{k!}\Big( \sum_{I_{m,k}} \prod_{j=1}^k \II_{i_j} \Big)  \d_\sig^k	u^{\BS}(\II_0) , &
m 
	&\geq 1 , \\ & &
I_{m,k}
	&:=	\Big\{ i = (i_1,i_2,\cdots,i_k) \in \mathbb{N}^k : \sum_{j=1}^k i_j = m \Big\} . \label{eq:Imk}
\end{align}
Noting from Theorem \ref{thm:u} that $u_0 = u^\BS(\sig_0)$ we can solve the above equations recursively
\begin{align}
\Oc(1):&&
\II_0
	&=	\sig_0 , \label{eq:I0} \\
\Oc(\eps^m):&&
\II_m
	&=	\frac{1}{\d_\sig u^{\BS}(\II_0)} 
			\bigg( u_m - \sum_{k=2}^m \frac{1}{k!}\Big( \sum_{I_{m,k}} \prod_{j=1}^k \II_{i_j} \Big)  \d_\sig^k	u^{\BS}(\II_0)  \bigg) , &
m
	&\geq 1 . \label{eq:Im}
\end{align}
Explicitly, the order $\Oc(\eps)$ and $\Oc(\eps^2)$ terms are given by
\begin{align}
\Oc(\eps):&&
\II_1
	&=	\frac{1}{\d_\sig u^\BS(\II_0)} u_1,  &
\Oc(\eps^2):&&
\II_2
	&= 	\frac{1}{\d_\sig u^\BS(\II_0)} \Big( u_2 - \tfrac{1}{2} \II_1^2 \d_\sig^2 u^\BS(\II_0) \Big) ,  \label{eq:I2}
\end{align}
Having served its purpose, we now set $\eps=1$, and make the following definition
\begin{definition}
The \emph{$m$th-order approximation of the implied volatility} $\IIb_m$ corresponding to the indifference price $u$ of a call option is defined as
\begin{align}
\IIb_m
	&=	\sum_{i=0}^m \II_i , \label{eq:Ibar.2}
\end{align}
where $\II_0$ and $\II_i$ $(i \geq 1)$ are given by \eqref{eq:I0} and \eqref{eq:Im}, respectively.
\end{definition}  
\begin{remark}
\label{rmk:iv.exact}
In general, our expansion for the indifference price $u$ is asymptotic -- not convergent.  And thus, our implied volatility expansion is asymptotic as well.  However, it is worth noting that if one has a convergent expansion for a call price $u^\eps$ of the form 
\begin{align}
u^\eps
	&=	u^{\BS}(\sig_0) + \sum_{n=0}^\infty \eps^n u_n ,
\end{align}
then the resulting implied volatility expansion $\II^\eps = \sum_{n=0}^\infty \eps^n \II_n$ is exact so long as $u^\eps$ falls within the radius of convergence of the power series expansion of $[u^{\BS}]^{-1}$ about $u^{\BS}(\sig_0)$.  For the interested reader, convergence is proved in \cite[Theorem 4.3 and Remark 9]{lorig-3}.
\end{remark}
\noindent
Note that expression for $\II_i$ $(i \geq 1)$ contains $u_j$ $(j \leq i-1)$.  Since $u_i$ and $(i \geq 1)$ are computed as differential operators acting on $u_0=u^\BS(\sig_0)$, it is difficult to discern how the implied volatility surface behaves as a function of time to maturity $T-t$ and $\log$ strike $k$.  The following proposition provides an explicit expression for $\II_1$ and $\II_2$ in terms of the $\log$-strike $k$ and time to maturity $\tau:=T-t$.
\begin{proposition}
\label{thm:imp.vol}
Let Assumption \ref{ass:imp.vol} hold.  Denote time to maturity by $\tau := T-t$ and $\log$ moneyness by $L = k-x$.  Fix the expansion point of the Taylor series expansion $(\xb,\yb)=(x,y)$.  Define
\begin{align}
a
	&=	\tfrac{1}{2} \sig^2 , &
b
	&=	\tfrac{1}{2} \beta^2 , &
f
	&=	c - \rho \beta \lam , &
g
	&=	\rho \sig \beta . \label{eq:abfg}
\end{align}
Assume $(\tfrac{1}{2}\lam^2)$ is $C^1(\Rb)$ and $a$, $b$, $f$ and $g$ are $C^2(\Rb^2)$.
Then, $\II_1$ and $\II_2$, defined in \eqref{eq:I2}, are given by
\begin{align}
\II_1
	&=	\II_{1,0} + \II_{0,1} , &
\II_2
	&=	\II_{2,0} + \II_{1,1} + \II_{0,2} + \II_2^\Ind ,
\end{align}
where
\begin{align}
\II_{1,0}
    &=  \(  \frac{a_{1,0}}{2 \sigma _0} \) L, &
\II_{0,1}
    &=  \tau \( \frac{a_{0,1} \left(g_{0,0}+2 f _{0,0}\right)}{4 \sigma _0} \)
            + \( \frac{a_{0,1} g_{0,0}}{2 \sigma _0^3} \) L ,
\end{align}
and
\begin{align}
\II_{2,0}
    &=  \tau \Big( \frac{1}{12} \sigma _0 a_{2,0}-\frac{a_{1,0}^2}{8 \sigma _0} \Big)
				+ \tau^2 \Big( -\frac{1}{96} \sigma _0 a_{1,0}^2 \Big)
				+ \Big( \frac{2 \sigma _0^2 a_{2,0}-3 a_{1,0}^2}{12 \sigma _0^3} \Big) L^2 , \\
\II_{1,1}
    &=  \frac{\tau}{12 \sigma _0^3} \Big( \sigma _0^2 a_{1,1} g_{0,0}+a_{0,1} \left(a_{1,0} g_{0,0}-2 \sigma _0^2 g_{1,0}\right) \Big) 
				+ \frac{\tau^2}{48 \sigma _0} \Big( -a_{0,1} a_{1,0} g_{0,0} \Big) \\ &\quad
				+ \frac{\tau}{24 \sigma _0^3} \Big( 2 \sigma _0^2 a_{1,1} \left(g_{0,0}+2 f_{0,0}\right)+a_{0,1} \left(2 \sigma _0^2 \left(g_{1,0}+2 f_{1,0}\right)-5 a_{1,0} \left(g_{0,0}+2 f_{0,0}\right)\right) \Big) L \\ &\quad
				+ \frac{1}{6 \sigma _0^5} \Big( \sigma _0^2 a_{1,1} g_{0,0}+a_{0,1} \left(\sigma _0^2 g_{1,0}-5 a_{1,0} g_{0,0}\right) \Big) L^2, \\
\II_{0,2}
    &=   \frac{\tau}{24 \sigma _0^5} \Big( 4 \sigma _0^2 a_{0,2} \left(3 \sigma _0^2 b_{0,0}-g_{0,0}^2\right)+a_{0,1} \left(a_{0,1} \left(9 g_{0,0}^2-8 \sigma _0^2 b_{0,0}\right)-4 \sigma _0^2 g_{0,0} g_{0,1}\right) \Big) \\ &\quad
				+ \frac{\tau^2}{24 \sigma _0^3} \Big( a_{0,1} \left(-2 \sigma _0^2 a_{0,1} b_{0,0}+g_{0,0} \left(\sigma _0^2 \left(g_{0,1}+2 f_{0,1}\right)-3 a_{0,1} f_{0,0}\right)\right)
					\\ & \quad \quad
				+	a_{0,1} f_{0,0} \left(2 \sigma _0^2 \left(g_{0,1}+2 f_{0,1}\right)-3 a_{0,1} f_{0,0}\right)
				+\sigma _0^2 a_{0,2} \left(g_{0,0}+2 f_{0,0}\right){}^2 \Big) \\ &\quad
				+ \frac{\tau}{24 \sigma _0^5} \Big( a_{0,1} \left(g_{0,0} \left(4 \sigma _0^2 \left(g_{0,1}+f_{0,1}\right)-18 a_{0,1} f_{0,0}\right)-9 a_{0,1} g_{0,0}^2+4 \sigma _0^2 g_{0,1} f_{0,0}\right) 
				\\ & \quad \quad
				+ 4 \sigma _0^2 a_{0,2} g_{0,0} \left(g_{0,0}+2 f_{0,0}\right)
				\Big) L \\ &\quad
				+ \frac{1}{12 \sigma _0^7} \Big( a_{0,1} \left(a_{0,1} \left(4 \sigma _0^2 b_{0,0}-9 g_{0,0}^2\right)+2 \sigma _0^2 g_{0,0} g_{0,1}\right)+2 \sigma _0^2 a_{0,2} g_{0,0}^2 \Big)L^2 , \\
\II_{2}^\Ind
	&=	(1 - \rho^2) (\tfrac{1}{2} \beta^2)_0 \bigg\{  \frac{- 2 (\tfrac{1}{2}\lam^2)_{0,1} ( \tfrac{1}{2} \sig^2)_{0,1} \tau^2 }{3 \sig_0}
			\\ & \quad
			- \gam \nu  \frac{(\tfrac{1}{2}\sig^2)_{0,1}^2 }{\sig_0^2\sqrt{2\pi}}
			\frac{\ee^k}{\sqrt{\tau}} \exp \( \frac{\( L + \tfrac{1}{2}\sig_0^2 \tau \)^2 }{2 \sig_0^2\tau } \)
			\int_t^T \dd t_1  \frac{(T-t_1)^{3/2}}{\sqrt{\tau+t_1-t}}
			\exp \(\frac{-\(L+\frac{1}{2} \sig_0^2 \tau\)^2}{\sig_0^2 (\tau+t_1-t)} \) \bigg\} . \label{eq:I2.indiff}
\end{align}
\end{proposition}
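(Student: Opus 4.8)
The plan is to start from the recursion \eqref{eq:I0}--\eqref{eq:I2}, which already reduces the problem to evaluating $u_1$ and $u_2$ and dividing by the Black--Scholes Greeks $\d_\sig u^\BS(\sig_0)$ and $\d_\sig^2 u^\BS(\sig_0)$, and to feed into it the explicit operator representations of $u_1$ and $u_2$ furnished by Proposition \ref{thm:u}. By \eqref{eq:u1.theorem}, \eqref{eq:u2.0} and the first line of \eqref{eq:u2.indiff.2}, the quantities $u_1$, $u_2^0$ and the first piece of $u_2^\Ind$ are time integrals of the operators $\Gc_1$, $\Gc_2$, the iterated operator $\Gc_1(t,t_1)\Gc_1(t,t_2)$ and $\d_y\Gc_1$, all applied to the single function $u^\BS(t,\cdot;\sig_0)$, while the second piece of $u_2^\Ind$ is already an explicit function. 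So I would proceed in three stages: (i) carry out the elementary $t_1$- and $t_2$-integrals, which produce coefficients polynomial in $\tau:=T-t$; (ii) evaluate the resulting $x$-differential operators on $u^\BS(t,\cdot;\sig_0)$; (iii) divide by the vega $\d_\sig u^\BS(t,\cdot;\sig_0)$ and organise the output by powers of the log moneyness $L:=k-x$.

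The engine for stages (ii)--(iii) is the classical family of Black--Scholes ``transference'' identities. Differentiating \eqref{eq:uBS}--\eqref{eq:d} directly gives the vega--gamma relation $\d_\sig u^\BS(t,x;\sig_0)=\sig_0\tau\,(\d_x^2-\d_x)u^\BS(t,x;\sig_0)$, as well as $\d_x u^\BS=\ee^x\Phi(d_+)$; differentiating once more in $x$ (using $\d_x L=-1$) and in $\sig$ expresses every $\d_x^j u^\BS(t,x;\sig_0)$ and $\d_\sig^2 u^\BS(t,x;\sig_0)$ as $\d_\sig u^\BS(t,x;\sig_0)$ times a rational function of $L$, $\tau$ and $\sig_0$. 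What makes stage (ii) run cleanly is that each operator $\Gc_i(t,t_1)=\Act_i(\Xc(t,t_1),\Yc(t,t_1))$, acting on a function of $x$ alone, collapses to the $(\d_x^2-\d_x)$-part of $\Act_i$ wrapped by $\Xc$ and $\Yc$ --- the $\d_y$-terms of $\Act_i$ and the $\d_y$-component of $\Yc$ annihilate $u^\BS$ --- and $(\d_x^2-\d_x)$ is exactly the combination proportional to the vega. Thus, after the time integrals and division by the vega, $u_1$ and $u_2^0$ become $\d_\sig u^\BS(\sig_0)$ times polynomials in $L$ with coefficients rational in $\sig_0$ and polynomial in $\tau$. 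Inserting these, together with $\d_\sig^2 u^\BS(\sig_0)/\d_\sig u^\BS(\sig_0)$, into \eqref{eq:I2} and sorting the resulting terms by which Taylor coefficient ($a_{i,j}$, $b_{i,j}$, $f_{i,j}$, $g_{i,j}$) multiplies them delivers $\II_1=\II_{1,0}+\II_{0,1}$ and the $\II_{2,0}+\II_{1,1}+\II_{0,2}$ part of $\II_2$; the regularity hypotheses are exactly those inherited from Propositions \ref{thm:phi} and \ref{thm:u}.

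For the nonlinear correction $\II_2^\Ind=u_2^\Ind/\d_\sig u^\BS(\sig_0)$ --- note that $\II_1$ has no $\Ind$ part, so the $-\tfrac12\II_1^2\,\d_\sig^2 u^\BS(\sig_0)$ term of \eqref{eq:I2} feeds only $\II_{2,0},\II_{1,1},\II_{0,2}$ --- I would treat the two summands of \eqref{eq:u2.indiff.2} separately. In the first summand I substitute $\d_y\eta_1(t_1)=-(\tfrac12\lam^2)_{0,1}(T-t_1)$, read off \eqref{eq:eta1}, and use \eqref{eq:dy.G1} to replace $\d_y\Gc_1(t,t_2)$ by $(\tfrac12\sig^2)_{0,1}(\d_x^2-\d_x)$; the remaining time integrals then collapse to $\int_t^T(T-t_1)^2\,\dd t_1=\tfrac13\tau^3$, which together with the vega--gamma relation yields the first, $\gam$-independent, term of \eqref{eq:I2.indiff}. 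In the second summand, which is already explicit, I only need to divide by the vega: the identity $\ee^x\Phi'(d_+(\sig_0))=\ee^k\Phi'(d_-(\sig_0))$ rewrites $\ee^{2k}/\d_\sig u^\BS(\sig_0)$ as $\ee^k\sqrt{2\pi/\tau}\,\exp\bigl(\tfrac12 d_-^2(\sig_0)\bigr)$, and $\tfrac12 d_-^2(\sig_0)=(L+\tfrac12\sig_0^2\tau)^2/(2\sig_0^2\tau)$ produces the second, $\gam$-proportional, term of \eqref{eq:I2.indiff} --- which carries the sign of $-\nu$, as anticipated in the remark after Proposition \ref{thm:u}.

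The calculation is conceptually routine but heavy: the coefficient $\II_{0,2}$ alone collects contributions from $\Gc_2$, from the iterated operator $\Gc_1(t,t_1)\Gc_1(t,t_2)$ in \eqref{eq:u2.0}, and from the $-\tfrac12\II_1^2\,\d_\sig^2 u^\BS(\sig_0)$ correction, so the bookkeeping is most safely done with a computer algebra system. The one step that needs genuine care rather than stamina is the reduction in stage (ii): the wrapping by $\Xc$ and $\Yc$ introduces bare $\d_x$-derivatives, not merely the combination $(\d_x^2-\d_x)$, so one must verify order by order that the ``delta-like'' pieces --- those not proportional to the vega --- recombine consistently once everything is divided by $\d_\sig u^\BS(\sig_0)$, so that $\II_1$ and $\II_2$ emerge as genuine polynomials in $L$ (times, in the $\Ind$ case, the extra Gaussian factor appearing in \eqref{eq:I2.indiff}).
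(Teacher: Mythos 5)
Your proposal is correct and follows essentially the same route as the paper's Appendix C: reduce $\Gc_i$ acting on $x$-only functions to operators of the form $C_n\,\d_x^n(\d_x^2-\d_x)$ applied to $u^\BS(\sig_0)$, divide by the vega via the Vega--Gamma relation (the paper organizes the resulting ratios as Hermite polynomials in $z$, which is the same family of ``transference'' identities you invoke), and handle the two pieces of $u_2^\Ind$ exactly as you describe, including the $\tfrac{1}{3}\tau^3$ time integral and the rewriting of $\ee^{2k}/\d_\sig u^\BS(\sig_0)$. The recombination issue you flag at the end resolves automatically in this framework, since every term is already proportional to $(\d_x^2-\d_x)u^\BS(\sig_0)$ before the division.
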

\begin{proof}
The proof is given in Appendix \ref{sec:imp.vol.proof}.
\end{proof}
\begin{remark}
Observe that we have isolated the term $\II_2^\Ind=u_2^\Ind/\d_\sig u^\BS$.  This term results from the nonlinear aspect of indifference pricing and gives rise to a bid-ask spread in the implied volatility surface.  Specifically, the approximate bid-ask spread is given by twice the absolute value of the second term in \eqref{eq:I2.indiff}.
\end{remark}
\begin{remark}
As $\tau \to 0$ we have $\II_2^\Ind \to 0$ because for $t_1 \in (t,T)$ we have
\begin{align}
\exp\( \frac{1}{2 \tau} - \frac{1}{\tau+t_1-t} \) 
	&\leq \exp\( \frac{1}{2 \tau} - \frac{1}{2 \tau} \) = 1 , &
	&\text{and}&
\frac{(T-t_1)^{3/2}}{\sqrt{\tau}\sqrt{\tau+t_1-t}} &\leq \frac{\tau^{3/2}}{\tau} = \sqrt{\tau} .
\end{align}
\end{remark}
\begin{remark}
As we found with $u_2^\Ind$, the second term in $\II_2^\Ind$ has the sign of $(-\nu)$.  Thus, the approximation of the implied volatility surface of a European call buyer will be strictly \emph{below} the approximation of the implied volatility surface of a European call writer.  
\end{remark}

\section{Options on non-traded assets}
\label{sec:Y}
In Section \ref{sec:asymptotics} we derived an expression for the approximate indifference price of a European-style claim whose payoff function $\varphi$ depended only on the terminal value of a traded asset $X_T$.  We now consider a European-style claim whose payoff function $\varphi$ depends only on the terminal value of a non-traded underlying $Y_T$.
Examples of options on non-traded underlyings include (i) employee stock options (see \cite{henderson2005impact,leung2009accounting}), where an executive is awarded options on his employer's stock, but is limited from trading the stock due to regulatory rules (ii) options on the price of a physical commodity, such a gold, silver or soybeans (\cite{geman2009commodities}), and (iii) weather derivatives, which have payoffs that depend on realized weather statistics, such as average or minimum temperature and accumulated rainfall (\cite{alaton2002modelling}).
\begin{assumption}
\label{ass:y}
Throughout Section \ref{sec:Y}, we consider dynamics of the form
\begin{align}
\begin{aligned}
\dd X_t
	&=	\Big( \mu(Y_t) - \tfrac{1}{2}\sig^2(Y_t) \Big)  \dd t + \sig(Y_t)  \dd B_t^X , \\
\dd Y_t
	&=	c(Y_t) \dd t + \beta(Y_t) \Big( \rho  \dd B_t^X + \sqrt{1- \rho^2}  \dd B_t^Y \Big) . 
\end{aligned} \label{eq:SV} 
\end{align}
We further assume the payoff function $\varphi$ of the European claim is a function of $y$ only.
\end{assumption}
\noindent

\subsection{PDE Asymptotics}
Our goal is to find the indifference price $u$ of a European-style contingent claim with payoff $\varphi(Y_T)$.
As there are no explicit formulas for the indifference price 
under general dynamics of the form \eqref{eq:SV}, we shall seek an approximation for $u$.
Since $Y$ is not traded, it is not constrained to be a martingale under any risk-neutral measure.  Thus, it makes no sense to derive implied volatility asymptotics for the claim with payoff $\varphi(Y_T)$.
\par
In Section \ref{sec:asymptotics} we obtained an approximation for the indifference price $u$ in two steps: first, we obtained an approximation for $\eta$, the solution of PDE \eqref{eq:hjb.5a}.  Then, we found an approximation for $u$, the solution of PDE \eqref{eq:hjb.5b}.  In this section, we follow an alternative approach.  Here, we shall seek an approximation for $\psi$, the solution of PDE \eqref{eq:hjb.5}.  Then, using $\eta = \psi |_{\nu=0}$ we will obtain an approximation for $u$ via \eqref{eq:u=eta.phi}.
\par
As in Section \ref{sec:asymptotics}, to find an approximation for $\psi$, we introduce $\psi^\eps$, the solution to
\begin{align}
0
	&=	\( \d_t + \Act^\eps \) \psi^\eps + \Bc^\eps (\psi^\eps) , &
\psi^\eps(T,y,\nu)
	&=	-\gam \nu \varphi(y), &
\eps
	&\in [0,1] , \label{eq:PDE.zeta}
\end{align}
where $\psi^\eps$ depends only on $(t,y,n)$ only since neither the coefficients $(\mu,\sig,c,\beta)$ nor in the payoff function $\varphi$ depend on $x$.  As always, we will seek an asymptotic solution to \eqref{eq:PDE.zeta} by expanding in powers of $\eps$.  The approximate solution to PDE \eqref{eq:hjb.5}, which is the case we are interested in, will be obtained by setting $\eps=1$.
\par
Comparing PDE \eqref{eq:PDE.zeta} with \eqref{eq:psi.eps.pde}, the similarity of these two PDEs might suggest that one can obtain an approximation solution to \eqref{eq:PDE.zeta} with only minor modifications to the approximation we previously obtained for the solution to PDE \eqref{eq:psi.eps.pde}. 
However, this is not the case.  The difficulty in finding an approximation solution to \eqref{eq:PDE.zeta} arises from the terminal condition $\psi^\eps(T,y,\nu) =	-\gam \nu \varphi(y)$.  Because the terminal condition depends on $y$, the order zero approximation $\psi_0$ must also depend on $y$.  
This was not the case for in PDE \eqref{eq:psi.eps.pde}, where the terminal condition $\eta^\eps(T,x,y)=0$ led to a zeroth order approximation $\eta_0(t)$ independent of $y$.  Since the zeroth order solution to \eqref{eq:PDE.zeta} must depend on $y$, the nonlinear term $(\d_y \psi^\eps)^2$, which appears in $\Bc^\eps(\psi^\eps)$, remains in the zeroth order PDE.  However, the nonlinearity in PDE \eqref{eq:PDE.zeta} can be removed.  Following \cite{zariphopoulou2001} we set
\begin{align}
\psi^\eps
	&=	\frac{1}{(1-\rho^2)} \log \xi^\eps , \label{eq:psi.xi}
\end{align}
Inserting \eqref{eq:psi.xi} into \eqref{eq:PDE.zeta} we see that $\xi^\eps$ satisfies
\begin{align}
0
	&=	(\d_t + (\Act')^\eps ) \xi^\eps , &
\xi^\eps(T,y,\nu)
	&=	\theta(y,\nu) , \label{eq:xi.eps.pde} \\ 
(\Act')^\eps
	&:=	\Act^\eps - (1-\rho^2)(\tfrac{1}{2}\lam^2)^\eps, &	
\theta(y,\nu)
	&:=	\exp \( -\gam \nu (1-\rho^2) \varphi(y) \) .
\end{align}
Note that \eqref{eq:xi.eps.pde} is a linear parabolic Cauchy problem.  We can obtain an approximation for $\xi^\eps$ using the Taylor series expansion methods described in Section \ref{sec:asymptotics}.  
Noting that $(\Act')^\eps$ can be written as a power series in $\eps$, we suppose the solution $\xi^\eps$ can also be written in this form
\begin{align}
(\Act')^\eps
	&=	\sum_{i=0}^\infty \eps^i \Act_i' , &
\Act_i'
	&=	\Act_i - (1-\rho^2)(\tfrac{1}{2}\lam^2)_i , &
\xi^\eps
	&=	\sum_{i=0}^\infty \eps^i \xi_i  \label{eq:xi.expand}.
\end{align}
Inserting \eqref{eq:xi.expand} into \eqref{eq:xi.eps.pde} and collecting terms of like powers of $\eps$, we find
\begin{align}
\Oc(1):&&
0
	&= (\d_t + \Act_0')\xi_0 , &
\xi_0(T,y,\nu)
	&=	\theta(y,\nu) , 	\label{eq:xi0.pde} \\
\Oc(\eps^m):&&
0
	&= (\d_t + \Act_0')\xi_m + \sum_{k=1}^m \Act_k' \xi_{m-k} , &
\xi_m(T,y,\nu)
	&=	0  . \label{eq:xim.pde} 
\end{align}
The above sequence of nested Cauchy problems has been solved explicitly in \cite{lorig-pagliarani-pascucci-2}.  We briefly review the results in the next Section.

\subsection{Explicit expressions}
Let $\Pc_0'(t,T)$ be the semigroup of operators generated by $\Act_0'$ and construct $\Gc_i'(t,T)$ from $\Act_i'$ in the same manner that $\Gc_i(t,T)$ in \eqref{eq:Gc} is constructed from $\Act_i$.  Specifically, we define
\begin{align}
\Pc_0'(t,T)
	&:=	\exp \Big( -(T-t) (1-\rho^2)(\tfrac{1}{2}\lam^2)_0 \Big) \Pc_0(t,T) , &
\Gc_i'(t,T)
	&:=	\Act_i'(\Yc(t,T)) .
\end{align}
We are now in a position to provide and explicit expression for $\xi_i$ $(i \geq 0)$.
\begin{proposition}
\label{thm:xi}
Assume the coefficients $(\tfrac{1}{2}\lam^2)$, $(c - \rho \beta \lam)$ and $(\tfrac{1}{2}\beta^2)$ are $C^m(\Rb)$ and $\theta(\cdot,\nu)$ is at most exponentially growing.  Then the unique classical solutions of \eqref{eq:xi0.pde} and \eqref{eq:xim.pde} are given by (omitting the arguments $(y,\nu)$ for clarity)
\begin{align}
\xi_0(t)
	&=	\Pc_0'(t,T) \theta , &
\xi_m(t)
	&=	\Lc_m'(t,T) \xi_0(t) ,
\end{align}
where the operator $\Lc_0'(t,T)$ is given by
\begin{align}
\Lc_m'(t,T)
	&:=	\sum_{k=1}^m \int_t^T \dd t_1 \int_{t_1}^T \dd t_2 \cdots \int_{t_{k-1}}^T \dd t_k \sum_{I_{m,k}}
			\Gc_{i_1}'(t,t_1) \Gc_{i_2}'(t,t_2) \cdots \Gc_{i_k}'(t,t_k) , 
\end{align}
where $I_{m,k}$ is defined in \eqref{eq:Imk}.
\end{proposition}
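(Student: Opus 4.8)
The plan is to establish the two formulas by induction on the order $m$, verifying in each case that the proposed expression is a classical solution of the relevant Cauchy problem and then appealing to the uniqueness hypothesis.

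For the base case $m=0$, the point to exploit is that $\Act_0' = \Act_0 - (1-\rho^2)(\tfrac{1}{2}\lam^2)_0$ differs from $\Act_0$ only by the \emph{constant} $(1-\rho^2)(\tfrac{1}{2}\lam^2)_0$ (recall from \eqref{eq:chi.n} that $\chi_0 = \chi(\xb,\yb)$; note also that in the present section all coefficients depend on $y$ only, so that $\Act_0$, restricted to functions of $y$, is a one-dimensional constant-coefficient generator). Consequently, the fundamental solution of $(\d_t + \Act_0')$ is $\Gam_0$ multiplied by the scalar $\exp\bigl(-(T-t)(1-\rho^2)(\tfrac{1}{2}\lam^2)_0\bigr)$, i.e.\ the semigroup it generates is exactly $\Pc_0'(t,T)$, and since $\Gam_0$ is the fundamental solution of $(\d_t + \Act_0)$ a direct computation gives $(\d_t + \Act_0')\,\Pc_0'(t,T)\theta = 0$ with terminal value $\theta$. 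The assumption that $\theta(\cdot,\nu)$ is at most exponentially growing is what makes the defining Gaussian integral converge and yields a genuine (indeed $C^\infty$ in $y$) solution; hence $\xi_0(t) = \Pc_0'(t,T)\theta$.

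For the inductive step, the key tool is the $\Act'$-version of Lemma \ref{lem:com}: for $t \le s$ and $f \in C^{n+2}(\Rb)$ whose derivatives of order $\le n+2$ are at most exponentially growing,
\begin{align}
\Pc_0'(t,s)\,\Act_n'\,f &= \Gc_n'(t,s)\,\Pc_0'(t,s)\,f . \label{eq:prop-comm}
\end{align}
This is a minor variant of Lemma \ref{lem:com}: the scalar exponential in $\Pc_0'$ commutes with everything, and $\Act_n' = \Act_n - (1-\rho^2)(\tfrac{1}{2}\lam^2)_n$ differs from $\Act_n$ only by multiplication by a polynomial in $(y-\yb)$, which is the (trivial) pure-multiplication case already covered by the proof of Lemma \ref{lem:com}; one works with the $y$-restriction, so that only $\Yc$ enters. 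Combining \eqref{eq:prop-comm} with the semigroup property $\Pc_0'(t,s)\Pc_0'(s,u) = \Pc_0'(t,u)$ yields, for $t \le s \le u$, the identity $\Gc_n'(t,s)\,\Pc_0'(t,u) = \Pc_0'(t,s)\,\Act_n'\,\Pc_0'(s,u)$. Iterating this rewrites each operator string $\Gc_{i_1}'(t,t_1)\cdots\Gc_{i_k}'(t,t_k)\,\Pc_0'(t,T)\theta$ appearing in $\Lc_m'(t,T)\xi_0(t)$ as the time-ordered product $\Pc_0'(t,t_1)\Act_{i_1}'\Pc_0'(t_1,t_2)\Act_{i_2}'\cdots\Pc_0'(t_{k-1},t_k)\Act_{i_k}'\Pc_0'(t_k,T)\theta$. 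Peeling off the outermost time-integral together with the leading factor $\Pc_0'(t,t_1)\Act_{i_1}'$, and recognizing the remaining bracket (summed over the multi-indices with $i_1$ held fixed) as $\Lc_{m-i_1}'(t_1,T)\xi_0(t_1) = \xi_{m-i_1}(t_1)$ by the inductive hypothesis (reading $\xi_0(t_1)$ for the term $i_1 = m$), one arrives at
\begin{align}
\Lc_m'(t,T)\,\xi_0(t) &= \int_t^T \dd t_1 \, \Pc_0'(t,t_1) \sum_{j=1}^m \Act_j'\,\xi_{m-j}(t_1) , \notag
\end{align}
which is exactly Duhamel's formula \eqref{eq:duhamel} applied to \eqref{eq:xim.pde}. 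Hence $\Lc_m'(t,T)\xi_0(t)$ is a classical solution of \eqref{eq:xim.pde}, and by uniqueness it coincides with $\xi_m(t)$.

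The bulk of the work is bookkeeping rather than conceptual. Each invocation of \eqref{eq:prop-comm} is applied to an argument already regularized by a Gaussian semigroup (hence $C^\infty$ and still at most exponentially growing), so the only genuine regularity requirement is that $\Act_j'$ be well defined for $j \le m$ — that is, that $\d_y^j$ of $(\tfrac{1}{2}\lam^2)$, $(c-\rho\beta\lam)$ and $(\tfrac{1}{2}\beta^2)$ exist at $\yb$ — which is precisely the hypothesis that these coefficients lie in $C^m(\Rb)$. The delicate points I would expect are the repeated justification of differentiation under the integral sign, the interchange of the nested time-integrals with the polynomial-coefficient operators $\Gc_i'$, and the applications of Fubini's theorem, all of which are controlled by the Gaussian decay of $\Gam_0$ together with the exponential-growth bounds. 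As the statement indicates, these estimates are carried out in detail for this very sequence of Cauchy problems in \cite{lorig-pagliarani-pascucci-2}, to which I would defer for the remaining technicalities.
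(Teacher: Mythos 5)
Your proposal is correct: the paper itself gives no argument here but simply cites \cite[Theorem 7]{lorig-pagliarani-pascucci-2}, and your induction via Duhamel's principle, the semigroup property, and the primed version of Lemma \ref{lem:com} is precisely how that result is proved there (and mirrors the computations the paper carries out by hand for $\eta_i$ and $u_i$ in Appendices \ref{sec:phi.proof} and \ref{sec:u.proof}). So this is essentially the same approach as the paper's (outsourced) proof, with the remaining technical estimates rightly deferred to the same reference.
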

\begin{proof}
See \cite[Theorem 7]{lorig-pagliarani-pascucci-2}.
\end{proof}
\noindent
We now wish to translate our expansion of $\xi^\eps$ into an expansion for $\psi^\eps$.  Expanding \eqref{eq:psi.xi} in powers of $\eps$ we obtain 
\begin{align}
\Oc(1):&&
\psi_0
	&:=	\frac{1}{(1-\rho^2)} \log \xi_0 , \label{eq:psi0.again} \\
\Oc(\eps^m):&&
\psi_m
	&:=\frac{1}{(1-\rho^2)}  \sum_{k=1}^m \frac{(-1)^{k-1}}{k} \xi_0^{-k} 
			\( \sum_{i \in I_{m,k}} \prod_{j=1}^k \xi_{i_j} \) , \label{eq:psik.again} 
\end{align}
where $I_{m,k}$ is defined in \eqref{eq:Imk}.  Note that $\eta_i(t,y) = \psi_i(t,y,0)$ for every $i \geq 0$.  With \eqref{eq:u=eta.phi} in mind, we now define our $m$th-order approximation of the indifference price.
\begin{definition}
Let Assumption \ref{ass:y} hold.  
Assume the coefficients $(\tfrac{1}{2}\lam^2)$, $(c - \rho \beta \lam)$ and $(\tfrac{1}{2}\beta^2)$ are $C^m(\Rb)$ and $\theta(\cdot,\nu)$ is at most exponentially growing.
Then the \emph{$m$th-order approximation of the indifference price} is given by
\begin{align}
\ub_m
	&=	\frac{1}{\gam \nu} \big( \etab_m - \psib_m \big) , &
\psib_m
	&:=	\sum_{i=0}^m \psi_i , &
\etab_m(t,y)
	&:=	\psib_m(t,y,0) , 	\label{eq:ubar.m}
\end{align}
where $\psi_0$ is given by \eqref{eq:psi0.again} and $\psi_i$ $(i \geq 1)$ is given by \eqref{eq:psik.again}.
\end{definition}


\subsection{Accuracy of the indifference price approximation}
In this section, we will establish error estimates for $\ub_m$, the $m$th-order approximation of the indifference price $u$.
To establish these estimates, we require the following assumptions:
\begin{assumption}
\label{assumption}
We assume there exists a constant $M>0$ such that the following holds: \\
(i) \emph{Uniform ellipticity}: $1/M \leq (\tfrac{1}{2} \beta^2) \leq M$.\\
(ii) \emph{Regularity and boundedness}: The coefficients $(\tfrac{1}{2}\lam^2)$, $(\tfrac{1}{2}\beta^2)$ and $(c - \rho \beta \lam)$ are $C_b^{m,1}(\Rb)$ with their norms $\norm{\cdot}_{C_b^{m,1}}$ bounded by $M$,
where $C_b^{m,1}(\Rb)$ and $\norm{\cdot}_{C_b^{m,1}}$ are defined in Assumption \ref{assumption.X}.\\
(iii) The terminal data $\theta$ satisfies $\theta(\cdot,\nu) \in C_b^{k-1,1}$ for some $0 \leq k \leq 2$.
\end{assumption}
\begin{theorem}
\label{thm:xi.bound}
Let Assumption \ref{assumption} hold.  Let $\xi$ be the solution of Cauchy problem \eqref{eq:xi.eps.pde} with $\eps=1$.  
Define
\begin{align}
\xib_m
	&:=	\sum_{i=0}^m \xi_i , &
\tau
	&:=	T-t ,
\end{align}
Then
\begin{align}
\xi_0
	&=	\Oc(1)  &
	&\text{and}&
\sup_y |\xi(t,y,\nu) - \xib_m(t,y,\nu)|
	&=	\Oc ( \tau^{\frac{m+k+1}{2}} )  &
	&\text{as $\tau \to 0$},  \label{eq:xi.accuracy}
\end{align}
where $\xi_i$ are as given Proposition \eqref{thm:xi} and $k$ is the constant that appears in Assumption \ref{assumption} (iii).
\end{theorem}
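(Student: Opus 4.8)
The first assertion is immediate from Proposition~\ref{thm:xi}: since $\xi_0(t)=\Pc_0'(t,T)\theta=\ee^{-\tau(1-\rho^2)(\frac12\lam^2)_0}\,\Pc_0(t,T)\theta$, the operator $\Pc_0(t,T)$ acts as a Gaussian convolution and $\theta(\cdot,\nu)$ is bounded by Assumption~\ref{assumption}(iii), so $\|\xi_0(t)\|_\infty\leq \ee^{C\tau}\|\theta(\cdot,\nu)\|_\infty=\Oc(1)$ as $\tau\to0$. For the error bound the plan is to derive a Duhamel recursion for $R_m:=\xi-\xib_m$ and to run an induction on $m$; the argument is of the same type as the one used for the linear pricing operator in \cite[Theorem~3.10]{lorig-pagliarani-pascucci-4} and \cite{lorig-pagliarani-pascucci-2}, the only new feature being the bounded zero-order term $-(1-\rho^2)(\tfrac12\lam^2)$ in $\Act'$, which enters only through the scalar factor $\ee^{-\tau(1-\rho^2)(\frac12\lam^2)_0}$ and does not affect any of the estimates.

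Subtracting \eqref{eq:xi0.pde} and \eqref{eq:xim.pde}, summed over $i=0,\dots,m$, from \eqref{eq:xi.eps.pde} with $\eps=1$, and using the reindexing $\sum_{i=1}^{m}\sum_{k=1}^{i}\Act_k'\xi_{i-k}=\sum_{k=1}^{m}\Act_k'\xib_{m-k}$, one finds
\begin{align}
0=(\d_t+\Act_0')R_m+\sum_{k=1}^{m}\Act_k'R_{m-k}+\Ec_m\,\xi ,\qquad R_m(T,\cdot,\nu)=0 ,
\end{align}
where $\Ec_m:=\Act'-\sum_{k=0}^{m}\Act_k'$ is the remainder operator, whose coefficients are the $m$-th order Taylor remainders about $\yb$ of the coefficients of $\Act'$. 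Applying Duhamel's principle with the semigroup $\Pc_0'(t,s)$ generated by $\Act_0'$ gives
\begin{align}
R_m(t)=\int_t^T \dd s\,\Pc_0'(t,s)\Big(\sum_{k=1}^{m}\Act_k'R_{m-k}(s)+\Ec_m\,\xi(s)\Big) ,
\end{align}
which expresses $R_m$ through $R_0,\dots,R_{m-1}$ and the exact solution $\xi$, with $R_0(t)=\int_t^T \dd s\,\Pc_0'(t,s)\Ec_0\,\xi(s)$ as the base case.

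Three ingredients then drive the induction. (i) Gaussian bounds on the kernel $\Gam_0'$ of $\Pc_0'$, together with the spatial-moment estimates $\int_\Rb|\Gam_0'(t,y;s,\eta)|\,|\eta-\yb|^{j}\,\dd\eta\lesssim(s-t)^{j/2}$ and the smoothing bounds $\|\d_y^{j}\Pc_0'(t,s)f\|_\infty\lesssim(s-t)^{-j/2}\|f\|_\infty$ (uniformly, once the supremum in $y$ is read as implicit in a Taylor expansion about $\yb$, i.e. over $y$ in a bounded region or with $\yb$ moving with $y$, since otherwise the polynomial growth of the $\xi_i$ makes $\sup_y|R_m|$ infinite). (ii) Interior parabolic regularity for the solution $\xi$ of the uniformly nondegenerate problem \eqref{eq:xi.eps.pde}: under Assumption~\ref{assumption}, classical estimates \cite{ladyzhenskaia1988linear} give $\xi$ bounded with $\|\d_y^{j}\xi(s,\cdot,\nu)\|_\infty\lesssim(T-s)^{-(j-k)^+/2}$, $k$ as in Assumption~\ref{assumption}(iii). (iii) The Taylor-remainder bound: the $C_b^{m,1}$ hypothesis yields, for each coefficient of $\Ec_m$, the pointwise control $|\eta-\yb|^{m+1}\wedge(1+|\eta-\yb|^{m})$. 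Feeding (i)--(iii) into the recursion, the source term $\int_t^T \dd s\,\Pc_0'(t,s)\Ec_m\,\xi(s)$ picks up $(s-t)^{(m+1)/2}$ from the $(m{+}1)$-st kernel moment, and the $s$-integral of $(T-s)^{-(2-k)^+/2}$ then produces exactly $\tau^{(m+1)/2}\cdot\tau^{\,1-(2-k)^+/2}=\tau^{(m+k+1)/2}$; the feedback terms $\int_t^T \dd s\,\Pc_0'(t,s)\Act_k'R_{m-k}(s)$, using the inductive hypothesis for $R_{m-k}$ carried on the full scale of norms $\|\d_y^{i}R_{m-k}(s)\|_\infty$, lose $k/2$ powers of time to $\Act_k'$ but recover them through the extra $s$-integration and the order-$k/2$ kernel moment, landing at order $\tau^{(m+k+1)/2}$ or higher.

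The step I expect to be the main obstacle is precisely this exponent bookkeeping. Because $\xi$ has only the limited regularity dictated by $k$, its high-order $y$-derivatives genuinely blow up as $s\to T$, and one must interleave those blow-ups with the $(s-t)^{-1/2}$ smoothing of $\Pc_0'$ so that \emph{every} term produced by iterating the recursion lands at the sharp power $\tau^{(m+k+1)/2}$ -- mere smallness is not enough, the rate must be exact. Making this rigorous requires setting the induction up on the whole family of weighted derivative estimates for the $R_j$ (not just their sup-norms) and carefully tracking the polynomial growth introduced each time a coefficient of $\Act_k'$ or $\Ec_m$ is commuted through $\Pc_0'$; this is exactly the technical core handled, for the operator $\Act$, in \cite[Theorem~3.10]{lorig-pagliarani-pascucci-4} and \cite{lorig-pagliarani-pascucci-2}, and the present statement follows by adapting those estimates to $\Act'$.
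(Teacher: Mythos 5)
The paper does not prove this theorem directly; its ``proof'' is a one-line citation to \cite[Theorem 3.10 and Proposition 6.22]{lorig-pagliarani-pascucci-4}. Your sketch correctly reconstructs the strategy of that cited argument --- the Duhamel recursion for $R_m=\xi-\xib_m$ with source $\sum_{k=1}^m\Act_k'R_{m-k}+\Ec_m\xi$, the Gaussian moment and smoothing estimates for $\Pc_0'$, the Taylor-remainder control of $\Ec_m$, and the induction whose exponent bookkeeping yields $\tau^{(m+k+1)/2}$ --- so you are following essentially the same route the paper relies on, merely written out rather than cited.
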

\begin{proof}
See \cite[Theorem 3.10 and Proposition 6.22]{lorig-pagliarani-pascucci-4}.
\end{proof}
\noindent
Using Theorem \ref{thm:xi.bound} we can establish the accuracy of $\ub_m$, the $m$th-order approximation of the indifference price $u$.
\begin{corollary}
\label{cor:u.bound}
Let Assumptions \ref{ass:y} and \ref{assumption} hold.  Then $\ub_m$, 
given by \eqref{eq:ubar.m}, satisfies
\begin{align}
\sup_y | u(t,y,\nu) - \ub_m(t,y,\nu) | 
	&=	\Oc(\tau^{\frac{m+k+1}{2}})  &
	&\text{as $\tau \to 0$}. \label{eq:u.accuracy}
\end{align}
\end{corollary}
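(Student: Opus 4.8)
The plan is to push the accuracy estimate for the \emph{linear} problem, Theorem~\ref{thm:xi.bound}, through the logarithmic change of variables~\eqref{eq:psi.xi}. From~\eqref{eq:ubar.m}, the identity $u=\frac{1}{\gam\nu}(\eta-\psi)$ in~\eqref{eq:u=eta.phi}, and the relations $\eta=\psi|_{\nu=0}$, $\etab_m=\psib_m|_{\nu=0}$, one has
\begin{align}
u-\ub_m
	&=	\frac{1}{\gam\nu}\Big((\eta-\etab_m)-(\psi-\psib_m)\Big),
\end{align}
so it suffices to bound $\sup_y|\psi-\psib_m|$ for the terminal datum $\theta(\cdot,\nu)$ together with the analogue for $\eta$. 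For $\eta$ the terminal datum is $\theta(\cdot,0)\equiv1$, which lies in the space $C_b^{j-1,1}$ of Assumption~\ref{assumption.X} for every $j$; Theorem~\ref{thm:xi.bound} applied with $j=2$ (and the Lipschitz bound below) then gives $\sup_y|\eta-\etab_m|=\Oc(\tau^{(m+3)/2})$, which for $k\le 2$ is absorbed into $\Oc(\tau^{(m+k+1)/2})$. Thus only the $\psi$ term needs genuine work.

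First I would establish that $\xi$ (the solution of~\eqref{eq:xi.eps.pde} at $\eps=1$) and the partial sums $\xib_m:=\sum_{i=0}^m\xi_i$ lie, uniformly in $y$ and for small $\tau$, in a fixed compact subinterval $[a,b]\subset(0,\infty)$: the two-sided bound on $\xi$ follows from uniform ellipticity (Assumption~\ref{assumption}(i)) and the maximum principle for the parabolic Cauchy problem~\eqref{eq:xi.eps.pde}, provided $\theta(\cdot,\nu)$ is bounded above and below by positive constants, and the corresponding bounds for $\xib_m$ then follow because $\sup_y|\xi-\xib_m|\to0$ by Theorem~\ref{thm:xi.bound}. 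On $[a,b]$ the logarithm is Lipschitz, so
\begin{align}
\sup_y\big|\tfrac{1}{1-\rho^2}\log\xi-\tfrac{1}{1-\rho^2}\log\xib_m\big|
	&\le	C\,\sup_y|\xi-\xib_m|
	=	\Oc\big(\tau^{(m+k+1)/2}\big)
\end{align}
by Theorem~\ref{thm:xi.bound}. It remains to compare $\frac{1}{1-\rho^2}\log\xib_m$ with $\psib_m=\sum_{i=0}^m\psi_i$.

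For that step I would use the explicit formula~\eqref{eq:psik.again}: writing $\xib_m=\xi_0(1+w)$ with $w:=\xi_0^{-1}\sum_{i=1}^m\xi_i$ (so $\sup_y|w|\to0$), the definition of $\psi_i$ is exactly the collection, inside the convergent series $\frac{1}{1-\rho^2}\log(1+w)=\frac{1}{1-\rho^2}\sum_{p\ge1}\frac{(-1)^{p-1}}{p}\xi_0^{-p}\big(\sum_{i=1}^m\xi_i\big)^p$, of all monomials $\xi_0^{-p}\prod_{j=1}^p\xi_{i_j}$ with $1\le i_j\le m$ and $i_1+\cdots+i_p=i$. Hence $\frac{1}{1-\rho^2}\log\xib_m-\psib_m$ is the genuine remainder collecting exactly those monomials with $i_1+\cdots+i_p\ge m+1$, which I would split into (i) the order-$m$ Taylor remainder of $z\mapsto\log(1+z)$ at $z=w$, of size $\Oc(|w|^{m+1})$, and (ii) a \emph{finite} sum of monomials arising from $w^p$ with $p\le m$. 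Using that $\sup_y|\xi_i|=\Oc(\tau^{(i+k)/2})$ for $i\ge1$ — itself read off from Theorem~\ref{thm:xi.bound} at orders $i-1$ and $i$ via $\sup_y|\xi_i|\le\sup_y|\xi-\xib_{i-1}|+\sup_y|\xi-\xib_i|$ — together with $\sup_y|w|=\Oc(\tau^{(1+k)/2})$, a short power count shows both (i) and (ii) are $\Oc(\tau^{(m+k+1)/2})$: for (i) one uses $(m+1)(1+k)\ge m+1+k$, and for (ii) one uses $p\ge1$, so that $\prod_{j}\xi_{i_j}=\Oc(\tau^{(\sum_j i_j+kp)/2})=\Oc(\tau^{(m+1+k)/2})$. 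Combining this with the second display and the $\eta$ estimate yields~\eqref{eq:u.accuracy}.

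The main obstacle is the uniform-in-$y$, $\tau$-independent two-sided bound $a\le\xi\le b$: this is what lets us replace the logarithm by a globally Lipschitz function and thereby transfer the linear estimate of Theorem~\ref{thm:xi.bound} without loss. It rests on uniform ellipticity together with a boundedness hypothesis on $\theta$ — equivalently on $\varphi$, via $\theta=\exp(-\gam\nu(1-\rho^2)\varphi)$ — very slightly stronger than what Assumption~\ref{assumption}(iii) literally provides (which controls $\theta$ but not its reciprocal). Once that bound is in place, everything else is the routine bookkeeping indicated above.
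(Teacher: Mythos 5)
Your proposal is correct and follows essentially the same route as the paper: both transfer the linear estimate of Theorem \ref{thm:xi.bound} through the logarithm in \eqref{eq:psi.xi} and then conclude via $u-\ub_m = \tfrac{1}{\gam\nu}\big((\eta-\etab_m)-(\psi-\psib_m)\big)$. The only difference is that you make explicit --- via the uniform positive lower bound on $\xi$ (so that $\log$ is Lipschitz) and the Taylor-remainder power count on the monomials $\xi_0^{-p}\prod_j \xi_{i_j}$ --- the step that the paper carries out as a formal expansion of $\log\xi^\eps$ in powers of $\eps$.
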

\begin{proof}
First, we note that equation \eqref{eq:xi.accuracy} implies
\begin{align}
\sup_y \xi_m(t,y,\nu)
	&=	\Oc( \tau^{\frac{m+k}{2}} ) , &
m
	&\geq 1 . \label{eq:xi.m.accuracy}
\end{align}
Next, we have
\begin{align}
\psi^\eps
	&=	\frac{1}{1-\rho^2} \log \xi^\eps &
	&		\text{(by \eqref{eq:psi.xi})} \\
	&=	\frac{1}{1-\rho^2} \log \( \sum_{i=0}^m \eps^n \xi_i + \Oc(\tau^{\frac{m+1+k}{2}}) \) &
	&		\text{(by \eqref{eq:xi.m.accuracy})} \\
	&=	\frac{1}{1-\rho^2} \xi_0 
			+ \sum_{n=1}^m \frac{\eps^n}{(1-\rho^2)}  \sum_{k=1}^n \frac{(-1)^{k-1}}{k} \xi_0^{-k} 
			\( \sum_{i \in I_{n,k}} \prod_{j=1}^k \xi_{i_j} \)
			+ \Oc(\tau^{\frac{m+1+k}{2}}) &
	&		\text{(expanding in powers of $\eps$)} \\
	&=	\sum_{n=0}^m \eps^n \psi_n + \Oc(\tau^{\frac{m+1+k}{2}}) . &
	&		\text{(by \eqref{eq:psi0.again} and \eqref{eq:psik.again})} \label{eq:new.result}
\end{align}
Therefore, we find
\begin{align}
\psi - \psib_m
	&=	\psi - \sum_{n=0}^m \psi_n &
	&		\text{(by \eqref{eq:ubar.m})} \\
	&=	\Oc(\tau^{\frac{m+1+k}{2}}) . &
	&		\text{(setting $\eps = 1$ in \eqref{eq:new.result})} \label{eq:newer.result}
\end{align}
Taking the absolute value of both sides and then taking a $\sup_y$, it follows from \eqref{eq:newer.result} that
\begin{align}
\sup_y |\psi(t,y,\nu)-\psib_m(t,y,\nu)|
	&=	\Oc(\tau^{\frac{m+k+1}{2}})  . \label{eq:psi.accuracy}
\end{align}
Finally, using \eqref{eq:u=eta.phi} and \eqref{eq:ubar.m},  as well as $\eta(t,y) = \psi(t,y,0)$ and $\etab_m(t,y) = \psib_m(t,y,0)$ we have (omitting the arguments $(t,y,\nu)$ for clarity)
\begin{align}
\sup_y | u - \ub_m |
	&=	\sup_y \Big| \frac{1}{\gam \nu} ( \eta - \psi ) - \frac{1}{\gam \nu}( \etab_m - \psib_m ) \Big|
	\leq	\frac{1}{|\gam \nu|} \Big( \sup_y |\eta - \etab_m| + \sup_y |\psi - \psib_m| \Big) ,
\end{align}
which, combined with \eqref{eq:psi.accuracy}, yields the claimed accuracy result \eqref{eq:u.accuracy}.
\end{proof}

\begin{remark}
\label{rmk:accuracy}
We wish to clear up a common point of confusion.
Theorem \ref{thm:xi.bound} and Corollary \ref{cor:u.bound} are asymptotic accuracy results, which provide information about how quickly $\xib_m$ and $\ub_m$ approach $\xi$ and $u$, respectively, as $\tau \to 0$.  In particular, a larger $m$ implies a faster rate of convergence since both
$|\psi-\psib_m|$ and $|u-\ub_m|$ 
are of order $\Oc(\tau^{\frac{m+k+1}{2}})$ as $\tau \to 0$.  Small-time results of the form given in Theorem \ref{thm:xi.bound} and Corollary \ref{cor:u.bound} are common in the asymptotic expansion literature; see, for example, 
\cite[Theorem 4.3]{henry2008analysis},
\cite[equation (63)]{hagan2005probability},
or the various accuracy results in \cite{gatherallocal}.
We emphasize that Theorem \ref{thm:xi.bound} and Corollary \ref{cor:u.bound} do \emph{not} imply that $|\xi - \xib_m|$ and $|u - \ub_m|$ blow up as $m \to \infty$ for $\tau \geq 1$.  Indeed, in \cite{lorig-pagliarani-pascucci-2}, various numerical tests show that the expansions described in Proposition \ref{thm:xi} provide a high degree of accuracy for options with maturities of up to 10 years.
\end{remark}

%
%

\section{Examples}
\label{sec:examples}
In this section we implement our indifference pricing and implied volatility
approximations in two examples.  First, we consider an call options written on a traded asset.  Then we consider a European-style option on a non-traded asset.  In both examples, we take $t=0$ and fix the expansion point of the Taylor series to be the initial point of the diffusion $(\xb,\yb)=(X_0,Y_0)=(x,y)$.

\subsection{Heston: implied volatilities}
\label{sec:heston}
In our first example, we consider a stochastic volatility model, which, under the physical measure $\Pb$, is modeled by the following SDE:
\begin{align}
\begin{aligned}
\dd X_t
	&=	\Big( \lam(X_t,Y_t)\sqrt{Y_t} - \tfrac{1}{2} Y_t \Big)  \dd t + \sqrt{Y_t} \dd B_t^X , \\
\dd Y_t
	&=	\Big( \kappa (\theta - Y_t) + \rho \del \lam(X_t,Y_t) \sqrt{Y_t} \Big) \dd t 
				+ \del  \sqrt{Y_t} \( \rho  \dd B_t^X + \sqrt{1-\rho^2}  \dd B_t^Y \) .
\end{aligned}
&&\text{(under $\Pb$)} \label{eq:heston.P}
\end{align}
Comparing \eqref{eq:heston.P} with \eqref{eq:physical}, we identify
\begin{align}
\mu(x,y)
	&=	\lam(x,y) \sqrt{y}, &
\sig(y)
	&=	\sqrt{y}, \\
c(x,y)
	&=	\kappa (\theta - y) + \rho \del \lam(x,y) \sqrt{y} ,	&
\beta(y)
	&=	\del  \sqrt{y} .
\end{align}
Note that we have parametrized the drift function $\mu(x,y)$ via the volatility function 
$\sig(y)=\sqrt{y}$ 
and the Sharpe ratio $\lam(x,y)$, which we have left unspecified.  We have also included in $c(x,y)$ a term 
$\rho \del \lam(x,y) \sqrt{y} = \rho \beta(y) \lam(x,y)$. 
We do this primarily for reasons of computational convenience as will become more clear below.
\par
Under the minimal martingale measure $\Pbt$, defined via the Girsanov change of measure \eqref{eq:girsanov}, the dynamics of $(X,Y)$ are described by the following SDE:
\begin{align}
\begin{aligned}
\dd X_t
	&=	- \tfrac{1}{2} Y_t  \dd t + \sqrt{Y_t} \dd \Bt_t^X , \\
\dd Y_t
	&=	\kappa(\theta - Y_t) \dd t + \del  \sqrt{Y_t} \( \rho  \dd \Bt_t^X + \sqrt{1-\rho^2}  \dd \Bt_t^Y \)
\end{aligned}
&&\text{(under $\Pbt$)} \label{eq:heston.Q}
\end{align}
Note that, dynamics \eqref{eq:heston.Q} correspond to the model of \cite{heston1993}.  
Had we not included the term 
$\rho \del \lam(x,y) \sqrt{y}$ 
within the function $c(x,y)$ then the $\Pbt$ dynamics of $(X,Y)$ would not have corresponded to Heston.
\subsubsection*{Implied volatility}
The functions that play a key role in the implied volatility expansion (Proposition \ref{thm:imp.vol}) are
\begin{align}
a &:= \tfrac{1}{2} \sig^2
	=	 \tfrac{1}{2} y, &
b	&:=	\tfrac{1}{2} \beta^2
	=	\tfrac{1}{2} \del^2 y, \\
f	&:=	c - \rho \beta \lam 
	=	\kappa(\theta - y) , &
g &:= \rho \sig \beta
	=	\rho \del y .
\end{align}
Due to the manner in which we specified the $\Pb$-dynamics of $(X,Y)$, the Sharpe ratio $\lam$ does not appear in $f$.
Thus, the effect of the Sharpe ratio $\lam$ on the second order approximation of implied volatility $\IIb_2$ is felt only through the term $\II_2^\Ind$ \eqref{eq:I2.indiff}.
\par
We see from equation \eqref{eq:I2.indiff} that the first term in $\II_2^\Ind$ is linear in $(\tfrac{1}{2}\lam^2)_{0,1}$.
Moreover, since 
$(\tfrac{1}{2}\sig^2)_{0,1}=\d_y (\tfrac{1}{2}\sig^2(y))=\tfrac{1}{2}>0$, 
the first term in $\II_2^\Ind$ has the sign of $-(\tfrac{1}{2}\lam^2)_{0,1}$.  We observe also, that the first term in $\II_2^\Ind$ is independent of $k$.  Thus, increasing $(\tfrac{1}{2}\lam^2)_{0,1}$ shifts the approximation $\IIb_2$ of the implied volatility surface downward (for both the buyer and the seller and for all strikes and maturities).  Decreasing $(\tfrac{1}{2}\lam^2)_{0,1}$ raises the $\IIb_2$ for all strikes and maturities.
\par
On the left-hand side of Figure \ref{fig:heston}, we plot the approximate implied volatilities $\bar{\II}_2$ of both the buyer and the seller.  
For comparison, we also plot the exact and second order approximation of implied volatility corresponding to the Heston price, which is defined as
\begin{align}
q(0,x,y)
	&:=	\Ebt_{0,x,y} (\ee^{X_T} - \ee^k)^+ ,
\end{align}
where the dynamics of $(X,Y)$ under $\Pbt$ are given by \eqref{eq:heston.Q}.
Exact implied volatilities are obtained by computing the exact prices in the Heston model and then inverting the Black-Scholes formula numerically.
Second order approximate implied volatilities for the Heston model are obtained by removing the nonlinear term from the second order approximation of the implied volatility corresponding to the indifference price: $\IIb_2 - \II_2^\Ind$.
\par
On the right-hand side of Figure \ref{fig:heston}, we plot $|\II_2^\Ind|$.  We assume in both plots that $(\tfrac{1}{2}\lam^2)_{0,1}=0$ since, as previously mentioned, this terms simply shifts the buyer and seller implied volatility curves vertically.  Under this assumption, $|\II_2^\Ind|$ is an approximation for one half the bid-ask spread.  We observe that the maximum of $|\II_2^\Ind|$ occurs at $k>x$ and increases with increasing maturity $T$.

\subsection{Reciprocal Heston model: indifference prices}
\label{sec:vol}
We consider now a second Stochastic volatility model $(X,Y)$, which, under the physical measure $\Pb$, is modeled by the following SDE:
\begin{align}
\begin{aligned}
\dd X_t
	&=	\( \mu - \tfrac{1}{2} Y_t \) \dd t + \sqrt{Y_t} \dd B_t^X , \\
\dd Y_t
	&=	\( a Y_t + \frac{2 (b^2 - a \kappa)}{\mu^2(1-\rho)^2}Y_t^2 \) \dd t 
			- \( \frac{2}{1-\rho^2} \)^{1/2} \frac{b}{\mu} Y_t^{3/2}\( \rho  \dd B_t^X + \sqrt{1-\rho^2}  \dd B_t^Y \) .
\end{aligned} 
&&\text{(under $\Pb$)}
\label{eq:XY}
\end{align}
The model is referred to as \emph{reciprocal Heston} since $Y$ is the reciprocal of a CIR process
\begin{align}
Y_t
	&=	\frac{\mu^2(1-\rho^2)}{2 R_t} , &
\dd R_t
	&=	a ( \kappa - R_t ) \dd t + b \sqrt{R_t} \( \rho \dd B_t^X + \sqrt{1-\rho^2}  \dd B_t^Y \) .
\end{align}
Here, $(a,b,\kappa)$ must satisfy the usual Feller condition: $2 a \kappa \geq b^2$.
Comparing \eqref{eq:XY} with \eqref{eq:SV}, we identify
\begin{align}
\mu(y)
	&=	\mu , &
\sig(y)
	&=	\sqrt{y}, \\
c(y)
	&=	a y + \frac{2 (b^2 - a \kappa)}{\mu^2(1-\rho)^2}y^2 , &
\beta(y)
	&=	- \( \frac{2}{1-\rho^2} \)^{1/2} \frac{b}{\mu} y^{3/2} .
\end{align}
\subsubsection*{Indifference prices}
We will compute indifference prices for a European-style claim whose payoff $\varphi(Y_T)$ depends only on the terminal value of the stochastic variance process $Y$.  Although derivatives on the terminal value of variance do not actively trade, this is a useful model in which to test the accuracy of our pricing approximation, since exact indifference prices have been computed in \cite{grasselli-hurd-2007}.  The $m$th-order approximate indifference price $\ub_m$ can be computed using Proposition \ref{thm:xi} and equations \eqref{eq:psi0.again}, \eqref{eq:psik.again} and \eqref{eq:ubar.m}.
\par
As pointed out by \cite{grasselli-hurd-2007}, unbounded payoffs results in an expected utility of negative infinity.  As such, we will focus on bounded payoffs.  In particular, we consider a payoff $\varphi$ that is the difference of call option payoffs:
\begin{align}
\varphi(y)
	&=	(y - k_1)^+ - (y - k_2)^+ , &
k_1
	&<	k_2 , \label{eq:payoff}
\end{align}
which are bounded by $(k_2 - k_1)$.  In Figures \ref{fig:VolCall} and \ref{fig:VolCall2} we plot the exact and approximate buyer's and seller's indifference prices, for series of strikes $k_1$ with $k_2$ fixed.  We also plot the zeroth-, first-, and second-order indifference price approximations.  The plots clearly show that the second order approximation of the indifference price $\ub_2$, nearly coincides with the exact indifference price $u$.

%
%

\section{Conclusion}
In this paper, under a general class of LSV dynamics, we derive an explicit approximation for the indifference price of European-style asset, whose payoff may depend on either a traded or non-traded asset.   For call options on a traded asset, we translate the price approximation into an explicit approximation of implied volatility.  For options on a non-traded asset, we derive rigorous error bounds for the indifference price approximation.  
Lastly, we implement our indifference price and implied volatility approximations in two examples.

\subsection*{Acknowledgments}
The author would like to thank Jean-Pierre Fouque, Tim Leung, Stefano Pagliarani, Andrea Pascucci, Ronnie Sircar and Stephan Sturm for a number of fruitful discussions.
Additionally, the author would like to extend his gratitude to two anonymous reviewers and one anonymous associate editor, whose comments helped improve both the quality and readability of this manuscript.

%
%


\appendix
\section{Proof of Proposition \ref{thm:phi}}
\label{sec:phi.proof}
In this appendix we compute $\eta_0$, $\eta_1$ and $\eta_2$.  The function $\eta_0$ satisfies ODE \eqref{eq:psi0.pde.2}.  The explicit solution, which can be easily checked, is
\begin{align}
\eta_0(t)
	&=	- (T-t) (\tfrac{1}{2}\lam^2)_0 .
\end{align}
Next, we compute $\eta_1$.  Omitting arguments $(x,y)$ for clarity, we have
\begin{align}
\eta_1(t)
	&=	\int_t^T \dd t_1  \Pc_0(t,t_1) H_1(t_1) &
	&		\text{(by \eqref{eq:psin.pde} and \eqref{eq:duhamel})} \\
	&=	\int_t^T \dd t_1  \Pc_0(t,t_1) \( \Act_1 \eta_0(t_1) - (\tfrac{1}{2}\lam^2)_1 \) 1 &
	&		\text{(by \eqref{eq:F1})} \label{eq:eta1.ref} \\
	&=	- \int_t^T \dd t_1  (\tfrac{1}{2}\lam^2)_1(\Xc(t,t_1),\Yc(t,t_1)) 1 . &
	&		\text{(by \eqref{eq:Pc.poly})} , \label{eq:phi1} 
\end{align}
which is the expression given in \eqref{eq:eta1.theorem}.
Note, in the second-to-last equality we have used $\Act_1 \eta_0 = 0$ since $\eta_0$ is independent of $(x,y)$. 
\par
Finally, we compute $\eta_2$.  Once again, omitting the arguments $(x,y)$ for clarity, we have
\begin{align}
&\eta_2(t)
	 =	\int_t^T \dd t_1  \Pc_0(t,t_1) H_2(t_1) &
	&		\text{(by \eqref{eq:psin.pde} and \eqref{eq:duhamel})} \\
	&=	\int_t^T \dd t_1  \Pc_0(t,t_1) 
			\Big( \Act_2 \eta_0(t_1)  + \Act_1 \eta_1(t_1) \Big) \\ &\quad 
			+ \int_t^T \dd t_1  \Pc_0(t,t_1) 
			\Big(	- (\tfrac{1}{2}\lam^2)_2 + (1-\rho^2) ( \tfrac{1}{2} \beta^2)_0 (\d_y \eta_1(t_1))^2 \Big) &
	&		\text{(by \eqref{eq:F2})} \\
	&=	- \int_t^T \dd t_1 \int_{t_1}^T \dd t_2  \Gc_1(t,t_1) \Pc_0(t,t_1) \Pc_0(t_1,t_2) (\tfrac{1}{2}\lam^2)_1 &
	&		\text{(by \eqref{eq:eta1.ref} and \eqref{eq:PA=GP})} \\ &\quad
			- \int_t^T \dd t_1  (\tfrac{1}{2}\lam^2)_2 (\Xc(t,t_1),\Yc(t,t_1)) 1
			+ (1-\rho^2) ( \tfrac{1}{2} \beta^2)_0 \int_t^T \dd t_1  (\tfrac{1}{2}\lam^2)_{0,1}^2 (T-t_1)^2 &
	&		\text{(by \eqref{eq:Pc.poly} and \eqref{eq:eta1})} \\ 
	&=	- \int_t^T \dd t_1 \int_{t_1}^T \dd t_2  \Gc_1(t,t_1) \Pc_0(t,t_2) (\tfrac{1}{2}\lam^2)_1 \\ &\quad
			- \int_t^T \dd t_1  (\tfrac{1}{2}\lam^2)_2 (\Xc(t,t_1),\Yc(t,t_1))  1
			+ \frac{1}{3}(T-t)^3 (1-\rho^2) ( \tfrac{1}{2} \beta^2)_0 (\tfrac{1}{2}\lam^2)_{0,1}^2 &
	&		\text{(semigroup property)} \\
	&=	- \int_t^T \dd t_1  \Gc_1(t,t_1) \int_{t_1}^T \dd t_2 (\tfrac{1}{2}\lam^2)_1(\Xc(t,t_2),\Yc(t,t_2)) 1 \\ &\quad
			- \int_t^T \dd t_1  (\tfrac{1}{2}\lam^2)_2 (\Xc(t,t_1),\Yc(t,t_1)) 1
			+ \frac{1}{3}(T-t)^3 (1-\rho^2) ( \tfrac{1}{2} \beta^2)_0 (\tfrac{1}{2}\lam^2)_{0,1}^2 , &
	&		\text{(by \eqref{eq:Pc.poly})} 
\end{align}
which is the expression given in \eqref{eq:eta2.theorem}.  This proves Proposition \ref{thm:phi}.


\section{Proof of Proposition \ref{thm:u}}
\label{sec:u.proof}
In this appendix we derive explicit expressions for $u_0$, $u_1$ and $u_2$.
As always, throughout this appendix we will suppress $(x,y)$-dependence, except where it is needed for clarity.
We begin with $u_0$, the unique classical solution of \eqref{eq:eta0.pde.2}.  Using \eqref{eq:duhamel}, we can immediately write 
\begin{align}
u_0(t)
	&=	\Pc_0(t,T) \varphi . \label{eq:eta0}
\end{align}
In particular, for call payoffs $\varphi(x) = (\ee^x - \ee^k)^+$ and put payoffs $\varphi(x) = (\ee^k - \ee^x)^+$, expression \eqref{eq:eta0} becomes $u_0(t) = u^\BS(t,\cdot;\sig_0)$, where $u^\BS$ is given in \eqref{eq:uBS}.  This is precisely the expression given in \eqref{eq:u0.theorem} for $u_0$.
\par
Next, we compute the function $u_1$.  We have
\begin{align}
u_1(t)
	&=	\int_t^T \dd t_1  \Pc_0(t,t_1) U_1(t_1) &
	&		\text{(by \eqref{eq:etan.pde} and \eqref{eq:duhamel})} \\
	&=	\int_t^T \dd t_1  \Pc_0(t,t_1) \Act_1 u_0(t_1) &
	&		\text{(by \eqref{eq:G1})}  \label{eq:u1.intermediate} \\
	&=	\int_t^T \dd t_1  \Gc_1(t,t_1) \Pc_0(t,t_1) \Pc_0(t_1,T) \varphi &
	&		\text{(by \eqref{eq:PA=GP} and \eqref{eq:eta0})}	\\
	&=	\Big( \int_t^T \dd t_1  \Gc_1(t,t_1) \Big) u_0(t) &
	&		\text{(by \eqref{eq:semigroup}) and \eqref{eq:eta0}} . \label{eq:u1}
\end{align}
which is the expression given for $u_1$ in \eqref{eq:u1.theorem}.
\par
Finally, for $u_2$, from \eqref{eq:etan.pde}, \eqref{eq:G2} and \eqref{eq:duhamel} we have
\begin{align}
u_2(t)
	&=	\int_t^T \dd t_1  \Pc_0(t,t_1) U_2(t_1) 
	=	u_{2,1}(t) + (1- \rho^2) (\tfrac{1}{2} \beta^2)_0  \Big(2  u_{2,2}(t) -  \gam \nu   u_{2,3}(t) \Big), \label{eq:u2=u21u22u23}
\end{align}
where we have defined
\begin{align}
u_{2,1}(t)
	&=	\int_t^T \dd t_1  \Pc_0(t,t_1) \Big( \Act_2 u_0(t_1) + \Act_1 u_1(t_1) \Big), \label{eq:u21} \\
u_{2,2}(t)
	&=	\int_t^T \dd t_1  \Pc_0(t,t_1) 
			(\d_y u_1(t_1) )(\d_y \eta_1(t_1)) , \label{eq:u22} \\
u_{2,3}(t)
	&= \int_t^T \dd t_1  \Pc_0(t,t_1) 
			\(\d_y u_1(t_1) \)^2 . \label{eq:u23}
\end{align}
We will analyze $u_{2,1}$, $u_{2,2}$ and $u_{2,3}$ individually starting with $u_{2,1}$. We have
\begin{align}
u_{2,1}(t)
	&=	\int_t^T \dd t_1  \Pc_0(t,t_1) \Act_2 u_0(t_1) \\ &\quad 
			+ \int_t^T \dd t_1 \int_{t_1}^T \dd t_2  \Pc_0(t,t_1) \Act_1 \Pc_0(t_1,t_2) \Act_1 u_0(t_2) &
	&		\text{(by \eqref{eq:u1.intermediate} and \eqref{eq:u21})} \\ 
	&=	\int_t^T \dd t_1  \Gc_2(t,t_1) \Pc_0(t,t_1) \Pc_0(t_1,T) \varphi \\ &\quad 
			+ \int_t^T \dd t_1 \int_{t_1}^T \dd t_2  \Gc_1(t,t_1) \Gc_1(t,t_2) \Pc_0(t,t_2) \Pc_0(t_2,T) \varphi &
	&		\text{(by \eqref{eq:PA=GP} and \eqref{eq:eta0})} \\
	&=	\Big( \int_t^T \dd t_1  \Gc_2(t,t_1) +
			\int_t^T \dd t_1 \int_{t_1}^T \dd t_2  \Gc_1(t,t_1) \Gc_1(t,t_2) \Big) u_0(t) &
	&		\text{(by \eqref{eq:eta0})} \label{eq:u21.explicit}
\end{align}
Next, we analyze $u_{2,2}$.  
From \eqref{eq:eta1}, a direct computation yields 
\begin{align}
\d_y \eta_1(t)
	&=	- (\tfrac{1}{2}\lam^2)_{0,1} (T-t) . \label{eq:dy.phi1}
\end{align}
Since the function $\d_y \eta_1$ in independent of $(x,y)$, we have 
\begin{align}
\Pc_0(t,t_1) (\d_y u_1(t_1) )(\d_y \eta_1(t_1)) = (\d_y \eta_1(t_1)) \Pc_0(t,t_1) (\d_y u_1(t_1) ) , \label{eq:Pphi=phiP}
\end{align}
and thus, we focus on computing $\Pc_0(t,t_1) (\d_y u_1(t_1) )$.  To this end, we observe that the semigroup operator commutes with constant coefficient differential operators
\begin{align}
\d_y^n \d_x^m \Pc_0(t,t_1) 
	&= \Pc_0(t,t_1) \d_y^n \d_x^m . \label{eq:dP=Pd} 
\end{align}
This is easily proved using integration by parts and symmetry properties of $\Gam_0$.  Using this commutation properly, we compute
\begin{align}
\Pc_0(t,t_1) \(\d_y u_1(t_1)\)
	&=	\d_y \Pc_0(t,t_1) u_1(t_1) & 
	&		\text{(by \eqref{eq:dP=Pd})} \\
	&=	\d_y \Pc_0(t,t_1) \int_{t_1}^T \dd t_2  \Pc_0(t_1,t_2) \Act_1 u_0(t_2) & 
	&		\text{(by \eqref{eq:u1.intermediate})} \\
	&=	\d_y \int_{t_1}^T \dd t_2   \Pc_0(t,t_2) \Act_1 u_0(t_2) &
	&		\text{(by \eqref{eq:semigroup})} \\
	&=	\d_y \int_{t_1}^T \dd t_2  \Gc_1(t,t_2) \Pc_0(t,t_2) \Pc(t_2,T) \varphi &
	&		\text{(by \eqref{eq:PA=GP} and \eqref{eq:eta0})} \\
	&=	\d_y \int_{t_1}^T \dd t_2   \Gc_1(t,t_2) u_0(t) . &
	&		\text{(by \eqref{eq:semigroup} and \eqref{eq:eta0})} \label{eq:P0dyu1}
\end{align}
Thus, we have 
\begin{align}
u_{2,2}(t)
	&=	\int_t^T \dd t_1  (\d_y \eta_1(t_1)) \Pc_0(t,t_1) (\d_y u_1(t_1) ) & 
	&		\text{(by \eqref{eq:u23} and \eqref{eq:Pphi=phiP})} \\
	&=	\int_t^T \dd t_1  (\d_y \eta_1(t_1)) \d_y \int_{t_1}^T \dd t_2   \Gc_1(t,t_2) u_0(t) &
	&		\text{(by \eqref{eq:P0dyu1})} \\
	&=	\( \int_t^T \dd t_1 \int_{t_1}^T \dd t_2  (\d_y \eta_1(t_1)) \cdot \d_y \Gc_1(t,t_2) \) u_0(t) . \label{eq:u22.explicit}
\end{align}
where the function $(\d_y \eta_1(t_1))$ is given by \eqref{eq:dy.phi1}.
\par
Lastly, we compute $u_{2,3}$.  First, using \eqref{eq:dy.G1} and \eqref{eq:u1}, a direct computation shows that 
\begin{align}
\d_y u_1(t_1)
	&=	(T-t_1) (\tfrac{1}{2}\sig^2)_{0,1} (\d_x^2 - \d_x ) u_0(t_1) .
\end{align}
Therefore, we have
\begin{align}
\Pc_0(t,t_1)  \( \d_y u_1(t_1) \)^2
	&=	(T-t_1)^2 (\tfrac{1}{2}\sig^2)_{0,1}^2 \Pc_0(t,t_1) \( (\d_x^2 - \d_x ) u_0(t_1) \)^2 . \label{eq:mid-step}
\end{align}
Next, using $u_0 = u^\BS(t,\cdot;\sig_0)$ and the explicit expression for the Black-Scholes price \eqref{eq:uBS}, we obtain
\begin{align}
\( (\d_x^2 - \d_x ) u_0(t_1) \)^2
	&=	\frac{1}{\sig_0^2 (T-t_1)}\ee^{2x} \phi^2( d_+(t_1,x;\sig_0) ) , &
\phi
	&=	\Phi' ,
\end{align}
where we have introduced $\phi$, the density of a standard normal random variable.
Hence, \eqref{eq:mid-step} becomes
\begin{align}
\Pc_0(t,t_1)  \( \d_y u_1(t_1) \)^2
	&=	(T-t_1) \frac{(\tfrac{1}{2}\sig^2)_{0,1}^2 }{\sig_0^2} \Pc_0(t,t_1) \ee^{2x} \phi^2(d_+(t_1,x;\sig_0) ) . \label{eq:P0.dy.u1.2}
\end{align}
Now, we compute
\begin{align}
\Pc_0(t,t_1) \ee^{2x} \phi^2(d_+(t_1,x;\sig_0) )
	&=	\int_\Rb \dd x_1 \frac{\exp\(2 x_1 - d_+^2(t_1,x_1;\sig_0) \)}{2\pi\sqrt{2 \pi \sig_0^2 (t_1-t)}}
			\exp\( \frac{-(x_1-x+\tfrac{1}{2}\sig_0^2(t_1-t))^2}{2 \sig_0^2 (t_1-t)}\) \\
	&=	\frac{1}{2\pi \sqrt{2 \pi \sig_0^2 (t_1-t)}} \int_\Rb \dd x_1  \exp \( -a x_1^2 + b x_1 + c \) ,
\end{align}
where $a$, $b$ and $c$ are given by
\begin{align}
a
	&=	\frac{1}{\sig_0^2} \( \frac{1}{(T-t_1)} + \frac{1}{2(t_1-t)} \), \\
b
	&=	\frac{1}{2}+\frac{1}{\sig_0^2}\(\frac{2 k}{T-t_1}+\frac{2 x}{2 (t_1-t)} \) , \\
c
	&=  -\frac{1}{\sig_0^2}\(\frac{\(k-\frac{1}{2} \sigma ^2 (T-t_1)\)^2}{T-t_1}+\frac{\(x-\frac{1}{2} \sig_0^2 (t_1-t)\)^2}{2 (t_1-t)}\) .
\end{align}
Using
\begin{align}
\int_\Rb \dd x \exp \( -a x^2 + b x + c \)
	&=	\frac{\sqrt{\pi }}{\sqrt{a}} \exp\( \frac{b^2}{4 a}+c \) , & 
a 
	&>	0 , \label{eq:abc}
\end{align}
we obtain
\begin{align}
\Pc_0(t,t_1) \ee^{2x} \phi^2(d_+(t_1,x;\sig_0) )
	&=	\frac{1}{2\pi}\sqrt{\frac{T-t_1}{T-t+t_1-t}}
			\exp \( 2 k-\frac{\left((k-x)+\frac{1}{2} \sig_0^2 (T-t)\right)^2}{\sig_0^2 (T-t+t_1-t)} \) \label{eq:P0.phi.2}
\end{align}
Finally, from \eqref{eq:u22} we have
\begin{align}
u_{2,3}(t)
	&=	\int_t^T \dd t_1  \Pc_0(t,t_1) 
			\(\d_y u_1(t_1) \)^2 \\
	&=	\frac{(\tfrac{1}{2}\sig^2)_{0,1}^2 }{\sig_0^2} \int_t^T \dd t_1  (T-t_1) \Pc_0(t,t_1) \ee^{2x} \phi^2(d_+(t_1,x;\sig_0) ) &
	&		\text{(by \eqref{eq:P0.dy.u1.2})} \\
	&=	\frac{(\tfrac{1}{2}\sig^2)_{0,1}^2 }{2\pi\sig_0^2} \int_t^T \dd t_1  \frac{(T-t_1)^{3/2}}{\sqrt{T-t+t_1-t}}
			\exp \( 2 k-\frac{\((k-x)+\frac{1}{2} \sig_0^2 (T-t)\)^2}{\sig_0^2 (T-t+t_1-t)} \) . &
	&		\text{(by \eqref{eq:P0.phi.2})} \label{eq:u23.explicit}
\end{align}
Combining expressions \eqref{eq:u2=u21u22u23} with \eqref{eq:u21.explicit}, \eqref{eq:u22.explicit} and \eqref{eq:u23.explicit} yields the expression given for $u_2$ in \eqref{eq:u2.theorem}.  This concludes the proof of Theorem \ref{thm:u}.


\section{Proof of Proposition \ref{thm:imp.vol}}
\label{sec:imp.vol.proof}
In this appendix, we establish the formulas provided in Proposition \ref{thm:imp.vol}.  We remind the reader that $(t,x,y,k,T)$ are fixed throughout and we write these arguments only when needed for clarity.
We start by observing from \eqref{eq:u1.theorem} and \eqref{eq:u2.0} that $u_1$ and $u_2^0$ can be written as
\begin{align}
u_1
	&=	\Big( \int_t^T \dd t_1  \Gct_1(t,t_1) \Big) (\d_x^2 - \d_x )u^\BS(\sig_0) , \label{eq:u1.again} \\
u_2^0
	&=	\Big( \int_t^T \dd t_1  \Gct_2(t,t_1) + \int_t^T \dd t_1 \int_{t_1}^T \dd t_2  \Gc_1(t,t_1)\Gct_1(t,t_2) \Big) (\d_x^2 - \d_x )
			u^\BS(\sig_0) , \label{eq:u2.again}
\end{align}
where the operator $\Gct_i(t,t_1)$ is given by
\begin{align}
\Gct_i(t,t_1)
	&=	(\tfrac{1}{2}\sig^2)_i(\Xc(t,t_1),\Yc(t,t_1)) .
\end{align}
Thus, it is clear that $u_1$ and $u_2^0$ are finite sums of the form
\begin{align}
u_1
	&=	\sum_n C_n \d_x^n (\d_x^2 - \d_x) u^\BS(\sig_0) , &
u_2^0
	&=	\sum_n C_n^0 \d_x^n (\d_x^2 - \d_x) u^\BS(\sig_0) , \label{eq:sums}
\end{align}
where the coefficients $C_n$ and $C_n^0$ depend on $(t,x,y)$ and can be computed from \eqref{eq:u1.again} and \eqref{eq:u2.again}, respectively.  Because of the number of terms involved, the coefficients are best computed using a computer algebra program such as Mathematica.  Next, using \eqref{eq:uBS}, a direct computation shows
\begin{align}
(\d_x^2 - \d_x ) u^\BS(\sig_0)
    &=  \frac{1}{\sig_0\sqrt{2\pi\tau}}\ee^{-z^2+k} , &
z
    &:= \frac{x-k-\frac{1}{2}\sig_0^2 \tau}{\sig_0\sqrt{2\tau}} , &
\tau
    &:= T-t . \label{eq:result}
\end{align}
Hence
\begin{align}
\frac{\d_x^n (\d_x^2 - \d_x ) u^{\BS}(\sig_0)}{(\d_x^2 - \d_x ) u^\BS(\sig_0)}
    &=  \ee^{z^2}\d_x^n \ee^{-z^2}
    =       \(\frac{1}{\sig_0\sqrt{2\tau}}\)^n\ee^{z^2}\d_z^n \ee^{-z^2}
    =       \(\frac{-1}{\sig_0\sqrt{2\tau}}\)^n h_n(z) , \label{eq:fraction}
\end{align}
where $h_n(z) := (-1)^n \ee^{z^2} \d_z^n \ee^{-z^2}$ is the $n$-th Hermite polynomial.  Combining \eqref{eq:sums} and \eqref{eq:fraction} with the classical Vega-Gamma relation 
$\d_\sig u^\BS(\sig_0) =	\sig_0 \tau (\d_x^2 - \d_x) u^\BS(\sig_0)$,
we see that
\begin{align}
\frac{u_1}{\d_\sig u^\BS(\sig_0)}
	&=	\frac{1}{\sig_0 \tau } \sum_n C_n \(\frac{-1}{\sig_0\sqrt{2\tau}}\)^n h_n(z) , &
\frac{u_2^0}{\d_\sig u^\BS(\sig_0)}
	&=	\frac{1}{\sig_0 \tau } \sum_n C_n^0 \(\frac{-1}{\sig_0\sqrt{2\tau}}\)^n h_n(z) . \label{eq:sums.2}
\end{align}
We also have from \eqref{eq:uBS}
\begin{align}
\frac{\d_\sig^2 u^\BS(\sig_0)}{\d_\sig^2 u^\BS(\sig_0)}
	&=	\frac{(k-x)^2}{\sig_0^3 \tau}-\frac{\sig_0 \tau }{4} . \label{eq:d2sig}
\end{align}
Combining \eqref{eq:I2} with \eqref{eq:sums.2} and \eqref{eq:d2sig} one obtains the expressions given in Proposition \ref{thm:imp.vol} for $\II_{1,0}$, $\II_{0,1}$, $\II_{2,0}$, $\II_{1,1}$ and $\II_{0,2}$.  What remains is to compute $u_2^\Ind / \d_\sig u^\BS(\sig_0)$.  First, we have 
\begin{align}
&\frac{1}{\d_\sig u^\BS(\sig_0)}\Big( \int_t^T \dd t_1 \int_{t_1}^T \dd t_2  (\d_y \eta_1(t_1)) \cdot \d_y \Gc_1(t,t_2) \Big) u^\BS(\sig_0) \\
	&=	\frac{- 1}{\tau \sig_0 (\d_x^2 - \d_x)u^\BS(\sig_0) }\Big( \int_t^T \dd t_1 \int_{t_1}^T \dd t_2  (\tfrac{1}{2}\lam^2)_{0,1} (T-t_1) ( \tfrac{1}{2} \sig^2)_{0,1} \Big) (\d_x^2 - \d_x )u^\BS(\sig_0) \\
	&=	\frac{- (\tfrac{1}{2}\lam^2)_{0,1} ( \tfrac{1}{2} \sig^2)_{0,1} \tau^2 }{3 \sig_0} .
\end{align}
Next, an explicit computation gives 
\begin{align}
\frac{ \ee^{2k} }{\d_\sig u^\BS(\sig_0)}
	&=	\frac{\ee^k\sqrt{2\pi}}{\sqrt{\tau}} \exp \( \frac{1}{2 \sig_0^2 \tau } \( (k-x) + \frac{1}{2}\sig_0^2 \tau  \)^2 \) .
\end{align}
Combining the above results with expression \eqref{eq:u2.indiff.2} for $u_2^\Ind$, we obtain expression \eqref{eq:I2.indiff}.

%
%

\begin{small}
\bibliographystyle{chicago}
\bibliography{Bibtex-Master-3.03}

\begin{thebibliography}{}

\bibitem[\protect\citeauthoryear{Alaton, Djehiche, and Stillberger}{Alaton
  et~al.}{2002}]{alaton2002modelling}
Alaton, P., B.~Djehiche, and D.~Stillberger (2002).
\newblock On modelling and pricing weather derivatives.
\newblock {\em Applied mathematical finance\/}~{\em 9\/}(1), 1--20.

\bibitem[\protect\citeauthoryear{Carmona}{Carmona}{2009}]{carmona2009indifference}
Carmona, R. (2009).
\newblock {\em Indifference pricing: theory and applications}.
\newblock Princeton University Press.

\bibitem[\protect\citeauthoryear{Carmona and Ludkovski}{Carmona and
  Ludkovski}{2006}]{carmona-ludkovski-2006}
Carmona, R. and M.~Ludkovski (2006).
\newblock Indifference pricing of commodity forwards with partial observations
  and basis risk.
\newblock {\em Working paper\/}.

\bibitem[\protect\citeauthoryear{Follmer and Schweizer}{Follmer and
  Schweizer}{1991}]{follmer1991hedging}
Follmer, H. and M.~Schweizer (1991).
\newblock Hedging of contingent claims.
\newblock {\em Applied stochastic analysis\/}~{\em 5}, 389.

\bibitem[\protect\citeauthoryear{Gatheral, Hsu, Laurence, Ouyang, and
  Wang}{Gatheral et~al.}{2012}]{gatherallocal}
Gatheral, J., E.~P. Hsu, P.~Laurence, C.~Ouyang, and T.-H. Wang (2012).
\newblock Asymptotics of implied volatility in local volatility models.
\newblock {\em Math. Finance\/}~{\em 22\/}(4), 591--620.

\bibitem[\protect\citeauthoryear{Geman}{Geman}{2009}]{geman2009commodities}
Geman, H. (2009).
\newblock {\em Commodities and commodity derivatives: modeling and pricing for
  agriculturals, metals and energy}.
\newblock John Wiley \& Sons.

\bibitem[\protect\citeauthoryear{Grasselli and Hurd}{Grasselli and
  Hurd}{2007}]{grasselli-hurd-2007}
Grasselli, M.~R. and T.~R. Hurd (2007).
\newblock Indifference pricing and hedging for volatility derivatives.
\newblock {\em Applied Mathematical Finance\/}~{\em 14\/}(4), 303--317.

\bibitem[\protect\citeauthoryear{Hagan, Kumar, Lesniewski, and Woodward}{Hagan
  et~al.}{2002}]{sabr}
Hagan, P., D.~Kumar, A.~Lesniewski, and D.~Woodward (2002).
\newblock Managing smile risk.
\newblock {\em Wilmott Magazine\/}~{\em 1000}, 84--108.

\bibitem[\protect\citeauthoryear{Hagan, Lesniewski, and Woodward}{Hagan
  et~al.}{2005}]{hagan2005probability}
Hagan, P., A.~Lesniewski, and D.~Woodward (2005).
\newblock Probability distribution in the sabr model of stochastic volatility.
\newblock {\em preprint\/}.

\bibitem[\protect\citeauthoryear{Henderson}{Henderson}{2002}]{henderson-2002}
Henderson, V. (2002).
\newblock Valuation of claims on nontraded assets using utility maximization.
\newblock {\em Mathematical Finance\/}~{\em 12\/}(4), 351--373.

\bibitem[\protect\citeauthoryear{Henderson}{Henderson}{2005}]{henderson2005impact}
Henderson, V. (2005).
\newblock The impact of the market portfolio on the valuation, incentives and
  optimality of executive stock options.
\newblock {\em Quantitative Finance\/}~{\em 5\/}(1), 35--47.

\bibitem[\protect\citeauthoryear{Henry-Labord{\`e}re}{Henry-Labord{\`e}re}{2008}]{henry2008analysis}
Henry-Labord{\`e}re, P. (2008).
\newblock {\em Analysis, geometry, and modeling in finance: Advanced methods in
  option pricing}.
\newblock CRC Press.

\bibitem[\protect\citeauthoryear{Heston}{Heston}{1993}]{heston1993}
Heston, S. (1993).
\newblock {A closed-form solution for options with stochastic volatility with
  applications to bond and currency options}.
\newblock {\em Rev. Financ. Stud.\/}~{\em 6\/}(2), 327--343.

\bibitem[\protect\citeauthoryear{Hodges and Neuberger}{Hodges and
  Neuberger}{1989}]{hodges-neuberger-1989}
Hodges, S. and A.~Neuberger (1989).
\newblock Optimal replication of contingent claims under transactions costs.
\newblock {\em Review of Futures Markets\/}~{\em 8}, 222--239.

\bibitem[\protect\citeauthoryear{Kumar}{Kumar}{2015}]{kumar2015effect}
Kumar, R. (2015).
\newblock Effect of volatility clustering on indifference pricing of options by
  convex risk measures.
\newblock {\em Applied Mathematical Finance\/}~{\em 22\/}(1), 63--82.

\bibitem[\protect\citeauthoryear{Ladyzhenskaia, Solonnikov, and
  Ural'tseva}{Ladyzhenskaia et~al.}{1988}]{ladyzhenskaia1988linear}
Ladyzhenskaia, O.~A., V.~A. Solonnikov, and N.~N. Ural'tseva (1988).
\newblock {\em Linear and quasi-linear equations of parabolic type}, Volume~23.
\newblock American Mathematical Society.

\bibitem[\protect\citeauthoryear{Leung and Ludkovski}{Leung and
  Ludkovski}{2012}]{leung-ludkovski-2012}
Leung, T. and M.~Ludkovski (2012).
\newblock Accounting for risk aversion in derivatives purchase timing.
\newblock {\em Mathematics and Financial Economics\/}~{\em 6\/}(4), 363--386.

\bibitem[\protect\citeauthoryear{Leung and Sircar}{Leung and
  Sircar}{2009}]{leung2009accounting}
Leung, T. and R.~Sircar (2009).
\newblock Accounting for risk aversion, vesting, job termination risk and
  multiple exercises in valuation of employee stock options.
\newblock {\em Mathematical Finance\/}~{\em 19\/}(1), 99--128.

\bibitem[\protect\citeauthoryear{Lorig}{Lorig}{2013}]{lorig-3}
Lorig, M. (2013).
\newblock The exact smile of certain local volatility models.
\newblock {\em Quantitative Finance\/}~{\em 13\/}(6), 897--905.

\bibitem[\protect\citeauthoryear{Lorig, Pagliarani, and Pascucci}{Lorig
  et~al.}{2015a}]{lorig-pagliarani-pascucci-4}
Lorig, M., S.~Pagliarani, and A.~Pascucci (2015a).
\newblock Analytical expansions for parabolic equations.
\newblock {\em {SIAM} Journal on Applied Mathematics\/}~{\em 75}, 468--491.

\bibitem[\protect\citeauthoryear{Lorig, Pagliarani, and Pascucci}{Lorig
  et~al.}{2015b}]{lorig-pagliarani-pascucci-2}
Lorig, M., S.~Pagliarani, and A.~Pascucci (2015b).
\newblock Explicit implied volatilities for multifactor local-stochastic
  volatility models.
\newblock {\em To appear: Mathematical Finance\/}.

\bibitem[\protect\citeauthoryear{Musiela and Zariphopoulou}{Musiela and
  Zariphopoulou}{2004}]{musiela2004example}
Musiela, M. and T.~Zariphopoulou (2004).
\newblock An example of indifference prices under exponential preferences.
\newblock {\em Finance and Stochastics\/}~{\em 8\/}(2), 229--239.

\bibitem[\protect\citeauthoryear{Pagliarani and Pascucci}{Pagliarani and
  Pascucci}{2012}]{pagliarani2011analytical}
Pagliarani, S. and A.~Pascucci (2012).
\newblock Analytical approximation of the transition density in a local
  volatility model.
\newblock {\em Cent. Eur. J. Math.\/}~{\em 10\/}(1), 250--270.

\bibitem[\protect\citeauthoryear{Sircar and Sturm}{Sircar and
  Sturm}{2012}]{sircar-sturm-2012}
Sircar, R. and S.~Sturm (2012).
\newblock From smile asymptotics to market risk measures.
\newblock {\em Mathematical Finance\/}.

\bibitem[\protect\citeauthoryear{Sircar and Zariphopoulou}{Sircar and
  Zariphopoulou}{2004}]{sircar2004bounds}
Sircar, R. and T.~Zariphopoulou (2004).
\newblock Bounds and asymptotic approximations for utility prices when
  volatility is random.
\newblock {\em SIAM Journal on Control and Optimization\/}~{\em 43\/}(4),
  1328--1353.

\bibitem[\protect\citeauthoryear{Zariphopoulou}{Zariphopoulou}{2001}]{zariphopoulou2001}
Zariphopoulou, T. (2001).
\newblock A solution approach to valuation with unhedgeable risks.
\newblock {\em Finance and Stochastics\/}~{\em 5\/}(1), 61--82.

\end{thebibliography}
\end{small}

%
%

\clearpage

\begin{SCfigure}
\centering
\caption{
We plot $\eps$ on $\chi^\eps(x) := \chi(\xb + \eps( x - \xb))$ as a function of $x$ with $\eps = 0$ (dotted line), $\eps = 1/4$ (dot-dashed line), $\eps = 1/2$ (dashed line) and $\eps = 1$ (solid line).
In this figure we take $\chi(x) = \arctan x + \pi/2$ and fix $\xb = 0$.
Note that $\chi^\eps(x)|_{\eps=0} = \chi(\xb)$ is a constant function and $\chi^\eps(x)|_{\eps=1} = \chi(x)$.
The smaller the value of $\eps$ the less the function $\chi^\eps(x)$ varies with $x$.
}
\includegraphics[width=0.475\textwidth]{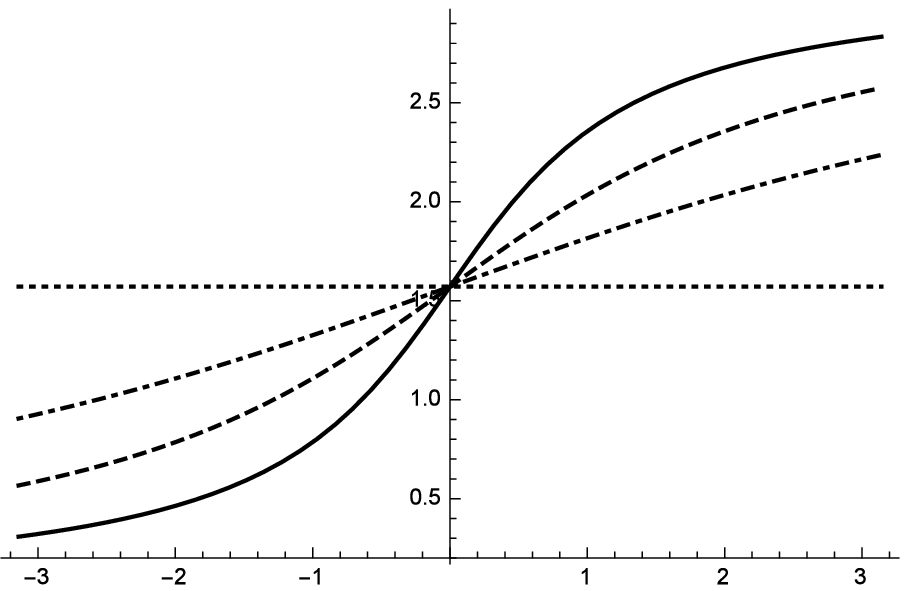}
\label{fig:eps}
\end{SCfigure}

\begin{figure}
\centering
\begin{tabular}{cc}
\includegraphics[width=0.475\textwidth]{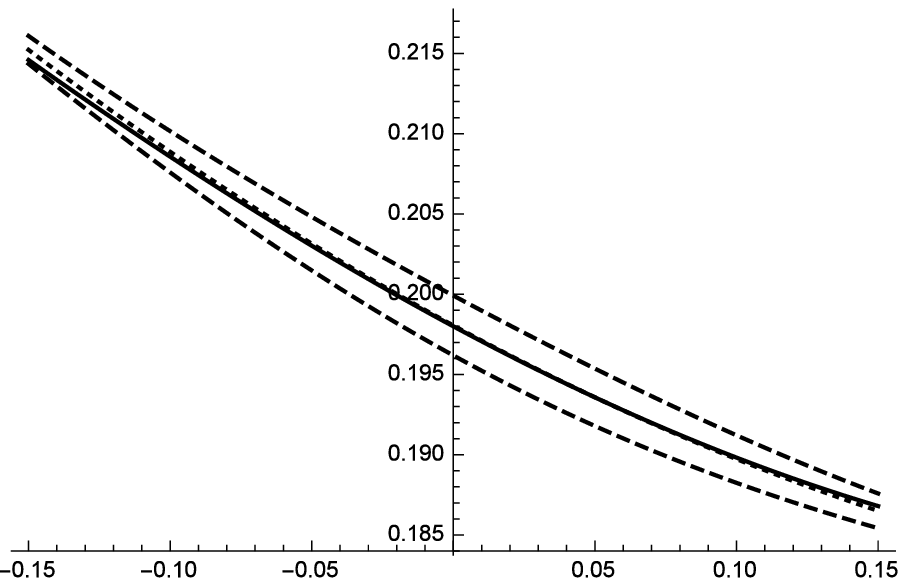}&
\includegraphics[width=0.475\textwidth]{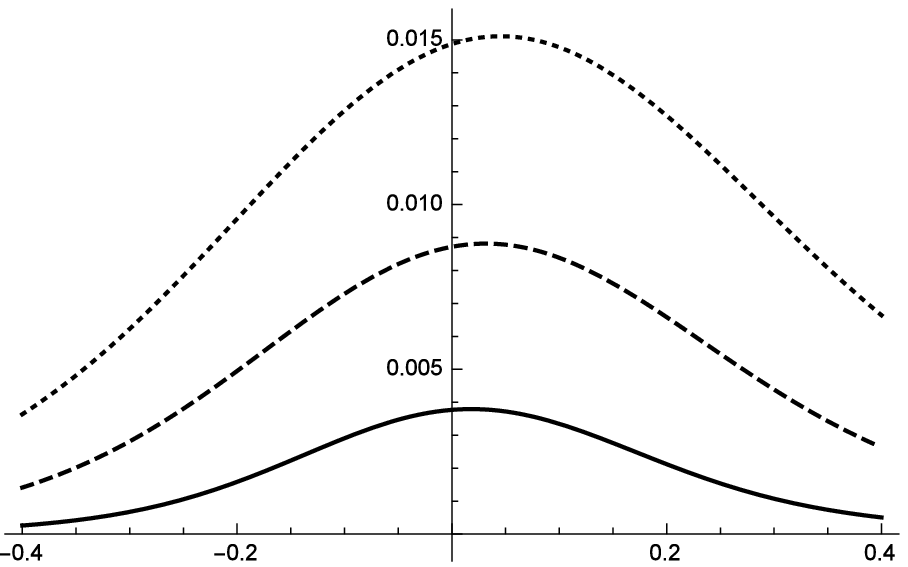}
\end{tabular}
\caption{
\emph{Left}:
Approximate implied volatilities $\bar{\II}_2$ generated by the buyer's indifference price (bottom dashed line) and seller's indifference price (top dashed line) are plotted as a function of $\log$ moneyness $L=k-x$ for the model considered in Section \ref{sec:heston}.  
For comparison, we also plot the exact implied volatility corresponding to the Heston model (solid line) and our second order approximation of this quantity (dotted line), which is given by $\IIb_2 - \II_2^\Ind$.  The maturity if fixed at $T=0.25$ years.
\emph{Right}: 
The absolute value $|\II_2^\Ind|$, an approximation of half the bid-ask-spread, is plotted as a function of $\log$ strike $k$ for three different maturities $T=\{0.3, 0.7, 1.0\}$, corresponding to the solid, dashed and dotted lines.
The following parameters are used in both plots:
$\del=0.2$, $\theta =0.04$, $\kappa =1.15$, $\rho =-0.4$, $y = \log \theta$, $x=0$,
$(\tfrac{1}{2}\lam^2)_{0,1}=0$, and
$\gam \nu = \pm 25$.
}
\label{fig:heston}
\end{figure}


\clearpage

\begin{figure}
\centering
\begin{tabular}{cc}
\includegraphics[width=0.475\textwidth]{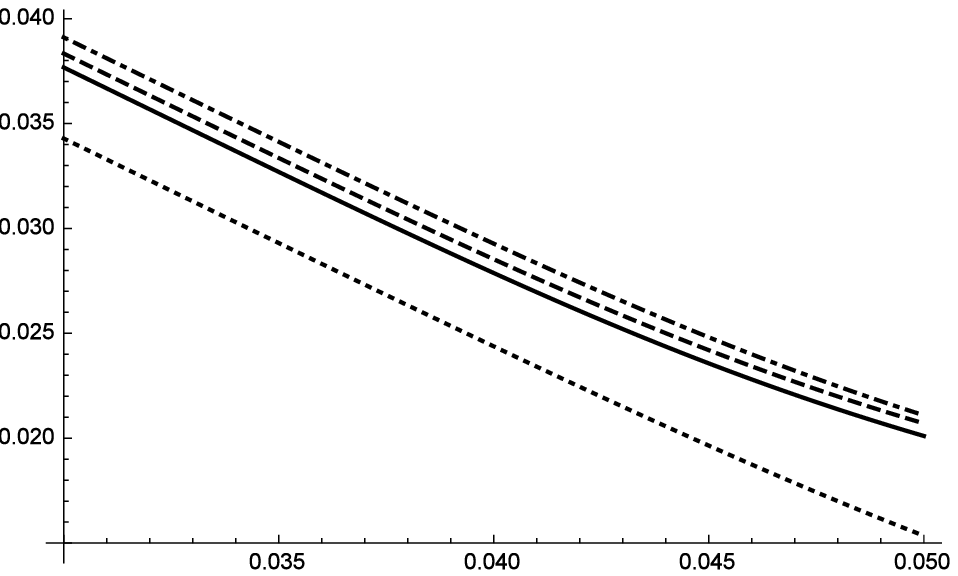}&
\includegraphics[width=0.475\textwidth]{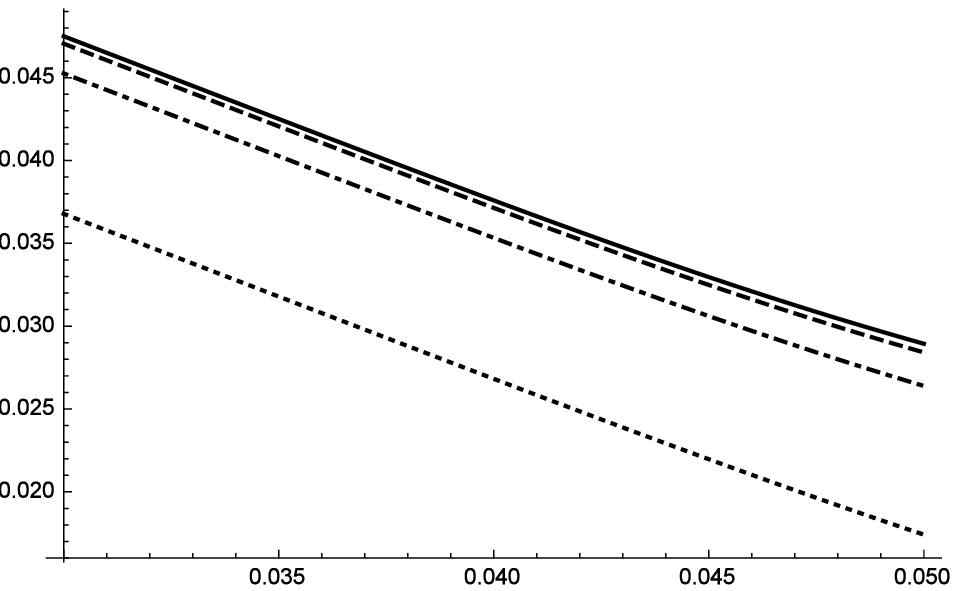}
\end{tabular}
\caption{
For the model described in Section \ref{sec:vol}, we consider an option on the terminal value of variance $Y_T$ with a payoff function $\varphi$ given by \eqref{eq:payoff}.  
\emph{Left}: 
We plot the buyer's indifference price $(\gam \nu = 40)$ as a function of $k_1$ with $k_2$ fixed.
\emph{Right}:
We plot the seller's indifference price $(\gam \nu = -25)$ as a function of $k_1$ with $k_2$ fixed.
In both plots, the dotted line corresponds to the zeroth order approximation  $\ub_0$,
the dash-dotted line corresponds to the first order approximation  $\ub_1$,
the dashed line corresponds to the second order approximation $\ub_2$,
and the solid line corresponds to the exact indifference price $u$.
The following parameters are used in both plots:
$a=5$, $b=0.04$, $\kappa=0.001$, $\mu=0.02$, $\rho=0.2$, $T=0.15$, $y=0.04$, $k_2=2.0$.
}
\label{fig:VolCall}
\end{figure}

\begin{SCfigure}
\centering
\caption{
For the model described in Section \ref{sec:vol}, we consider an option on the terminal value of variance $Y_T$ with a payoff function $\varphi$ given by \eqref{eq:payoff}.  
We plot the buyer's indifference price (bottom) and the seller's indifference price (top) as a function of $k_1$ with $k_2$ fixed.  The dashed lines corresponds to the second order approximation $\ub_2$,
and the solid lines corresponds to the exact indifference price $u$.
The following parameters are used throughout:
$a=5$, $b=0.04$, $\kappa=0.001$, $\mu=0.02$, $\rho=0.2$, $T=0.15$, $y=0.04$, $k_2=2.0$, $\gam \nu = \pm 25$.
}
\includegraphics[width=0.475\textwidth]{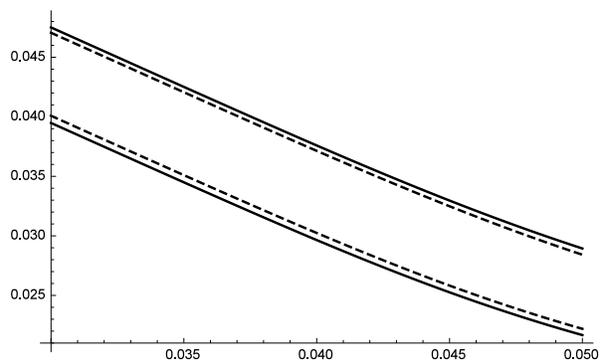}
\label{fig:VolCall2}
\end{SCfigure}

\end{document}